%% file: main.tex
\documentclass[12pt]{article}

\usepackage[a4paper, margin=1in]{geometry}
\usepackage{amsmath, amssymb, amsfonts,amsthm}
\usepackage{booktabs}
\usepackage{graphicx}
\usepackage{float}
\usepackage{caption}
\usepackage{setspace}
\usepackage{lmodern}
\usepackage{hyperref}
\usepackage{authblk}
\usepackage{multirow}
\usepackage{algorithm}
\usepackage{algpseudocode} 
\usepackage[round]{natbib}
\usepackage{array}
\usepackage{tikz}
\usepackage{subcaption}
\usetikzlibrary{arrows.meta, positioning}
\input{notation.tex}

\allowdisplaybreaks[4] 
\usepackage[final]{microtype}
\title{Identification and Inference for Structural Accelerated Failure Time Models via Instrument Interactions}
\author[1,2]{Qiushi Bu}
\author[1]{Wen Su}
\author[2]{Xinyu Zhang}
\author[3]{Xingqiu Zhao}
\author[4]{Zhonghua Liu$^*$}

\affil[1]{\small Department of Biostatistics, City University of Hong Kong, Hong Kong}
\affil[2]{\small Academy of Mathematics and Systems Science, Chinese Academy of Sciences, Beijing, China}
\affil[3]{\small Department of Applied Mathematics, The Hong Kong Polytechnic University, Hong Kong}
\affil[4]{\small Department of Biostatistics, Columbia University, New York, NY, USA}

\begin{document}
\maketitle
\begin{abstract}
    We study causal inference for time-to-event outcomes under right censoring in the presence of unmeasured confounding. Focusing on structural accelerated failure time  models, we develop an identification and inference framework that exploits interactions among instrumental variables. The proposed approach does not rely on classical instrumental variable validity and yields valid causal inference under both valid and invalid instruments, provided that the interaction-based identification condition holds. To accommodate right censoring, we construct a censoring-adjusted observed data moment function using an augmented inverse probability censoring weighting approach. The resulting moment function is Neyman orthogonal with respect to  nuisance functions and enjoys a double robustness property, enabling valid inference under flexible nuisance estimation. Estimation and inference are conducted using generalized empirical likelihood, which is well suited to settings with many potentially weak interaction-based moment conditions. We establish consistency, and asymptotic normality under many weak moment asymptotics, and develop diagnostic tools to assess interaction-based identification strength and overidentifying restrictions.
    Simulation studies demonstrate favorable finite sample performance across a range of censoring rates and instrument configurations. An application to UK Biobank data illustrates the practical relevance of the proposed method for causal survival analysis in large-scale observational studies.
    
\textbf{Keywords: } Causal inference; Censored data; Mendelian randomization; Instrumental variables; Structural accelerated failure time model; Weak identification

\end{abstract}

\section{Introduction}

Estimating causal effects on the time to disease onset is important in epidemiology. Many modern cohorts record outcomes as time-to-event data with long follow-up. The UK Biobank is a representative resource that links genome-wide genetic data with longitudinal outcomes from health records and registries \citep{Sudlow2015}. This type of data supports large-scale survival analyses, and recent work has analyzed hundreds of time-to-event outcomes constructed from linked electronic health records (EHR) in the UK Biobank \citep{dey2022}. These data sets are increasing, and they motivate methods for causal inference with censored survival outcomes.

A main difficulty in observational studies is unmeasured confounding, for example, when unobserved lifestyle factor affects both the exposure and the time to disease onset. Ignoring such unmeasured confounding generally leads to biased estimation and invalid statistical inference. Instrumental variables (IV) address this problem by using variables that affect the exposure but do not otherwise affect the outcome. Mendelian randomization applies this idea using genetic variants as instruments \citep{Katan1986,SmithEbrahim2003}. In biobank studies, many variants are available for a given exposure, which makes it natural to use many candidate instruments in one analysis.

Instrumental variable validity is often summarized by three requirements \cite{angrist2001}: (i) Relevance, which means that the instrument is associated with the exposure. (ii) Independence, which means that the instrument is independent of unmeasured confounders. (iii) Exclusion restriction, which means that the instrument affects the outcome only through the exposure. However, in genetic studies, these requirements may not be fully satisfied. Many variants have only weak associations with the exposure, which can yield weak identification and substantial weak IV bias \citep{burgess2011bias,stock2002survey}. Moreover, horizontal pleiotropy, defined as a genetic variant influencing the outcome through pathways other than the exposure of interest, will also lead to spurious findings \citep{Verbanck2018,solovieff2013pleiotropy,ye2024}.

Motivated by the prevalence of weak and invalid instruments in Mendelian randomization, a substantial body of work has been devoted to developing identification and inference methods that address these challenges, primarily in settings with uncensored outcomes. One line of work establishes identification by imposing restrictions on the set of instruments, assuming that a majority of candidate instruments are valid \citep{Han2008,Kang2016,Bowden2016}. A second line studies random coefficient formulations, showing that valid causal inference may still be possible even when all candidate instruments are invalid, provided that the effects of the instruments on the exposure are Neyman orthogonal to their direct effects on the outcome \citep{Kolesar2015,Bowden2015,Zhao2020,Ye2021}. A third line exploits heteroskedasticity in the exposure models to identify the causal effect in the presence of invalid instruments \citep{Lewbel2012,spiller2019,Tchetgen2021,sun2022,liu2023,ye2024}. More recently, additional identification methods have been obtained by constructing moment conditions from gene-by-gene interactions, which can deliver identification even when main effect instruments violate exclusion restrictions \citep{zhang2025}. These results show that identification can be possible with many weak and potentially invalid instruments, but most theory is developed for fully observed outcomes.

However, work for right-censored survival outcomes is much more limited. Existing instrumental variable methods for censored time-to-event data usually focus on handling censoring under valid instruments \citep{Nie2011,huling2019}. There is also work that considers censoring together with weak instruments, for example, in applications where the outcome is a censored duration \citep{Ertefaie2018}. These papers do not cover the setting of many weak and potentially invalid genetic instruments that are common in Mendelian randomization. \cite{Tchetgen2021} develop estimation methods for censored outcomes in the presence of potentially weak and invalid instruments. However, their identification strategies rely on heteroskedasticity-based conditions.

This paper develops a causal inference framework for right-censored outcomes with many weak and potentially invalid instruments through gene-gene interaction identification. We propose an observed data estimation strategy that combines interaction-based identification with a censoring-adjusted moment construction. Identification is achieved through interaction terms that remain informative under weak instrument strength and allow violations of standard instrumental variable assumptions. Estimation and inference are conducted using generalized empirical likelihood (GEL), which combines many moment functions and is suitable for weak identification settings \citep{smith1997,newey2004,Owen2001,newey2009,Hansen1982}.

In summary, our contributions are mainly threefold:
\begin{enumerate}
\item We develop \textbf{iGSAFT}, (Interaction-IV based Generalized Structural Accelerated Failure Time), a causal inference framework for right-censored survival outcomes with many weak and potentially invalid instrumental variables. The proposed model is identified through interaction-based conditions, under which the causal effect is identifiable regardless of whether the instruments are valid, provided that the interaction-based assumptions hold.

\item We develop an estimation and inference procedure based on a censoring-adjusted augmented inverse probability weighted moment function under the GSAFT model. The moment function is Neyman orthogonal and doubly robust, yielding consistent estimation if either the censoring model or the outcome-related augmentation component is correctly specified. These properties support valid inference with flexible nuisance estimation and allow overidentification through many moment conditions under right censoring.

\item We develop two model diagnosis tools, focusing on weak identification and overidentification, respectively. The weak identification test assesses whether the gene-by-gene interaction-based identification conditions are sufficiently informative in the observed data, while the overidentification test evaluates the compatibility of the imposed moment functions with the observed data. Together, these procedures offer data-driven checks of the key identification assumptions and improve the reliability and scalability of the proposed methodology in large observational and biobank studies.
\end{enumerate}

The remainder of the paper is organized as follows. Section \ref{sec2} introduces the problem setup and the structural accelerated failure time model. Section \ref{sec3} presents the identification strategy and the censoring-adjusted moment function. Section \ref{sec4} describes the estimation procedure. Section \ref{sec5} establishes the asymptotic properties of the proposed estimator. Section \ref{sec6} develops model diagnostic tools, including a weak identification test and an  overidentification test. Section \ref{sec7} reports simulation results. Section \ref{sec8} applies the proposed method to UK Biobank data to study the causal effect of body mass index on disease onset. Section \ref{sec9} concludes with a discussion. Technical proofs are provided in the Supplementary Material.

\section{Structural AFT model with possibly invalid instruments} \label{sec2}
The classical structural accelerated failure time (SAFT) model quantifies treatment effects on survival outcomes, which characterizes the causal effect through an acceleration factor acting on the counterfactual failure time \citep{robins1991correcting,robins1992estimation,vandebosch2005structural,hernan2005structural}. Let $T^*(d)$ denote the counterfactual failure time that would be observed if the exposure were set to level $d\in\mathbb{R}$. The baseline counterfactual failure time $T^*(0)$ is not always observed and represents the survival outcome that would have been realized in the absence of exposure. The causal effect of the exposure is defined through the time ratio $T^*(d)/T^*(0)$.

Under this model setup, the classical SAFT model \citep{robins1991correcting,robins1993analytic,hernan2005structural} can be written as 
$$
T^*(0)=T^*(d)\exp(-\beta_0 d),
$$
where $\beta_0$ belongs to a compact set $\mathcal{B}$. In randomized experiments, the exposure $d$ is independent of the baseline counterfactual failure time $T^*(0)$, so that $d$ is exogenous and $\beta_0$ is identifiable. In observational settings, however, $d$ may be correlated with $T^*(0)$, rendering $d$ endogenous and the causal effect no longer identifiable without additional assumptions. In this case, instrumental variables can be used to achieve identification in SAFT model \citep{robins1991correcting,robins1993analytic}. However, if the instrumental variables are invalid, for example, due to violations of the exclusion restriction or the independence condition, identification again fails. While identification under invalid instrumental variables have been studied in other causal frameworks \citep{Kang2016,zhang2025}, these approaches do not directly generalize to SAFT models, where methodological developments remain limited.

To accommodate possibly invalid instrumental variables, we generalize the SAFT model with valid instrumental variables in \citet{robins1993analytic}, and refer to the resulting model as the generalized structural accelerated failure time (GSAFT) model. Let $T^*(d,\z)$ denote the counterfactual failure time that would be observed if the exposure were set to level $d\in\mathbb{R}$ and the candidate instrumental variables were set to $\z=(z_1,\ldots,z_p)^\top\in\mathbb{R}^p$. The baseline counterfactual failure time $T^*(0,\mathbf{0})$ represents the survival outcome that would have been realized when both exposure and candidate instruments are set to zero. Then the GSAFT model is given by 
\be\label{eq:AFTT}
T^*(0,\mathbf{0})=T^*(d,\z)\exp\!\left\{-\beta_0 d-\sum_{k=1}^p\gamma_{k}z_k\right\},
\ee
\be
\bbE\{\log T^*(0,\mathbf{0})|\z\} = \sum_{k=1}^p\alpha_{k}z_k.
\ee
Here, the coefficients $\gamma_{k}$ capture direct effects of the candidate instruments on the failure time; instruments with $\gamma_{k}\neq 0$ violate the exclusion restriction and are therefore invalid. The coefficients $\alpha_{k}$ characterize the dependence between the candidate instruments and the baseline counterfactual failure time $T^*(0,\mathbf{0})$. When $\alpha_{k}\neq 0$, the instruments are associated with unobserved confounders, implying that the IV independence assumption is violated. These two sources of invalidity are jointly summarized by $\phi_{k}=\gamma_{k}+\alpha_{k}$. In what follows, $\phi_{k}$ serves as an index of instrument invalidity. A nonzero value of $\phi_{k}$ indicates that the candidate instrument violates at least one of the core IV assumptions considered here, while $\phi_{k}=0$ corresponds to the absence of these two violations, and the model will reduce to classical SAFT model. A special case occurs when $\gamma_k = -\alpha_k \neq 0$. In this situation, $\phi_k = 0$ by construction, but the instrument is still invalid, as both the exclusion restriction and the independence condition are violated despite the algebraic cancellation. Figure \ref{fig:iv} illustrates the relationship among the instrumental variables, exposure, outcome, and confounders.
\begin{figure}[h]
\centering

\begin{tikzpicture}[
 node distance=2cm and 3cm,
 every node/.style={font=\normalsize},
 arrow/.style={-{Latex[width=2mm,length=2mm]}, line width=0.8pt},
 redarrow/.style={arrow, red}
]

\node (Z) at (0,0) {$\Z$};
\node (D) at (3,0) {$D$};
\node (T) at (6,0) {$T$};
\node (U) at (3,-2) {$U$};

\draw[arrow] (Z) -- (D);
\draw[redarrow] (Z) -- node[below] {$\alpha$} (U);
\draw[arrow] (D) -- node[below] {$\beta_0$} (T);

\draw[redarrow]
  (Z) to[out=30,in=150] node[above] {$\gamma$}
  (T);

\draw[red, line width=0.8pt];

\draw[arrow] (U) -- (D);
\draw[arrow] (U) -- (T);

\end{tikzpicture}
\caption{Invalid instrumental variable $\Z$ due to violation of IV validity requirements (ii) independence ($\alpha\neq 0$) and (iii) exclusion restriction ($\gamma\neq 0$).}
\label{fig:iv}
\end{figure}
 
Under model \eqref{eq:AFTT}, $\exp(\beta_0)$ represents the causal effect associated with a one-unit increase in the exposure while holding the instruments fixed. A positive value of $\beta_0$ corresponds to prolonged survival, whereas a negative value corresponds to shortened survival. To link the counterfactual outcome model \eqref{eq:AFTT} to the realized failure time, we impose the consistency assumption \citep{rubin1974estimating}: if an individual actually receives treatment level $(D,\Z)=(d,\z)$, then the realized failure time equals the corresponding counterfactual outcome. Substituting the realized treatment levels into the structural relation among counterfactual outcomes yields a log-linear representation for the counterfactual failure time. Consequently, \eqref{eq:AFTT} can be written as 
\begin{equation}\label{modelT}
\log(T^*) = \beta_0 D + \sum_{k=1}^{p} \phi_{k} Z_k +  \varepsilon.
\end{equation}
The error term $\varepsilon$ is defined as
$\varepsilon=\log T^*(0,\mathbf{0})-E\{\log T^*(0,\mathbf{0})\mid \Z\},$
and thus satisfies $E(\varepsilon\mid \Z)=0$. If the IV independence assumption holds, that is, $\alpha=0$, then $\bbE\{\log T^*(0,\mathbf{0})|\z\}=\bbE\{\log T^*(0,\mathbf{0})\}$ and thus $\varepsilon$ does not depend on $Z$. Despite its regression-like form, the structural model in \eqref{modelT} generally involves an endogenous exposure $D$ in observational studies. 

Before specifying the structural equation for $D$, we first introduce notation for the interaction terms. The $k$th order interaction terms $\Int_k(\Z)$ and centered $k$th order interaction terms $\Int_k(\Z;\bzeta)$, which are both $C(p,k)=\binom{p}{k}$ dimensional vectors, capture higher order interactions among the instruments, defined as
$$
\Int_k\tp(\Z) = \left\{ \prod_{j \in \Xi} Z_j : \Xi \in S_k \right\}, \text{   and   } \Int_k\tp(\Z;\bzeta) = \left\{ \prod_{j \in \Xi} (Z_j-\zeta_j) : \Xi \in S_k \right\}, 
$$
where $S_k$ denotes the collection of all distinct size $k$ subsets of $\{1,\dots,p\}$, and $\bzeta=(\zeta_1,\dots,\zeta_p)\tp$ with $\zeta_k = \mathbb{E}(Z_k)$ is the mean of the $j$th instrument. For example, if there are three instruments, then $\Int_2(\Z) = (Z_1 Z_2, Z_1 Z_3, Z_2 Z_3)\tp$, and $\Int_3(\Z) = (Z_1Z_2Z_3) $. The centering $(Z_j-\zeta_j)$ yields interaction terms $\Int_k(\Z;\bzeta)$ whose components have mean zero and play an important role in establishing Neyman orthogonality properties exploited in subsequent identification and estimation. For a range of interaction orders $q_1 \leq q_2$, we further define $\Int_{q_1:q_2}(\Z) = \{ \Int_{q_1}\tp(\Z), \dots, \Int_{q_2}\tp(\Z) \}\tp,$ and $\Int_{q_1:q_2}(\Z;\bzeta) = \{ \Int_{q_1}\tp(\Z;\bzeta), \dots, \Int_{q_2}\tp(\Z;\bzeta) \}\tp,$
which collects all interaction terms from order $q_1$ to order $q_2$. Then we can write the structural equation for $D$ as
\begin{equation}\label{modelD}
D = \sum_{k=1}^{p} \theta_k Z_k + \sum_{k=2}^p \boldsymbol{\varphi}_{k}\tp \Int_k(\Z) + \nu,
\end{equation}
where $\nu$ is the stochastic error term satisfying $E(\nu\mid \Z)=0$, and $\boldsymbol{\varphi}_{k}$ is a coefficient vector with the same dimension as $\Int_k(\Z)$.

Finally, we account for right censoring, which commonly arises in survival studies due to administrative end of follow-up, loss to follow-up, or study termination before the occurrence of the event of interest. For notational simplicity, denote $T = \log(T^*)$ as the log-transformed failure time in the following. Let $C$ denote the log-transformed censoring time, representing the time at which observation of the event process ceases. The log-transformed observed outcome is then defined as
\[
Y=\min(T,C), \qquad \delta=I(T\le C),
\]
so that only partial information about the failure time $T$ is available when censoring occurs.

\section{Identification and moment function construction}\label{sec3}

\subsection{Identification under uncensored data} 
We first outline the identification argument in the setting without censoring. The presence of unobserved confounders prevents identification of the causal effect in the SAFT model through direct regression, since the exposure is no longer conditionally exogenous given the observed instruments. To illustrate, the standard instrumental variable method identifies the causal parameter through the following equation:
$$
\bbE[(\Z-\bzeta)(T-D\beta_0)]=\bbE[(\Z-\bzeta)(\sum_{k=1}^{p} \phi_{k} Z_k + \varepsilon)]=0
$$
supposing that $\phi_{k}=0$, i.e., the candidate IVs are valid. The remaining term involving $\varepsilon$ vanishes since $\bbE(\varepsilon \mid \Z)=0$ by definition. However, with invalid IVs $\phi_{k}\neq 0$, the equation no longer holds since $\bbE[\phi_{k}(Z_k-\zeta_k)(\Z_k)]\neq 0$,  and thus $\beta$ cannot be identified.

To overcome this difficulty, we achieve causal identification by exploiting interaction terms constructed from the candidate instrumental variables. The new moment condition is given by
\be \label{eq:z1z2}
\bbE[(Z_1-\zeta_1)(Z_2-\zeta_2)(T-D\beta_0)]=\bbE[(Z_1-\zeta_1)(Z_2-\zeta_2)(\sum_{k=1}^{p} \phi_{k} Z_k + \varepsilon)]=0.
\ee
Even when $\phi_{k}\neq 0$, if the instruments are mutually independent, the interaction terms cancel out the effect induced by $\phi_{k}$, so that the resulting moment condition has mean zero. Then a closed-form solution to $\beta_0$ can be obtained by 
$$
\beta_0 = \frac{\bbE[(Z_1-\zeta_1)(Z_2-\zeta_2)T]}{\bbE[(Z_1-\zeta_1)(Z_2-\zeta_2)D]}
$$
if the denominator is non-zero: $\bbE[(Z_1-\zeta_1)(Z_2-\zeta_2)D] \neq0$. Equation \eqref{eq:z1z2} presents only one particular interaction term. In fact, any linear combination of such interaction terms can be used to construct a valid moment function. Collecting moment functions of order 2 to $q$ yields an $r(p,q)$-dimensional moment function, $\bbE[\Int_{2:q}(\Z;\bzeta)(T-\beta_0 D)]$, where $q\leq p$ and $r(p,q) = \sum_{k=2}^q C(p,k)$ is the number of interaction terms.

The identifiability conditions discussed above can be summarized in the following assumption.
\begin{assumption}\label{condidentification}
(i).(IV independence) All $p$ candidate IVs are independent.\\
(ii).(Interaction relevance) $\bbE\left[ \Int_{2:q}(\Z;\bzeta_0)D\right]\neq \mathbf{0}$, that is, at least one constructed interaction term is correlated with the exposure.
\end{assumption}
Assumption \ref{condidentification} (i) requires that all $p$ candidate IVs are mutually independent. This condition can be satisfied in genetic studies, as one can perform linkage disequilibrium clumping to obtain an independent set of variants before the MR analysis \citep{purcell2007plink,Bowden2016,Zhao2020}. Assumption \ref{condidentification} (ii) requires that at least one constructed interaction term is correlated with the exposure. Let $m$ denote the total number of interaction terms. Then assumption \ref{condidentification} (ii) implies that our framework achieves $r(p,q)$-robust identification, i.e., identification is guaranteed as long as at least one component of the $r(p,q)$-dimensional vector $\Int_{2:q}(\Z;\bzeta_0)D$ is nonzero.

Figure \ref{fig:identification} illustrates the motivation for using interaction-based instruments. Identification can be achieved by constructing interaction-based instruments that avoid the violation of dependence and exclusion restriction.
\begin{figure}[h]
\centering

\begin{tikzpicture}[
 node distance=2cm and 3cm,
 every node/.style={font=\normalsize},
 arrow/.style={-{Latex[width=2mm,length=2mm]}, line width=0.8pt},
 dashedred/.style={arrow, red, dashed}
]

\node (Z) at (0,0) {$\Int_k(\Z)$};
\node (D) at (3,0) {$D$};
\node (T) at (6,0) {$T$};
\node (U) at (3,-2) {$U$};

\draw[arrow] (Z) -- (D);
\draw[arrow] (D) -- node[below] {$\beta_0$} (T);
\draw[dashedred] (Z) -- (U);
\draw[dashedred]
  (Z) to[out=30,in=150]
  (T);

\draw[red, line width=0.8pt]
  (3,1.05) node {\large $\times$};

\draw[red, line width=0.8pt]
  (1.5,-1) node {\large $\times$};

\draw[arrow] (U) -- (D);
\draw[arrow] (U) -- (T);

\end{tikzpicture}

\caption{Identification can be achieved by constructing interaction-based instruments from the candidate instruments.}
\label{fig:identification}
\end{figure}

The moment $\Int_{2:q}(\Z;\bzeta)(T-D\beta)$ can be used as a moment function and identify the causal effect. However, the moment is not Neyman orthogonal with respect to the nuisance $\bzeta$. A formal definition of Neyman orthogonality is provided in Section \ref{sec:neyman}. Following \cite{zhang2025}, we construct a new moment function that satisfies the Neyman orthogonality property. Define $V_k = \{1, \Int_{1:k-1}^\top(\Z)\}$ as the collection of all interaction terms of $\Z$ up to order $k-1$, including an intercept. To partial out the components of $Y$ and $D$ explained by $V_k$, we regress $Y$ and $D$ on $V_k$ and denote the resulting coefficient vectors by $\vartheta_{k}$ and $\omega_{k}$, respectively, and let $\bLambda_k = \{\bzeta, \vartheta_{k}, \omega_{k}\}$. Under this setup, a moment function of order $k$ is constructed to achieve identification in the presence of potential weak and invalid instrumental variables by leveraging interaction terms of the instrumental variables, given by
$$ 
g_k(\beta,\bLambda_k;\O_T)
=  \Int_k(\Z;\bzeta)\big\{(T-V_k\vartheta_{k}) - \beta(D-V_k\omega_{k}) \big\}.
$$ 
Then aggregate interaction terms of orders up to $q$ and stack the resulting moments into a $r(p,q)$-dimensional moment function: $g(\beta,\bLambda;\O_T) = (g_2(\beta,\bLambda_2;\O_T)\tp,\dots, g_q(\beta,\bLambda_q;\O_T)\tp)\tp$, where $\bLambda = \{\bzeta,\vartheta_{2},\omega_{2},\dots,\vartheta_{q},\omega_{q}\}$. Throughout the paper, we use a subscript $0$ to indicate the true value of a nuisance function. For example, let $\bLambda_0$ denote the true value of $\bLambda$.  With the independent IV assumption, $\beta$ can be uniquely determined.

\begin{lemma}\label{lemmaidentification}
Under the structural models \eqref{modelT}, \eqref{modelD} and Assumption \ref{condidentification}, $\beta_0$ is the unique solution to the moment function
$$
\bbE g(\beta,\bLambda_0;\O_T) = 0.
$$
\end{lemma}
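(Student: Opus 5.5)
Since $\bbE g(\beta,\bLambda_0;\O_T)$ is affine in $\beta$, the plan is to compute it block by block. For each order $k\in\{2,\dots,q\}$, I will show that
\begin{align*}
\bbE g_k(\beta,\bLambda_{k,0};\O_T)
&=\bbE\big[\Int_k(\Z;\bzeta_0)(T-\beta_0 D)\big]+(\beta_0-\beta)\,\bbE\big[\Int_k(\Z;\bzeta_0)D\big],
\end{align*}
with the first term equal to zero. Uniqueness then follows directly from Assumption~\ref{condidentification}(ii). The key is to handle the partialling-out terms $V_k\vartheta_{k,0}$ and $V_k\omega_{k,0}$. I will show they drop out because every component of $\Int_k(\Z;\bzeta_0)$ is orthogonal to every element of $V_k=\{1,\Int_{1:k-1}^\top(\Z)\}$.

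\emph{Step 1 (orthogonality).} Fix a subset $\Xi\in S_k$ and a monomial $\prod_{j\in A}Z_j$ with $|A|\le k-1$ (take $A=\emptyset$ for the intercept). Since $|A|<|\Xi|$, some index $j^*\in\Xi\setminus A$ exists. By the mutual independence in Assumption~\ref{condidentification}(i), $Z_{j^*}-\zeta_{0,j^*}$ is independent of all the other factors in $\prod_{j\in\Xi}(Z_j-\zeta_{0,j})\prod_{j\in A}Z_j$. It also has mean zero, so the expectation factorises and vanishes. Hence $\bbE[\Int_k(\Z;\bzeta_0)V_k]=0$, and the terms involving $\vartheta_{k,0}$ and $\omega_{k,0}$ contribute nothing. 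This does not require the regressions to be correctly specified, only that the population coefficients exist; that needs finite second moments, which I will assume implicitly.

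\emph{Step 2 (the structural residual).} By \eqref{modelT}, $T-\beta_0D=\sum_{l}\phi_l Z_l+\varepsilon$. For $k\ge2$, each $Z_l$ is a degree-one monomial lying in $V_k$, so Step~1 gives $\bbE[\Int_k(\Z;\bzeta_0)Z_l]=0$. For the error, iterated expectations and $\bbE(\varepsilon\mid\Z)=0$ give $\bbE[\Int_k(\Z;\bzeta_0)\varepsilon]=0$. Therefore $\bbE[\Int_k(\Z;\bzeta_0)(T-\beta_0D)]=0$. Note that \eqref{modelD} is not needed here; it only describes where the relevance in (ii) comes from.

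\emph{Step 3 (uniqueness).} Stacking over $k=2,\dots,q$ gives $\bbE g(\beta,\bLambda_0;\O_T)=(\beta_0-\beta)\,\bbE[\Int_{2:q}(\Z;\bzeta_0)D]$. This vanishes at $\beta=\beta_0$. For $\beta\neq\beta_0$ it is nonzero, because the vector $\bbE[\Int_{2:q}(\Z;\bzeta_0)D]$ is nonzero by Assumption~\ref{condidentification}(ii).

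The main obstacle is Step~1: one must check carefully that the centred order-$k$ products are orthogonal to \emph{all} lower-order uncentred monomials, including those sharing indices with $\Xi$. The "missing index" argument handles this, but it relies on full mutual independence rather than pairwise independence. I would also state explicitly that $T$ here is the uncensored log failure time, so no censoring adjustment enters the argument.
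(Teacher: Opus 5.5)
Your proof is correct and follows essentially the same route as the paper: the proof of Theorem~\ref{thmidentification} uses the same ``unmatched centered factor'' argument to get $\bbE[\Int_k(\Z;\bzeta_0)V_k]=0$, reduces the moment difference to $(\beta_1-\beta_2)\bbE[\Int_{2:q}(\Z;\bzeta_0)D]$, and invokes Assumption~\ref{condidentification}(ii), while the zero at $\beta_0$ is justified as in the discussion around \eqref{eq:z1z2}. Your version is somewhat more careful, since it treats the case where the lower-order monomial shares indices with $\Xi$ and proves $\bbE[\Int_k(\Z;\bzeta_0)(T-\beta_0 D)]=0$ explicitly, which the paper only asserts.
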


\subsection{AIPCW moment function under right censoring} 
In the presence of right censoring, the moment function $g_k(\beta,\bLambda;\O_T)$ cannot be evaluated since the true failure time $T$ is not always observed. To accommodate censoring, we construct an augmented inverse probability censoring weighting (AIPCW) moment function. AIPCW methods originate from the general semiparametric theory for censored data developed by \cite{robins1992recovery}. In contrast, the AFT literature has primarily relied on the Buckley-James estimator \citep{buckley1979linear}, rank-based estimating equations \citep{jin2003rank}, and inverse probability of censoring weighted estimating equations without augmentation \citep{fan2011fitting}. To the best of our knowledge, there appears to be limited work on explicit AIPCW moment function tailored to the AFT model. We generalize the estimator of \cite{robins1994estimation} and propose an AIPCW estimator under the GSAFT model:
\be\label{eq:psik} 
&&\psi_k(\beta,\boldeta_k;\O)= \frac{\delta}{G(Y| \Z,D)}g_k(\beta,\bLambda;\O)\n\\
&&-\frac{\delta}{G(Y| \Z,D)}\xi_{k}(\beta;Y,\Z,D)+\xi_{k}(\beta;-\infty,\Z,D)+\int_{-\infty}^{Y}\frac{d\,\xi_{k}(\beta;u,\Z,D)}{G(u| \Z,D)},  
\ee
where
\[
G(u| \Z,D)=P(C > u| \Z,D),\quad \xi_{k}(\beta;u,\Z,D)=\mathbb{E}\big[g_k(\beta,\bLambda;\O_T)| T\ge u, \Z,D\big],
\]
and $\boldeta_k=\{G,\xi_{k},\bLambda_k\}$ collects the nuisance functions. The lower limit of integration is $-\infty$, as we apply a logarithmic transformation to the time variable from $Y=\log(Y^*)$, resulting in support on $(-\infty, \infty)$.

The AIPCW-type moment function contains two parts.  
The first term is an inverse probability censoring weighting term:
\[
\frac{\delta}{G(Y| \Z,D)}g_k(\beta,\bLambda;\O).
\]
It reweighs the uncensored data moment $g_k(\beta,\bLambda;\O)$ by the inverse probability of being uncensored, thereby amplifying the contribution from uncensored observations $(\delta=1)$.  
The second part is the augmentation term
\[
-\frac{\delta}{G(Y| \Z,D)}\xi_{k}(\beta;Y,\Z,D)+\xi_{k}(\beta;-\infty,\Z,D)+\int_{-\infty}^{Y}\frac{d\,\xi_{k}(\beta;u,\Z,D)}{G(u| \Z,D)}.
\]
The resulting moment function is constructed to satisfy Neyman orthogonality and to enjoy a doubly robust property. The inverse probability term uses only uncensored observations and reweighs them by the inverse censoring survival probability $G$. As a result, its performance depends directly on the estimation accuracy of $G$, and the resulting moment function can be sensitive to perturbations in the nuisance parameters. The augmentation term is introduced to offset this sensitivity. In particular, $\xi_k$ uses the observed data and the censoring mechanism to impute the unobserved contribution of a censored individual to the moment function. This allows censored observations to contribute to the moment function rather than being discarded altogether. As a result, the augmentation component plays a crucial role in ensuring that the resulting moment function is Neyman orthogonal with respect to the nuisance parameters and enjoys a doubly robust property. These properties will be formally established and discussed in the next subsection.

Similar to the case without censoring, we stack the interaction of orders up to $q$ and obtain the final $r(p,q)$-dimensional AIPCW moment function as $\psi(\beta,\boldeta;\O) = (\psi_2(\beta,\boldeta_2;\O)\tp, \dots, \psi_q(\beta,\boldeta_q;\O)\tp)\tp$, with $\boldeta = (G,\boldeta_2,\xi_2,\dots,\boldeta_q,\xi_{q},\bLambda)$ collecting all nuisance parameters. Also, let $\boldeta_0$ denote the collection of all components of $\boldeta$ evaluated at their true values. We now show that the proposed AIPCW moment function \eqref{eq:psik} is uniquely satisfied at the true parameter $\beta_0$ with the following assumptions. Let $\tau <\infty$ be the maximum follow-up horizon after logarithmic transformation.
\begin{assumption}[Conditional non-informative censoring]\label{condindependent}
$T\perp C |\Z,D$.
\end{assumption} 

\begin{assumption}[Positivity of censoring distribution]\label{condposint}
There exists a constant $c>0$ such that
$$
P(C> \tau\mid \Z,D)\ge c.
$$
\end{assumption}
Assumption \ref{condindependent} states that, conditional on $\Z$ and $D$, the censoring mechanism is independent of the failure time, which is commonly used in the analysis of censored data \citep{cai2009,tang2020}. Assumption \ref{condposint} rules out vanishing censoring probabilities and prevents the denominator in the inverse-probability weights from approaching zero \citep{cui2023estimating}.

\begin{theorem} \label{thmidentification}
Under the structural models \eqref{modelT}, \eqref{modelD} and Assumption \ref{condidentification}-\ref{condposint}, $\beta_0$ is the unique solution to the moment function $$\mathbb{E}\psi(\beta,\boldeta_0;\O)=0.$$ 

\end{theorem}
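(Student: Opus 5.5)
The plan is to reduce Theorem \ref{thmidentification} to Lemma \ref{lemmaidentification}. The key step is the identity
\[
\mathbb{E}\{\psi_k(\beta,\boldeta_{k0};\O)\mid \Z,D\}=\mathbb{E}\{g_k(\beta,\bLambda_{k0};\O_T)\mid \Z,D\}
\]
for every $\beta\in\mathcal{B}$ and every $k=2,\dots,q$. Taking outer expectations and stacking then gives $\mathbb{E}\psi(\beta,\boldeta_0;\O)=\mathbb{E}g(\beta,\bLambda_0;\O_T)$ as functions of $\beta$. Lemma \ref{lemmaidentification} says the right-hand side vanishes only at $\beta_0$, so uniqueness follows at once. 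The work is therefore the standard AIPCW unbiasedness argument, carried out for the moment $g_k$, which is linear in $\beta$.

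To prove the conditional identity, I would condition on $(\Z,D)$ throughout. Here $g_k(\beta,\bLambda_{k0};\O_T)$ depends on the data only through $T$ and $(\Z,D)$. There are three steps.

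\begin{enumerate}
\item \emph{IPCW term.} Use $\delta=I(T\le C)$ and Assumption \ref{condindependent} to get $\mathbb{E}\{\delta\, h(T,\Z,D)/G(T\mid\Z,D)\mid T,\Z,D\}=h(T,\Z,D)$ for any integrable $h$. Assumption \ref{condposint} keeps the weights bounded on $(-\infty,\tau]$, which justifies this step and the integrability. This gives $\mathbb{E}[\delta g_k/G(Y\mid\Z,D)\mid\Z,D]=\mathbb{E}[g_k\mid\Z,D]$.
\item \emph{Weighted $\xi_k$ term.} The same argument gives $\mathbb{E}[\delta\,\xi_k(\beta;Y,\Z,D)/G(Y\mid \Z,D)\mid \Z,D]=\mathbb{E}[\xi_k(\beta;T,\Z,D)\mid\Z,D]$.
\item \emph{Integral term.} Write $\int_{-\infty}^{Y}d\xi_k(\beta;u)/G(u)=\int I(T\ge u)I(C\ge u)\,d\xi_k(\beta;u)/G(u)$. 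By Fubini and conditional independence, with the convention that $G(u)$ matches $P(C\ge u\mid\Z,D)$ up to null sets (or by treating continuous censoring), its conditional mean is $\int S_T(u\mid\Z,D)\,d\xi_k(\beta;u,\Z,D)$, where $S_T(u\mid\Z,D)=P(T\ge u\mid \Z,D)$.
\end{enumerate}

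It then remains to show
\[
\xi_k(-\infty)-\mathbb{E}[\xi_k(T)\mid\Z,D]+\int S_T(u)\,d\xi_k(u)=0 .
\]
Since $\mathbb{E}[\xi_k(T)\mid\Z,D]=\int\xi_k(t)\,dF_T(t)$, an integration by parts (Lebesgue--Stieltjes, with $S_T(-\infty)=1$ and $S_T(\infty)=0$) gives
\[
\int\xi_k(t)\,dF_T(t)=\xi_k(-\infty)+\int S_T(u)\,d\xi_k(u),
\]
which is exactly the needed cancellation. This holds for any bounded-variation choice of $\xi_k$, so the true $\xi_k$ is not essential for unbiasedness.

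The main obstacle is making the integration by parts and the boundary terms rigorous on $(-\infty,\infty)$ under truncation at $\tau$. In particular, positivity is only assumed up to $\tau$, so I would either interpret all integrals as truncated at $\tau$ with $T$ restricted to that horizon, or assume $P(T\le\tau\mid\Z,D)=1$. I also need integrability of $\xi_k$, which follows from finite second moments of $T$, $D$ and the bounded-moment $\Int_k(\Z;\bzeta)$. Left-continuity conventions ($C>u$ versus $C\ge u$, and $T\ge u$ in $\xi_k$) must be tracked so that jumps cancel exactly.

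With the conditional identity in hand, linearity in $\beta$ gives $\mathbb{E}\psi(\beta,\boldeta_0)=\mathbb{E}g(\beta_0,\bLambda_0)-(\beta-\beta_0)\mathbb{E}[\Int_{2:q}(\Z;\bzeta_0)(D-V\omega_0)]$. The first term is zero, and the coefficient vector is nonzero by Assumption \ref{condidentification}(ii), as in Lemma \ref{lemmaidentification}. Hence $\beta_0$ is the unique root.
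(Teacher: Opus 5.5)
Your proposal is correct and follows essentially the paper's route: you show that the IPCW term, the weighted $\xi_k$ term and the augmentation integral have the right conditional means, so that $\mathbb{E}\psi(\beta,\boldeta_0;\O)=\mathbb{E}g(\beta,\bLambda_0;\O_T)$ for every $\beta$, and you then get uniqueness from linearity in $\beta$, $\mathbb{E}[\Int_k(\Z;\bzeta_0)V_k]=0$ and interaction relevance, as in Lemma~\ref{lemmaidentification}. The only real difference is that the paper evaluates the integral term conditionally on $(T,\Z,D)$, so that $\int_{-\infty}^{T}d\xi_0(\beta;u,\Z,D)=\xi_0(\beta;T,\Z,D)-\xi_0(\beta;-\infty,\Z,D)$ cancels the $-\xi_0(\beta;T,\Z,D)$ term pathwise, which makes your integration by parts against $F_T$ unnecessary; your attention to the horizon $\tau$ and to the $C>u$ versus $C\ge u$ convention covers points the paper's proof passes over without comment.
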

Theorem \ref{thmidentification} shows that the causal effect $\beta_0$ is identified by the proposed AIPCW moment function under additional assumptions on censoring. Similar to Lemma \ref{lemmaidentification}, the identification condition remains unchanged. In particular, $\beta_0$ is identified if and only if at least one constructed interaction term is correlated with the exposure. One additional remark is that when censoring rate is zero, the censoring probability satisfies $G_0(Y \mid \Z,D)\equiv1$, and the additional censoring adjustments vanish and the moment function $\psi$ reduces to its uncensored form $g$, confirming coherence between the censored and uncensored settings.

\subsection{Neyman orthogonality and double robustness of  the proposed AIPCW moment function}\label{sec:neyman}

The AIPCW moment function $\psi(\beta,\boldeta;\O)$ defined in \eqref{eq:psik} depends on several unknown components, including the censoring-related functions $G_0$ and $\xi_0$ and $\bzeta,\vartheta, \omega$ collected in $\bLambda_0$. These components are not of primary interest for the target parameter $\beta_0$ and are treated as nuisance functions. Since these nuisance functions are typically estimated using nonparametric methods, their estimation errors may converge at rates slower than $n^{-1/2}$, such as the estimation of $G$ and $\xi$ \citep{he2013,tang2020}. To ensure valid inference for $\beta_0$ in such settings, it is essential that the moment function possesses robustness properties that mitigate the impact of nuisance estimation.

A central property of the proposed AIPCW moment is Neyman orthogonality \citep{neyman1959,neyman1979,Chernozhukov2018}. Let $\boldeta_0$ denote the true nuisance functions. The moment function $\psi(\beta,\boldeta;\O)$ is said to be Neyman orthogonal at $(\beta_0,\boldeta_0)$ if its expectation is locally insensitive to perturbations of the nuisance functions. Formally, Neyman orthogonality requires that
\[
\left.
\frac{\partial}{\partial t}
\bbE\!\left[
\psi\bigl(\beta_0,\boldeta_0+t(\boldeta-\boldeta_0);\O\bigr)
\right]
\right|_{t=0}
=0
\]
for all admissible directions $\boldeta-\boldeta_0$. This condition eliminates the first-order contribution of nuisance estimation error in the asymptotic expansion of the estimator for $\beta_0$ \citep{bickel1993efficient,Chernozhukov2018}, thereby preserving root-$n$ consistency and asymptotic normality even when nuisance functions are estimated at slower rates.

For the proposed construction, the augmentation term in \eqref{eq:psik} is chosen to enforce Neyman orthogonality, in the sense that the directional derivative of $\bbE\{\psi(\beta_0,\boldeta;\O)\}$ with respect to $\boldeta$ equals zero at $\boldeta=\boldeta_0$. This design feature distinguishes our moment construction from AIPCW approaches that primarily target censoring adjustment \citep{tang2020}.

In addition to Neyman orthogonality, the proposed moment function is doubly robust \citep{bang2005doubly}. Specifically, we consider the following two specifications for the nuisance functions:
$$
\mathcal{M}_1:\{\xi,\bLambda \textrm{ are correctly specified}\}, \quad\quad \mathcal{M}_2:\{G,\bLambda \textrm{ are correctly specified}\}, 
$$
The proposed moment function $\psi(\beta_0,\boldeta;\O)$ has zero expectation under the union model $\mathcal{M}_1\cup\mathcal{M}_2$, that is the expectation of $\psi$ equals zero if either the exposure-related component $\xi_0$ or the conditional censoring distribution $G_0$ is correctly specified, but not necessarily both. As a result, identification and consistency of $\beta_0$ do not require simultaneous correctness of all nuisance functions. This double robustness property provides additional protection against model misspecification, which is particularly valuable in nuisance estimation.

The following theorem formalizes the Neyman orthogonality and double robustness properties of the proposed AIPCW moment function.

\begin{theorem}\label{thmneyman}
Under Assumptions~\ref{condindependent}--\ref{condposint},
\begin{enumerate}
\item[(i)] The moment function $\psi(\beta_0,\boldeta;\O)$ is Neyman orthogonal with respect to $\boldeta$ at $(\beta_0,\boldeta_0)$.
\item[(ii)] $\bbE\{\psi(\beta_0,\boldeta;\O)\}=0$ under $\mathcal{M}_1\cup\mathcal{M}_2$.
\end{enumerate}
\end{theorem}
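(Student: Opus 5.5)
The plan is to fix the covariates $(\Z,D)$ and write $\psi_k$ as the IPCW term plus a censoring-martingale integral; both parts of Theorem~\ref{thmneyman} then follow from conditional expectations given $(\Z,D)$. Throughout we use Assumption~\ref{condindependent} (so that $T\perp C\mid\Z,D$) and Assumption~\ref{condposint} (so that every weight $1/G$ is bounded on $(-\infty,\tau]$).

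\textbf{Step 1 (martingale representation).} Let $N^C(u)=I(Y\le u,\delta=0)$ and $\Lambda^C$ be the conditional cumulative hazard of $C$ given $(\Z,D)$. Write $dM^C(u)=dN^C(u)-I(Y\ge u)\,d\Lambda^C(u)$. The standard identity
\[
\frac{\delta}{G(Y\mid\Z,D)}=1-\int \frac{dM^C(u)}{G(u\mid\Z,D)}
\]
holds at the true $G_0$. Integrating by parts, we would rewrite \eqref{eq:psik} at the true $G_0$ as
\[
\psi_k=g_k(\beta,\bLambda;\O_T)+\int \frac{\xi_k(\beta;u,\Z,D)-g_k}{G_0(u\mid\Z,D)}\,dM^C(u).
\]
Here the $g_k$ inside the integral is replaced by its conditional mean given $T\ge u$. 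Equivalently, we would check directly that the last three terms of \eqref{eq:psik} have the Robins--Rotnitzky form
\[
\int \frac{\xi_k(\beta;u,\Z,D)}{G(u\mid\Z,D)}\,dM^C_G(u).
\]
This uses $\xi_k(\beta;-\infty,\cdot)=E[g_k\mid\Z,D]$ together with an integration by parts of $\int_{-\infty}^{Y} d\xi_k/G$. This algebraic rearrangement is the main obstacle. The integration by parts must be done carefully for general (possibly discontinuous) $G$ and $\xi$, and with the lower limit $-\infty$. I would first prove the representation for continuous $G$ and $\xi$, then handle jumps with left limits $G(u-)$.

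\textbf{Step 2 (double robustness, part (ii)).} Condition on $(\Z,D)$ and set $\bLambda=\bLambda_0$, which is correct under both $\mathcal M_1$ and $\mathcal M_2$.

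Under $\mathcal M_2$ ($G=G_0$, arbitrary $\xi$), the augmentation is an integral of a predictable process against the $\{\mathcal F_u\}$-martingale $M^C$. Its conditional mean is therefore zero. For the IPCW term, $E[\delta g_k/G_0(Y)\mid T,\Z,D]=g_k(\beta_0;T)$ because $P(C\ge T\mid T,\Z,D)=G_0(T)$. Hence $E\psi_k=E g_k(\beta_0,\bLambda_0;\O_T)=0$ by Lemma~\ref{lemmaidentification}.

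Under $\mathcal M_1$ ($\xi=\xi_0$, arbitrary $G$), write $\psi_k-g_k$ as
\[
\int\{\xi_0(u)-g_k\}\,d\{\text{something involving }G\}.
\]
Using $E[g_k\mid T\ge u,\Z,D]=\xi_0(u)$ and independent censoring, we would then show that the conditional expectation vanishes for any $G$. This is the usual argument that the outcome-regression part makes the correction mean zero: it amounts to computing $E[\delta\{g_k-\xi_0(Y)\}/G(Y)\mid\Z,D]$ and $E[\int^{Y}d\xi_0/G]$ explicitly, and seeing that they cancel.

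\textbf{Step 3 (Neyman orthogonality, part (i)).} We perturb $\boldeta_0+t(\boldeta-\boldeta_0)$ component by component.

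\emph{Direction $\xi$.} $E\psi$ is linear in $\xi$, and by the $\mathcal M_2$ argument the derivative is an expectation of a $dM^C$ integral, which is zero.

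\emph{Direction $G$.} By the $\mathcal M_1$ argument, $E\psi(\beta_0,G_t,\xi_0,\bLambda_0)=0$ for all $t$, so the derivative is zero.

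\emph{Direction $\bLambda$.} Once censoring is handled, the derivative reduces to the uncensored derivatives of $E g_k$. These are:
\begin{itemize}
\item in $\vartheta_k,\omega_k$: $E[\Int_k(\Z;\bzeta_0)V_k]$, which is $0$ because the centered products are mean zero and independent of lower-order terms under Assumption~\ref{condidentification}(i);
\item in $\zeta_j$: $E[\partial\Int_k\cdot\{(T-V_k\vartheta_k)-\beta_0(D-V_k\omega_k)\}]$, which is $0$ because the partialled-out residual is orthogonal to $V_k$, and the derivative of the centered product lies in the span of $V_k$.
\end{itemize}
This last step follows \citet{zhang2025}.
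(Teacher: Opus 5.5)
Your argument is correct, but it is organized differently from the paper's. The paper never introduces counting processes. It conditions each of the three pieces of \eqref{eq:psik} on $(T,\Z,D)$ using $\bbE\{I(C\ge u)\mid T,\Z,D\}=G_0(u\mid\Z,D)$, which settles the $\xi$-direction and $\mathcal{M}_2$ directly. For the $G$-direction and for $\mathcal{M}_1$ it then conditions on $(\Z,D)$ and closes with an integration by parts built on $d\{\xi_0(u)S(u\mid\Z,D)\}=-g\,dF(u\mid\Z,D)$.

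You differ in three ways.
\begin{itemize}
\item You identify the augmentation with $\int\xi_k(u)G(u)^{-1}\,dM^C_G(u)$. Then $\mathcal{M}_2$ and the $\xi$-direction become the mean-zero property of a predictable martingale integral.
\item You get orthogonality in the $\xi$- and $G$-directions by differentiating maps $t\mapsto\bbE\psi(\beta_0,\boldeta_t;\O)$ that vanish identically by the two halves of (ii). This is shorter and needs no justification for differentiating under the expectation.
\item For $\bzeta$ you use the normal equations $\bbE[V_k\tp\{(T-V_k\vartheta_{k,0})-\beta_0(D-V_k\omega_{k,0})\}]=\mathbf{0}$ together with $\partial\Int_k(\Z;\bzeta)/\partial\zeta_l\in\mathrm{span}(V_k)$. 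The paper instead expands $\bbE(T\mid\Z)$ and $\bbE(D\mid\Z)$ in centered higher-order terms and invokes mutual independence. Your version needs no independence in this direction, and it isolates the only property of $(\vartheta_{k,0},\omega_{k,0})$ that is used: they are the population least-squares coefficients of $T$ and $D$ on $V_k$.
\end{itemize}
What the paper's elementary route buys is that it needs no filtration or predictability set-up. With $T$ continuous, it is also insensitive to jumps in a working $G$ such as a local Kaplan--Meier step function. So the left-limit bookkeeping you flag as the main obstacle arises only from the martingale form.

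Three small repairs are needed.
\begin{itemize}
\item Your remark that the $g_k$ inside the integral is ``replaced by its conditional mean'' is wrong as written. The identity is $\psi_k=g_k(\beta,\bLambda;\O_T)+\int\{\xi_k(u)-g_k(\beta,\bLambda;\O_T)\}G(u)^{-1}\,dM^C_G(u)$, with $g_k(\O_T)$ itself in the integrand; substituting $\xi_k(u)$ would make the integral vanish.
\item Your $\mathcal{M}_1$ placeholder then closes at once. Conditional on $(\Z,D)$, both the $dN^C$ part and the compensator part $I(Y\ge u)\,d\Lambda_G(u)$ reduce, by Fubini and $T\perp C\mid\Z,D$, to integrals of $\bbE[\{\xi_0(u)-g_k(\O_T)\}I(T\ge u)\mid\Z,D]$. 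This is zero for every $u$ by the definition of $\xi_0$, so no martingale property of $M^C_G$ is needed.
\item A centered $k$-th order product is uncorrelated with, not independent of, the lower-order terms; for example, $(Z_1-\zeta_1)(Z_2-\zeta_2)$ is not independent of $Z_1$. The operative fact is that each component of $\Int_k(\Z;\bzeta_0)$ contains a factor $Z_j-\zeta_{j,0}$ that is absent from the given component of $V_k$.
\end{itemize}
You are right to invoke Assumption~\ref{condidentification}(i) and Lemma~\ref{lemmaidentification}. The paper's proof uses them too, even though the theorem lists only Assumptions~\ref{condindependent}--\ref{condposint}.
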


Together, Neyman orthogonality and double robustness ensure that the proposed moment function is well suited for estimation and inference in right-censored models. Neyman orthogonality shields the asymptotic distribution from first-order nuisance estimation error, while double robustness guarantees consistency even when part of the model is misspecified \citep{bickel1993efficient,Chernozhukov2018}.

\section{Estimation procedure}\label{sec4}

This section presents an implementation of the proposed estimator based on the AIPCW moment functions developed in Section~\ref{sec3}.

We begin by estimating the nuisance functions required for constructing the AIPCW moment functions. To ensure asymptotic independence between nuisance function estimation and moment evaluation, we employ a cross-fitting scheme \citep{Chernozhukov2018}. Specifically, the sample is randomly partitioned into two folds; nuisance functions are estimated on one fold and the corresponding moment functions are evaluated on the other. The roles of the two folds are then interchanged, and the resulting moment functions are averaged.

Let $\hat\zeta_k = n^{-1}\sum_{i=1}^n Z_{k,i}$ denote the sample mean of the $k$th instrument, and let $\hat\bzeta = (\hat\zeta_1,\ldots,\hat\zeta_p)$. For each interaction order $k$, we regress $Y$ on $V_k$ and $D$ on $V_k$ separately, and denote the resulting coefficient estimators by $\hat\vartheta_{k}$ and $\hat\omega_{k}$. Using these estimators, the uncensored-data interaction-based moment function of order $k$ is defined as
\[
g_k(\beta,\hat\bLambda_k;\O)
=
\Int_k(\Z;\hat\bzeta)
\Big\{(Y - V_k^\top \hat\vartheta_{k}) - \beta (D - V_k^\top \hat\omega_{k})\Big\},
\]
where $\hat\bLambda_k=\{\hat\bzeta,\hat\vartheta_{k},\hat\omega_{k}\}$ collects the nuisance functions in uncensored-data moment function $g$ of $k$th order interaction.

To account for right censoring, we apply an AIPCW transformation to $g_k$. The conditional censoring survival function
$G(y\mid \Z,D)=\Pr(C\ge y\mid \Z,D)$
is estimated using a local Kaplan-Meier estimator \citep{dabrowska1989,tang2020},
\[
\widehat G(y\mid \Z,D)
=
\prod_{i=1}^n
\left[
1-
\frac{B_{ni}(\Z,D)}
{\sum_{j=1}^n I(Y_j\ge Y_i)B_{nj}(\Z,D)}
\right]^{I(Y_i\le y,\delta_i=0)},
\]
where
\[
B_{nj}(\Z,D)
=
\frac{
K\!\left(\frac{(\Z,D)-(\Z_j,D_j)}{h}\right)
}{
\sum_{i=1}^n K\!\left(\frac{(\Z,D)-(\Z_i,D_i)}{h}\right)
}.
\]
Here $K(\cdot)$ is a kernel function and $h$ is a bandwidth.

Next, we estimate 
$\xi_k(\beta;u,\Z,D)=\bbE\{g_k(\beta,\bLambda_k;\O_T)\mid Y\ge u,\Z,D\}$
by
\[
\widehat\xi_k(\beta;u,\z,d)
=
\frac{
\sum_{j=1}^n I(Y_j\ge u)\,B_{nj}(\z,d)\,
\frac{\delta_j}{\widehat G(Y_j\mid \z,d)}\,
g_k(\beta,\hat\bLambda_k;\O_j)
}{
\sum_{j=1}^n I(Y_j\ge u)\,B_{nj}(\z,d)\,
\frac{\delta_j}{\widehat G(Y_j\mid \z,d)}
}.
\]

Finally, $\int_{-\infty}^{Y} {d\,\xi_{k}(\beta;u,\Z,D)}/{G(u| \Z,D)}$ is approximated by a sum over the observed failure times. Let $u_1<\cdots<u_K$ denote the distinct failure times. For subject $i$, we approximate
$\int_{-\infty}^{Y_i} d\widehat\xi_k(\beta;u,\Z_i,D_i)/\widehat G(u\mid \Z_i,D_i)$
by
\[
\sum_{u_k\le Y_i}
\frac{\Delta_{\widehat\xi_k}(u_k;\Z_i,D_i)}{\widehat G(u_k\mid \Z_i,D_i)},
\]
where
$\Delta_{\widehat\xi_k}(u_k;\Z_i,D_i)
=
\widehat\xi_k(\beta;u_k,\Z_i,D_i)
-
\widehat\xi_k(\beta;u_{k-1},\Z_i,D_i)$
and $u_0=-\infty$.

The resulting AIPCW-transformed observed data moment function of order $k$ is
\bse
\psi_k(\beta,\hat\boldeta_k;\O)&=&\frac{\delta}{\widehat G(Y\mid \Z,D)}\Big\{g_k(\beta,\hat\bLambda_k;\O)-\widehat\xi_k(\beta;Y,\Z,D)\Big\}\\
&&+\widehat\xi_k(\beta;-\infty,\Z,D) +\int_{-\infty}^{Y}\frac{d\widehat\xi_k(\beta;u,\Z,D)}{\widehat G(u\mid \Z,D)},
\ese
where $\hat\boldeta_k$ collects all nuisance function estimators involved in the $k$th-order moment function.

After constructing the AIPCW moment functions, the remaining task is to aggregate a potentially large collection of interaction-based moment functions into a single estimator of $\beta$. In practice, particularly in genetic applications, it is common that the instruments and their interaction terms are only weakly correlated with the treatment variable $D$ \citep{davies2015many}. This weak association implies that the corresponding coefficient ${\boldsymbol{\varphi}_{k}}$ in the structural equation \eqref{modelD} is small, reflecting limited variation in $D$ explained by the interaction-based instruments. In the presence of many and weakly informative moment functions, conventional quadratic GMM estimators \citep{Hansen1982} may suffer from substantial weak-instrument bias \citep{newey2009}. We therefore aggregate the moment functions using a generalized empirical likelihood (GEL) criterion \citep{owen1988,qin1994,kitamura1997}. Relative to quadratic GMM, GEL avoids committing to a fixed quadratic approximation and instead determines data-adaptive implied probabilities through a tilting parameter, which can stabilize estimation when the moment functions are nearly redundant, when sample moments and their derivatives are highly correlated, or when identification is weak \citep{newey2004}.

We aggregate the moment functions and estimate $\beta$ by minimizing the GEL objective
\[
\hat\beta
=
\argmin_{\beta\in\calB}\hat Q(\beta,\hat\boldeta),
\qquad
\hat Q(\beta,\hat\boldeta)
=
\sup_{\blambda\in L(\beta,\hat\boldeta)}
\frac{1}{n}\sum_{i=1}^n
\rho\!\left(\blambda^\top \psi(\beta,\hat\boldeta;\O_i)\right),
\]
where $\rho(\cdot)$ is a concave tilting function and
$L(\beta,\hat\boldeta)
=
\{\blambda:\blambda^\top\psi(\beta,\hat\boldeta;\O_i)\in\calV,\ i=1,\ldots,n\}$.
Different choices of $\rho$ correspond to familiar members of the GEL family, including empirical likelihood with $\rho(v)=\log(1-v)$ \citep{owen1988}, exponential tilting with $\rho(v)=-\exp(v)+1$ \citep{qin1994}, and the continuous updating estimator with $\rho(v)=-v-v^2/2$ \citep{Hansen1982,Imbens2002}. These criteria behave similarly in a neighborhood of zero, but they differ substantially in their second-order derivatives, leading to distinct curvature properties. Empirical likelihood and exponential tilting exhibit state-dependent curvature, whereas the continuous updating estimator has constant curvature and is known to suffer from larger higher-order bias and less favorable finite sample performance \citep{kitamura2004}. For these reasons, we primarily use empirical likelihood or exponential tilting in implementation.

Although GEL provides a principled way to aggregate many moment functions, the construction of interaction-based moments can still pose practical challenges in applications. In particular, the number of interaction terms $\Int_k(\Z)$ grows combinatorially with the dimension of the instrument vector $\Z$. For a $p$-dimensional $\Z$, interaction terms of order 2 to $q$ involves $r(p,q)=\sum_{k=2}^q C(p,k)$ terms, so even moderate values of $p$ can lead to an extremely large set of candidate interaction terms. To mitigate this dimensionality burden while retaining interaction terms that contribute meaningfully to identification, we introduce a preliminary screening step based on penalized regression.

A variety of penalization methods can be used for this screening step, including lasso \citep{tibshirani1996}, ridge regression \citep{hoerl1970}, SCAD \citep{fan2001}, and MCP \citep{zhang2010}. For simplicity, we employ the adaptive lasso \citep{zou2006} in our implementation. We also conducted a brief comparison with alternative selection procedures and found that the resulting estimates are qualitatively similar across methods. Since identification in our setting requires only that at least one interaction moment be nonzero, the procedure does not rely on selection consistency, and exact recovery of the true nonzero interaction terms is unnecessary. The penalized selection step functions primarily as a dimension-reduction device that removes clearly uninformative components rather than perfectly separating zero from nonzero interactions. Identification is preserved as long as the selected set contains at least one interaction term associated with a nonzero moment condition. As a result, moderate selection errors do not undermine identification, making the method comparatively robust to imperfect model selection.

Suppose we retain an $m$-dimensional moment function via penalized regression, then the above steps together define the proposed estimation procedure, which is summarized in Algorithm~\ref{alg:estimation}. Note that the nuisances in the moment function of order $k$ depend explicitly on $k$. Therefore, the associated nuisance functions must be estimated individually for each $k$, rather than pooled across orders.

\begin{algorithm}[H]
\caption{Estimation Procedure}
\label{alg:estimation}
\begin{algorithmic}[1]
\State Split the sample into two folds.
\For{$k=2,\ldots,q$}
    \For{each fold}
        \State Estimate nuisance functions $\hat\boldeta_k= (\hat\bzeta,\hat\vartheta_{k},\hat\omega_{k},\hat G,\hat\xi_k)$ on the auxiliary fold.
        \State Construct AIPCW moment functions $\psi_k$ on the evaluation fold.
    \EndFor
    \State Concatenate the two-fold moment functions to obtain $\psi_k(\beta,\hat\boldeta_k;\O)$.
\EndFor
\State Stack moment functions across orders $\psi(\beta,\!\hat\boldeta;\!\O)\!=\!(\psi_2(\beta,\!\hat\boldeta_2;\!\O)\tp,\ldots,\psi_q(\beta,\!\hat\boldeta_q;\!\O)\tp\!)\tp$.
\State Select $m$ interaction terms from all $r(q,p)$ terms via penalized regression.
\State Form the final $m$-dimensional moment vector $\psi^{\dagger}(\beta,\hat{\boldeta};\O)$ using the selected components.
\State Estimate $\hat\beta$ by minimizing the GEL criterion $\hat Q(\beta,\hat\boldeta)$ based on the selected moments.
\end{algorithmic}
\end{algorithm}

\section{Asymptotic properties}\label{sec5}

This section establishes the consistency and asymptotic normality of the proposed estimator. We first introduce the required notation. Let $\psi'(\boldeta;\O)=\partial \psi(\beta,\boldeta;\O)/\partial \beta$, $\psi^* = \bbE\psi'(\boldeta_0;\O)$. Moreover, let $\Omega(\beta,\boldeta;\O) = \bbE \psi(\beta,\boldeta;\O)\psi(\beta,\boldeta;\O)\tp$, and  $\Omega_0 = \Omega(\beta_0,\boldeta_0;\O)$. $\lambda_{\max}(\Omega)$ and $\lambda_{\min}(\Omega)$ represent the maximal and minimal eigenvalues of matrix $\Omega$. $\|\cdot\|$ denotes the Euclidean norm for vectors and the Frobenius norm for matrices. For a measurable function $f$, $\|f\|$ denotes its $L_2$ norm with respect to the underlying probability measure.

We next give some regularity conditions required to establish the asymptotic properties.

\begin{condition}[Many weak moment asymptotics]\label{condweak}
(i). The total number of moments $m$ satisfies $m^3/n \to 0$.

(ii). There exist scalars $\mu_n, c, c^{\prime}>0$, such that 
$$
\mu_n^2 c \leq n \psi^{*\top}\Omega_0^{-1} \psi^{*} \leq \mu_n^2 c^{\prime} ,
$$
$\mu_n$ satisfies $\mu_n \rightarrow \infty$ as $n \rightarrow \infty$ and $m / \mu_n^2$ is bounded. 
\end{condition}
Condition~\ref{condweak} (i) imposes the conditions that $m^{3}/n\to 0$, which restricts how quickly the number of moment functions may diverge. Here $m$ denotes the number of constructed interaction-based moments rather than the dimension of the original instrument vector $Z$, so it can be larger than $p$. The condition ensures that the expanding moment set remains well behaved for the asymptotic analysis. Condition \ref{condweak} (ii) places restrictions on the concentration parameter $n\psi^{*\top}\Omega_0^{-1}\psi^{*}$ \citep{Staiger1997}, which summarizes the overall strength of the moment functions. When $\mu_n$ scales as $\sqrt{n}$, it falls into the classical strong moment regime. In contrast, our framework permits $\mu_n$ to increase more slowly than $\sqrt{n}$, allowing the concentration parameter to diverge at a slower rate and therefore accommodating many weak interactions. The quantity $\mu_n$ governs the convergence rate of the estimator $\widehat{\beta}$, and all asymptotic results remain valid without imposing the conventional $\sqrt{n}$ scaling. 
 
\begin{condition}[Convergence rate of nuisance]\label{condkm}
 The estimator of $G$ and $\xi$ satisfy $\|\hat G-G_0\|=o_p(n^{-1/4}m^{-1/4})$, and $\|\hat \xi-\xi_0\|=o_p(n^{-1/4}m^{-1/4})$.
\end{condition}
Condition \ref{condkm} imposes a generic requirement on the convergence rates of the estimators $\widehat G$ and $\widehat\xi$. The condition does not rely on any specific estimation procedure, and other approaches, such as Cox model \citep{cox1972}, may also be used provided that the required convergence rates are achieved. It only requires that the estimators achieve rates of order $o_p(n^{-1/4}m^{-1/4})$, which is commonly required in double robustness literature, such as in \cite{Chernozhukov2018}. In our method, $G_0$ is estimated using a local Kaplan-Meier estimator, which attains the rate of $O((\log(n)/(nh^p))^{1/2})$ \citep{tang2020}, where $h$ is the bandwidth. Further regulartity conditions are included in the supplementary material. 
\begin{theorem}\label{thmnormality}
   Under Conditions \ref{condweak}-\ref{condkm} and Conditions 3-5 in the supplementary material, $\widehat\beta$ is asymptotically normal as $n\to \infty$, i.e., 
$$
\frac{\mu_n(\widehat\beta-\beta_0)}{\sqrt{H^{-1}(H+V)H^{-1}}} \to N(0,1),
$$
where $H = n\psi^{*\top}\Omega_0^{-1}\psi^{*}$, $V = \mu_n^{-2}\bbE[U_i\tp\Omega_0^{-1}U_i]$, and  \\$U_i = \psi'(\boldeta_0;\O_i)- \psi^{*}-\{\Omega_0^{-1} \bbE(\psi(\beta_0,\boldeta_0;\O) \psi^{'}(\boldeta_0;\O)\tp)\}\tp \psi(\beta_0,\boldeta_0;\O_i)$ is the population residual of the least squares regression of $\psi^{'}(\boldeta_0;\O_i)-\psi^{*}$ on $\psi(\beta_0,\boldeta_0;\O_i)$.
\end{theorem}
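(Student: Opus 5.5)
The proof reduces to the many-weak-moment GEL theory of \citet{newey2009}. The only new ingredient is the estimated nuisance $\hat\boldeta$, which enters through cross-fitting.

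\textbf{Step 1: replace $\hat\boldeta$ by $\boldeta_0$.} Write $\hat g(\beta)=n^{-1}\sum_i\psi(\beta,\hat\boldeta;\O_i)$ and $\hat\Omega(\beta)=n^{-1}\sum_i\psi(\beta,\hat\boldeta;\O_i)\psi(\beta,\hat\boldeta;\O_i)^\top$. Let $\bar g(\beta)$ and $\bar\Omega(\beta)$ be the same quantities evaluated at $\boldeta_0$. I will show:
\begin{itemize}
\item $\sqrt n\,\|\hat g(\beta)-\bar g(\beta)\|=o_p(1)$ uniformly in a shrinking neighbourhood of $\beta_0$, and similarly for the derivative $\hat g'$;
\item $\|\hat\Omega-\bar\Omega\|=o_p(m^{-1/2})$ (or whatever rate the Newey--Windmeijer argument needs).
\end{itemize}

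For the first claim, on each fold I decompose $\hat g-\bar g$ into an empirical process term plus a bias term. The empirical process term has mean zero conditional on the auxiliary fold, because of cross-fitting. Its conditional variance is bounded by $m\cdot\sup\|\hat\boldeta-\boldeta_0\|^2/n$, using Assumption~\ref{condposint} to keep $1/G$ bounded. The bias term $\bbE\psi(\beta,\hat\boldeta)-\bbE\psi(\beta,\boldeta_0)$ is handled with Theorem~\ref{thmneyman}(i): the first-order Gateaux derivative vanishes, so only a second-order remainder survives. That remainder is bounded componentwise by products such as $\|\hat G-G_0\|\,\|\hat\xi-\xi_0\|$, squared $\|\hat\bLambda-\bLambda_0\|$ terms, and cross terms. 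By Condition~\ref{condkm}, each component is $o_p(n^{-1/2}m^{-1/2})$, so stacking $m$ components gives Euclidean norm $o_p(n^{-1/2})$. The parametric nuisances $\hat\bzeta,\hat\vartheta_k,\hat\omega_k$ are root-$n$ consistent, so their quadratic remainders are $O_p(1/n)$ per component, which is negligible since $m^{3}/n\to0$.

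\textbf{Step 2: consistency at rate $\mu_n$.} With $\boldeta_0$ substituted, I follow \citet{newey2009}. First I show $\hat\blambda=O_p(\sqrt{m/n})$ uniformly and expand $\hat Q(\beta)$ around the continuous-updating objective $\bar g(\beta)^\top\bar\Omega(\beta)^{-1}\bar g(\beta)/2$. Then I use Condition~\ref{condweak}(ii) together with the identification result, Theorem~\ref{thmidentification}. Since the moment function is linear in $\beta$, $\bbE\psi(\beta,\boldeta_0)=\psi^*(\beta-\beta_0)$. This gives a lower bound $n(\beta-\beta_0)^2\psi^{*\top}\Omega_0^{-1}\psi^*\gtrsim\mu_n^2(\beta-\beta_0)^2$ on the objective, which dominates the centred stochastic part of order $m/n+\sqrt m/n$. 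Hence $\mu_n(\hat\beta-\beta_0)=O_p(1)$.

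\textbf{Step 3: normality.} Linearity in $\beta$ helps here too. The first-order condition is $\hat D(\hat\beta)^\top\hat\Omega^{-1}\hat g(\hat\beta)=0$, where $\hat D$ is the GEL-weighted derivative. Expanding around $\beta_0$ gives
\begin{equation*}
\mu_n(\hat\beta-\beta_0)=-H^{-1}\mu_n\, n\,\tilde D^\top\Omega_0^{-1}\bar g(\beta_0)+o_p(1),
\end{equation*}
where $\tilde D$ is $\psi^*$ plus the empirical average of the residuals $U_i$. The GEL weighting automatically replaces the raw derivative by this residual form, which is the key feature that removes the many-moment bias. The leading term splits into two parts:
\begin{itemize}
\item a linear part, $\mu_n^{-1}\sum_i\psi^{*\top}\Omega_0^{-1}\psi_i$, with variance $H/\mu_n^2$;
\item a degenerate U-statistic, $\mu_n^{-1}n^{-1}\sum_{i\ne j}U_i^\top\Omega_0^{-1}\psi_j$. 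Its diagonal terms drop out because $\bbE U_i\psi_i^\top=0$ by construction, and its variance is $V$.
\end{itemize}
The two parts are uncorrelated. A martingale CLT for the combined sum, using $m^3/n\to0$ and fourth-moment bounds from the supplementary conditions, gives $N(0,(H+V)/\mu_n^2)$ up to scaling. This yields the stated sandwich $H^{-1}(H+V)H^{-1}$ after normalizing.

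\textbf{Main obstacle.} The hardest part will be Step 1 in the many-moment regime. It requires that the nuisance errors not only be negligible for $\hat g$, but also uniformly negligible in $\hat\Omega^{-1}$ and in the derivative matrix with $m\to\infty$. The nuisance also enters in a non-smooth way, through $1/\hat G$ and the stochastic integral $d\hat\xi/\hat G$. I would first try bounding $\|\hat\Omega-\bar\Omega\|$ in Frobenius norm by $m\cdot\max_k\|\psi_k(\hat\boldeta)-\psi_k(\boldeta_0)\|$ and invoking eigenvalue bounds on $\Omega_0$ from the supplementary conditions. The rate $m^{-1/4}$ in Condition~\ref{condkm} appears designed precisely for this step.
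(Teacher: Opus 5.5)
Your overall architecture matches the paper's. You use cross-fitting plus Theorem~\ref{thmneyman}(i) to get $\|\bar\psi(\beta_0,\hat\boldeta)-\bar\psi(\beta_0,\boldeta_0)\|=o_p(n^{-1/2})$, and the same for $\bar{\psi'}$, extended to all of $\calB$ by linearity in $\beta$. You control the second-moment matrices at rate $o_p(m^{-1/2})$. You write the score at $\beta_0$ as $\psi^{*\top}\Omega_0^{-1}\bar\psi+n^{-1}\sum_iU_i^\top\Omega_0^{-1}\bar\psi$, whose $N(0,H+V)$ limit is Lemma A12 of \citet{newey2009}.

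\textbf{The gap in Step 2.} Comparing the lower bound $\mu_n^2(\beta-\beta_0)^2/n$ with a stochastic part of order $m/n+\sqrt m/n$ cannot give $\mu_n(\hat\beta-\beta_0)=O_p(1)$. Consider the regime $m\asymp\mu_n^2$, which is the one where $V$ does not vanish. There an $m/n$ term is as large as $\mu_n^2/n$, so the comparison yields only $(\hat\beta-\beta_0)^2=O_p(1)$, not even consistency. Even a $\sqrt m/n$ term gives only $|\hat\beta-\beta_0|=O_p(m^{1/4}/\mu_n)$.

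The ingredient you need is that the $m/(2n)$ noise level of the GEL/CUE objective does not depend on $\beta$. This is exactly what fails for two-step GMM, and it is why $m/(2n)$ is built into $\tilde Q$. It gives $\sup_{\beta\in\calB}|\hat Q(\beta,\hat\boldeta)-\tilde Q(\beta,\boldeta_0)|=o_p(\mu_n^2/n)$ (Lemma~\ref{lemmaq}). Combined with Lemma~\ref{lemma1a}, that yields consistency, but no rate.

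\textbf{What Step 3 then needs.} In the paper, the $\mu_n$ rate and normality come out together from a mean-value expansion of $\partial\hat Q(\beta,\hat\boldeta)/\partial\beta$ around $\beta_0$. That expansion requires $\frac{n}{\mu_n^2}\partial^2\hat Q(\bar\beta,\hat\boldeta)/\partial\beta^2$ to converge at an intermediate point $\bar\beta$. Your Step 3 takes this for granted. Because you expand the GEL-weighted condition at $\hat\beta$, you would also have to move the GEL weights from $\hat\beta$ to $\beta_0$.

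The paper proves the Hessian step by showing that the $\hat\boldeta$- and $\boldeta_0$-Hessians differ by $o_p(\mu_n^2/n)$ uniformly, and then invoking the Hessian lemma of \citet{newey2009}. This uses $\sup_{\beta\in\calB}\|\blambda(\beta,\hat\boldeta)\|=O_p(\mu_n/\sqrt n)$ and the same bound for $\partial\blambda/\partial\beta$. Your uniform $O_p(\sqrt{m/n})$ for $\hat\blambda$ holds only at $\beta_0$.

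\textbf{Your main obstacle.} The route you sketch for $\|\hat\Omega-\bar\Omega\|$ is too lossy. Bounding it by $m\max_k\|\psi_k(\hat\boldeta)-\psi_k(\boldeta_0)\|$ under Condition~\ref{condkm} gives only $o_p(m^{3/4}n^{-1/4})$. That is $o_p(m^{-1/2})$ only if $m^5/n\to0$, which is stronger than the assumed $m^3/n\to0$.

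The paper instead linearizes $\bar\Omega$ in $\boldeta$ and applies Cauchy--Schwarz:
$\|n^{-1}\sum_ia_ib_i^\top\|\le\|n^{-1}\sum_ia_ia_i^\top\|^{1/2}\|n^{-1}\sum_ib_ib_i^\top\|^{1/2}$. Here $a_i$ is the pathwise derivative of $\psi(\beta_0,\boldeta;\O_i)$ in the direction $\hat\boldeta-\boldeta_0$, and $b_i=\psi(\beta_0,\boldeta_0;\O_i)$. It combines this with the eigenvalue bounds of Condition~\ref{condeigen}, and with a matrix concentration bound for $\bar\Omega(\beta_0,\boldeta_0)-\Omega_0$. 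This costs only the factor $\|\hat\boldeta-\boldeta_0\|=o_p((nm)^{-1/4})$, which the paper shows is $o_p(m^{-1/2})$ under $m^3/n\to0$.
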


Theorem \ref{thmnormality} establishes that $\hat\beta$ attains a convergence rate of $\mu_n$, which is typically slower than the usual $\sqrt{n}$ rate. This loss in convergence rate is the price paid for weak instruments, under which the moment conditions provide substantially less information than that in the strong moment setting. The variance of $\hat\beta$ contains two distinct components. The first variance term $H^{-1}$ matches the standard GMM variance in settings with strong instruments \citep{Hansen1982}. The second variance component $H^{-2}V$ originates from the $U$-statistic term associated with weak interactions This additional component does not vanish when the number of interactions diverges and plays a central role in delivering valid inference. When all moments are strong and the number of interaction terms stays fixed, the $U$-statistic variance converges to zero, recovering the classical GMM variance $H^{-1}$, which coincides with the variance lower bound under the strong moment regime \citep{tang2020}. In contrast, in the presence of weak moments, the additional variance term reflects the intrinsic loss of information induced by weak identification. The variance lower bound in such weak moment settings remains an open question and deserves further investigation. 

Figure~\ref{fig:weak} illustrates the influence of the strength of the moment conditions $\mu_n$. Panel (a) shows that, under weaker moments, the norm $\|\psi\|$ exhibits smaller curvature as $\beta$ deviates from the true value $\beta_0$, leading to a flatter curve and a more challenging estimation problem for $\beta$. Panel (b) further demonstrates the impact of weak moments on inference. In particular, the asymptotic variance of the estimator involves the term $V = \mu_n^{-2}\bbE[U_i\tp\Omega_0^{-1}U_i]$, which increases as $\mu_n$ decreases. This variance inflation is reflected in the progressively more dispersed density of the GEL estimator in Panel (b), and when $\mu_n=1$ so that the moment strength does not diverge with $n$, the resulting limiting distribution departs substantially from normal distribution.
\begin{figure}[htbp]
\centering
\begin{subfigure}{0.49\linewidth}
  \centering
  \includegraphics[width=\linewidth]{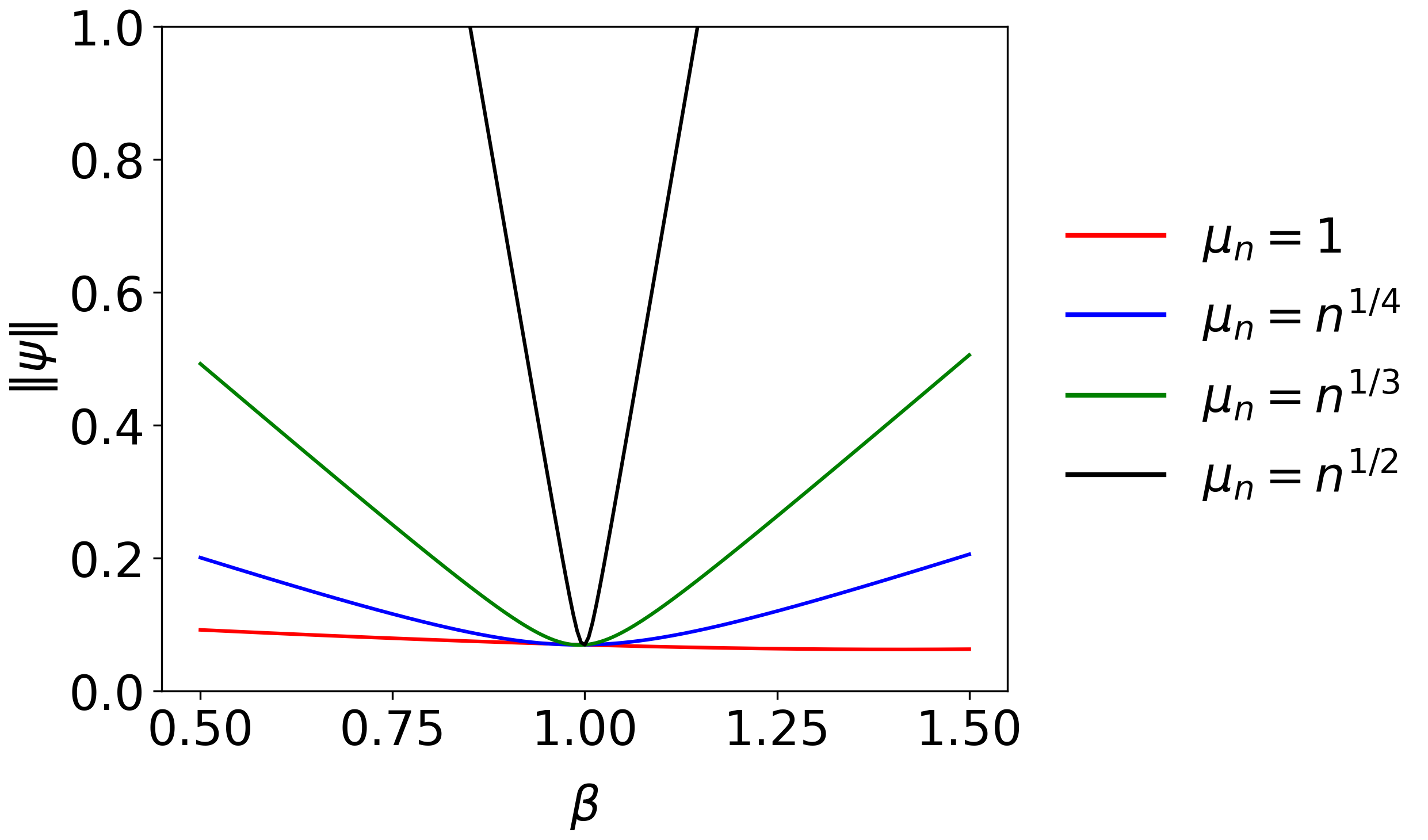}
  \caption{Norm of $\psi$ \quad \quad  \quad  \quad \quad  }
\end{subfigure}
\hfill
\begin{subfigure}{0.49\linewidth}
  \centering
  \includegraphics[width=\linewidth]{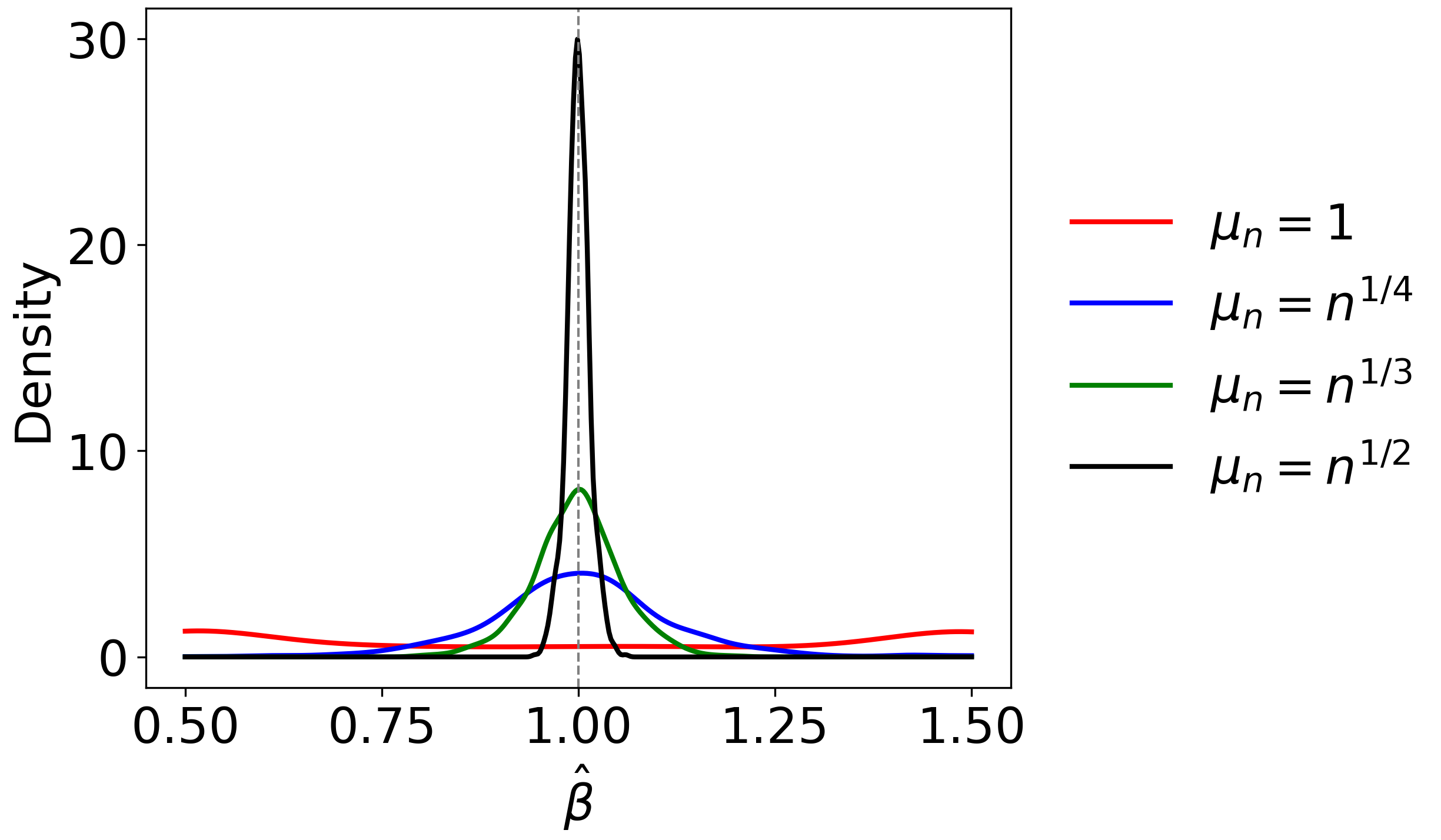}
  \caption{Density of $\hat\beta$  \quad \quad  \quad  \quad \quad  }
\end{subfigure}
\caption{Behavior of the moment function and the GEL estimator under different moment strengths indexed by $\mu_n$. The true value is $\beta_0=1$. Panel (a) plots the $\|\psi\|$ curves under varying moment strengths, where smaller $\mu_n$ yields a flatter norm curve. When $\mu_n=1$ ( $\mu_n$ does not diverge with $n$), the curve becomes flat, indicating lack of identification. Panel (b) displays the density of the GEL estimator of $\beta$ under the same moment regimes. As $\mu_n$ decreases, the density becomes more dispersed with larger variance; when $\mu_n=1$, the limiting distribution tends to be flat.}
\label{fig:weak}
\end{figure}

A distinctive feature of our result is that the moment function is Neyman orthogonal to the nuisance functions. Neyman orthogonality eliminates the first-order impact of estimating $\boldeta_0$, ensuring that the asymptotic distribution depends only on the limiting behavior of the moment functions themselves, as long as the nuisance estimators achieve a minimal convergence rate required by the Neyman orthogonality condition. As a result, the influence of nuisance estimation is asymptotically negligible, and the asymptotic variance takes a clean two-component form driven solely by the structure of the weak interactions.

In practice, the variance term $H$ and $V$ can be consistently estimated by
$$
\hat H = \frac{\partial^2\hat Q(\hat\beta, \hat\boldeta)}{\partial\beta^2 },\quad \hat V_1= \hat H^{-1} \hat D\tp\bar\Omega(\hat\beta,\hat\boldeta)^{-1}\hat D\hat H^{-1}, 
$$
respectively, where
$$
\hat D = \sum_{i=1}^n \frac{\rho^{'}(\hat\blambda(\hat\beta,\hat\boldeta)\tp \psi(\hat\beta,\hat\boldeta;\O_i))\psi^{'}(\hat\beta,\hat\boldeta;\O_i)}{\sum_{j=1}^n \rho^{'}(\hat\blambda(\hat\beta,\hat\boldeta)\tp \psi(\hat\beta,\hat\boldeta;\O_j))},
$$
and $\bar\Omega(\hat\beta,\hat\boldeta) = 1/n\sum_{i=1^n}\psi(\hat\beta,\hat\boldeta;\O_i)\psi(\hat\beta,\hat\boldeta;\O_i)\tp.$
\section{Model diagnoses}\label{sec6}

When more moment conditions are imposed than are needed to identify the parameter of interest, it becomes important to assess whether these restrictions are compatible with the data. This assessment can be formalized as a hypothesis test on the joint validity of the $m$-dimensional moment functions:
\[
H_0:\ \bbE\{\psi(\beta_0,\boldeta_0;\O)\}=\mathbf{0} .
\]
Under correct specification, the moment conditions are satisfied at the true parameter values. This motivates a specification test analogous to the classical $J$-test in the GMM literature \citep{sargan1958,Hansen1982}, which assesses the joint validity of overidentifying restrictions by examining whether the sample moments are sufficiently close to zero.

Specifically, we employ the statistic $2n\hat Q(\hat\beta,\hat\boldeta)$ \citep{newey2009,ye2024} to quantify the aggregate deviation of the sample moments from their population counterparts. Under appropriate regularity conditions, this statistic admits a well-defined asymptotic distribution under the null hypothesis and can therefore be used to conduct a formal overidentification test, as stated in the following theorem.

\begin{theorem}[Overidentification test]\label{thm:overid}
Under Conditions \ref{condweak}-\ref{condkm} and Conditions 3-5 in the supplementary material, and  $m,n\to\infty$, if
\[
\bbE\{\psi(\beta_0,\boldeta_0;\O)\}=\mathbf{0},
\]
then
\[
P\!\left(2n\widehat Q(\widehat\beta,\widehat\boldeta) \ge \chi^2_{1-\alpha}(m-1)\right)\to \alpha ,
\]
as $n\to\infty$.
\end{theorem}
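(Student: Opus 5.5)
The plan follows the many-weak-moment argument of \citet{newey2009} for the GEL overidentification statistic. The only new ingredient is the estimated nuisance $\hat\boldeta$, and Neyman orthogonality (Theorem~\ref{thmneyman}) together with cross-fitting and Condition~\ref{condkm} will make it asymptotically negligible. The target decomposition is
\[
2n\hat Q(\hat\beta,\hat\boldeta)=n\bar\psi(\hat\beta)\tp\Omega_0^{-1}\bar\psi(\hat\beta)+o_p(\sqrt m),
\]
where $\bar\psi(\beta)=n^{-1}\sum_i\psi(\beta,\boldeta_0;\O_i)$ is the oracle sample moment. Since $\chi^2_{1-\alpha}(m-1)=m-1+z_{1-\alpha}\sqrt{2(m-1)}+o(\sqrt m)$, it then suffices to show
\[
\frac{n\bar\psi(\hat\beta)\tp\Omega_0^{-1}\bar\psi(\hat\beta)-(m-1)}{\sqrt{2(m-1)}}\to N(0,1).
\]

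\textbf{Step 1 (nuisance replacement).} On each fold, condition on the auxiliary sample and expand $\psi(\beta,\hat\boldeta;\O)-\psi(\beta,\boldeta_0;\O)$. The conditional mean of the first-order term vanishes by Theorem~\ref{thmneyman}(i). The second-order remainder is bounded by $\|\hat G-G_0\|\,\|\hat\xi-\xi_0\|$ plus the squared errors of $\hat\bLambda$, which converge at the parametric rate. By Condition~\ref{condkm} this gives a componentwise bias of order $o_p(n^{-1/2}m^{-1/2})$, and hence $\sqrt n\|\bar\psi(\beta,\hat\boldeta)-\bar\psi(\beta,\boldeta_0)\|=o_p(1)$ uniformly near $\beta_0$. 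The centered stochastic part is handled by a conditional variance bound, using Assumption~\ref{condposint} to keep $1/\hat G$ bounded. Analogous bounds give $\|\hat\Omega(\beta,\hat\boldeta)-\Omega_0\|=o_p(m^{-1/2})$ under $m^3/n\to0$ and the supplementary moment conditions. These yield the displayed decomposition, with the error $o_p(\sqrt m)$ after scaling.

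\textbf{Step 2 (GEL expansion).} Apply the standard GEL argument with the oracle moments. First show $\|\hat\blambda\|=O_p(\sqrt{m/n})$, so that $\max_i|\hat\blambda\tp\psi_i|=o_p(1)$. Then $\hat\blambda=-\hat\Omega^{-1}\bar\psi+$ remainder, and therefore $2n\hat Q=n\bar\psi\tp\hat\Omega^{-1}\bar\psi+O_p(m^{3/2}/\sqrt n)$, which is $o_p(\sqrt m)$ because $m^3/n\to0$ gives $m/\sqrt n\to0$.

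\textbf{Step 3 (effect of estimating $\beta$).} Expand $\bar\psi(\hat\beta)=\bar\psi(\beta_0)+\bar\psi'(\hat\beta-\beta_0)$. By Theorem~\ref{thmnormality}, $\hat\beta-\beta_0=O_p(1/\mu_n)$ and $\hat\beta$ approximately minimizes the quadratic form. Substituting gives
\[
n\bar\psi(\beta_0)\tp\Omega_0^{-1/2}(I-P)\Omega_0^{-1/2}\bar\psi(\beta_0)+o_p(\sqrt m),
\]
where $P$ is the rank-one projection onto $\Omega_0^{-1/2}\psi^*$. The $U$-statistic piece of $\bar\psi'$, namely the $U_i$ terms in Theorem~\ref{thmnormality}, contributes only $O_p(1)$, since $m/\mu_n^2$ is bounded; this is $o_p(\sqrt m)$. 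Removing the projection costs one degree of freedom.

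\textbf{Step 4 (CLT for the quadratic form).} Write $n\bar\psi\tp A\bar\psi=n^{-1}\sum_i\psi_i\tp A\psi_i+n^{-1}\sum_{i\ne j}\psi_i\tp A\psi_j$ with $A=\Omega_0^{-1/2}(I-P)\Omega_0^{-1/2}$. The diagonal sum equals $\mathrm{tr}(I-P)+o_p(\sqrt m)=m-1+o_p(\sqrt m)$ by a variance bound using $\mathbb E\|\psi\|^4\lesssim m^2$ and $m^2/n\to0$. The off-diagonal degenerate $U$-statistic, normalized by $\sqrt{2(m-1)}$, is asymptotically $N(0,1)$ by a martingale CLT (as in Lemma A10 of \citet{newey2009}), with the Lyapunov and conditional-variance conditions following from $m^3/n\to0$ and bounded eigenvalues of $\Omega_0$.

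The main obstacle is Step 1. The nuisance error must be controlled uniformly over $m$ growing moments, with accuracy $o_p(\sqrt m)$ in the quadratic form rather than merely $o_p(1)$ in each component. The first attempt will be to bound $\sqrt n\|\bar\psi(\hat\boldeta)-\bar\psi(\boldeta_0)\|$ via cross-fitting, treating the conditional mean (orthogonality plus the product rate) and the conditional variance (Chebyshev, summed over the $m$ coordinates) separately. This is exactly where the $m^{-1/4}$ factor in Condition~\ref{condkm} enters.
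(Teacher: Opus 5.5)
\paragraph{The gap: a frozen weight matrix.} Your target decomposition and your Step 3 both fail in the many-weak regime $m\asymp\mu_n^2$. Condition~\ref{condweak}(ii) allows this regime, and it is exactly where $V$ in Theorem~\ref{thmnormality} does not vanish. The cause is that you freeze the weight at $\Omega_0^{-1}$ while evaluating at $\hat\beta$.

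Write $S=\bbE\{\psi(\beta_0,\boldeta_0;\O)\psi'(\boldeta_0;\O)^{\top}\}$, $\psi_i=\psi(\beta_0,\boldeta_0;\O_i)$ and $\bar\psi_0=\bar\psi(\beta_0)$. What is true is $2n\hat Q(\hat\beta,\hat\boldeta)=n\bar\psi(\hat\beta)\tp\Omega(\hat\beta,\boldeta_0)^{-1}\bar\psi(\hat\beta)+o_p(\sqrt m)$. Here the weight moves with $\beta$: $\Omega(\hat\beta,\boldeta_0)-\Omega_0=(\hat\beta-\beta_0)(S+S\tp)+O_p(\mu_n^{-2})$.

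Replacing this weight by $\Omega_0^{-1}$ changes the statistic by $-2(\hat\beta-\beta_0)\tr(\Omega_0^{-1}S)+O_p(1)$. Under endogeneity the trace is generically of exact order $m$. For example, in the design of Section~\ref{sec7} without censoring, where $(\varepsilon,\nu)$ is independent of $\Z$, one gets $\Omega_0^{-1}S=-\{\mathrm{Cov}(\varepsilon,\nu)/\mathrm{Var}(\varepsilon)\}I_m$. The error is therefore of order $m/\mu_n$, which is $O_p(\sqrt m)$ but not $o_p(\sqrt m)$. Equivalently, your bound $\|\hat\Omega(\beta,\hat\boldeta)-\Omega_0\|=o_p(m^{-1/2})$ at $\beta=\hat\beta$ would need $\sqrt m/\mu_n\to0$.

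Step 3 makes the mirror-image error. In fact $\bar\psi'-\psi^*=\bar U+S\tp\Omega_0^{-1}\bar\psi_0$, and you account only for $\bar U$. The second piece contributes $2(\hat\beta-\beta_0)\,n\bar\psi_0\tp\Omega_0^{-1}S\,\Omega_0^{-1}\bar\psi_0=2(\hat\beta-\beta_0)\tr(\Omega_0^{-1}S)+O_p(1)$ to the fixed-weight form.

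Relatedly, $\hat\beta$ does not approximately minimize the fixed-weight form. That minimizer is the infeasible optimally weighted GMM estimator, whose many-instrument bias is of order $m/(\mu_n^2+m)$ and does not vanish when $m\asymp\mu_n^2$. The two errors cancel, which is why your final conclusion is right. Neither step can be proved as stated, however; both hold only if $m/\mu_n^2\to0$.

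\paragraph{The missing idea and the repair.} Continuously updating the weight is precisely what centres the derivative of the criterion. As in \eqref{qda123},
$\partial_\beta\hat Q(\beta_0,\hat\boldeta)=\{\psi^*+n^{-1}\sum_i\hat U_i\}\tp\bar\Omega(\beta_0,\hat\boldeta)^{-1}\bar\psi(\beta_0,\hat\boldeta)+o_p(\mu_n/n)$.
Since $U_i$ is a residual orthogonal to $\psi_i$, $\bbE(U_i\tp\Omega_0^{-1}\psi_i)=0$, so the $O(m)$ trace never appears.

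The repair is the paper's ordering:
\begin{enumerate}
\item Expand the GEL criterion itself in $\beta$ at $\hat\boldeta$. The bound $2n\{\hat Q(\hat\beta,\hat\boldeta)-\hat Q(\beta_0,\hat\boldeta)\}=O_p(1)$ follows from $\frac{n}{\mu_n}\partial_\beta\hat Q(\beta_0,\hat\boldeta)=O_p(1)$, from $\frac{n}{\mu_n^2}\partial_\beta^2\hat Q=O_p(1)$ uniformly in $\beta$, and from $\mu_n(\hat\beta-\beta_0)=O_p(1)$, as in \eqref{Taylor1}--\eqref{Taylor2}.
\item Replace $\hat\boldeta$ by $\boldeta_0$ at $\beta_0$ only. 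There your Step 1 bounds suffice, and freezing the weight at $\Omega_0$ is legitimate.
\item Apply your Step 4 with $A=\Omega_0^{-1}$ and trace $m$. This is a self-contained substitute for the paper's appeal to Theorem 4 of \citet{newey2009} for $(2n\hat Q(\beta_0,\boldeta_0)-m)/\sqrt{2m}\to N(0,1)$.
\end{enumerate}
The projection $P$ and the ``lost degree of freedom'' are unnecessary, since replacing $m$ by $m-1$ is an $O(1)=o(\sqrt m)$ shift.
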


Theorem~\ref{thm:overid} implies a simple decision rule for assessing the validity of the imposed moment conditions. At significance level $\alpha$, the null hypothesis is rejected whenever the value of $2n\widehat Q(\widehat\beta,\widehat\boldeta)$ exceeds the corresponding chi-squared critical value. Rejection of the null hypothesis suggests that the moment conditions may fail.

\section{Simulation studies}\label{sec7}

We assess the finite sample performance of the estimator under several right censoring scenarios. The candidate instruments $Z_1,\ldots,Z_p$ are generated independently from the standard normal distributions. The structural equations for the outcome and exposure are
$$
T = \beta_0 D + \sum_{k=1}^p \phi_{k} Z_k + \varepsilon,
$$
$$
D = \sum_{k=1}^p \theta_k Z_k
    + \sum_{1 \le j < k \le p} \varphi_{D,jk} Z_jZ_k
    + \nu,
$$
where $\varepsilon$ and $\nu$ are noise terms, generated from a joint normal distribution with mean zero, variance 0.4, and covariance 0.2. The true causal effect is $\beta_0 = 1$. The censoring time $C$ is generated from a uniform distribution $[\tau_1,\tau_2]$. Across different simulation settings, $\tau_1$ and $\tau_2$ are varied to produce empirical censoring rates of approximately $20\%, 40\%, \text{and } 60\%$, respectively. The observed failure times are
$Y = \min(T,C)$, and indicator $\delta = I\{T \le C\}.$

The interaction coefficients decay with sample size: $\varphi_{D,jk} = c n^{-1/4}$ with $c = 4$, representing a weak instrumental variables setting. We test two different IV dimensions: 10 and 20. When $p = 10$, all interaction terms are set to be nonzero. When $p = 20$, we randomly select $40\%$ of the possible interactions and set the corresponding coefficients to be nonzero, with the remaining interactions set to be zero. In the estimation procedure, we use only second-order interaction terms in line with the data-generating process. Moreover, we examine four configurations for $(\theta_k,\phi_{k})$, controlling instrument strength and the proportion of invalid instruments:

\begin{itemize}
    \item \textbf{Case 1:}
    $\theta_k = 1$ for all $k$, and $30\%$ of SNPs are invalid with $\phi_{k} = 0.2$.
    \item \textbf{Case 2:}
    $\theta_k = 1$ for all $k$, and $60\%$ of SNPs are invalid with $\phi_{k}$ taking values $0.2$, $0.4$, and $0.6$ in equal proportions.
    \item \textbf{Case 3:}
    $\theta_k \sim N(1, 1)$ and $\phi_{k} \sim N(0.2, 0.2)$ independently.
    \item \textbf{Case 4:}
    $\theta_k \sim N(1, 1)$ and $70\%$ of SNPs are invalid with $\phi_{k} = 0.5\theta_k$.
\end{itemize}

We consider sample sizes $n = 10000$ and $20000$ and repeat each configuration $500$ times. For each simulated data set we implement the estimator in three steps.  
First, we apply adaptive LASSO to select relevant interaction terms from the set of interaction terms.   
Second, we estimate the nuisance functions $\hat\boldeta$ and use the selected set of interactions to construct the Neyman orthogonal moment $\psi(\beta,\hat\boldeta;\O)$ with sample-splitting method. Finally, we obtain the estimator of $\hat\beta$ by GEL. We also consider the AFT model as a benchmark method, where the relationship between $Y$ and $D$ is modeled directly while ignoring the unmeasured confounder $U$.

Table \ref{tabsimu1} reports the simulation results for sample sizes $n = 10000$ with censoring rate $0.2$. Additional simulation results across different sample sizes and censoring levels are provided in Supplementary Materials. The simulation results are evaluated using four summary measures: the percentage finite sample bias (Bias), the empirical standard deviation over replications (SD), the average estimated standard error (SE), and the empirical coverage probability (CP) of the 95\% confidence interval. These metrics jointly assess accuracy, sampling variability, and inferential validity across the considered settings.

\begin{table}[htbp]
\centering
\caption{Performance of the proposed estimators for $n=10{,}000$ across Cases 1-4, and instrument dimensions $p=10$ and $p=20$. BIAS denotes percentage finite sample bias, SD the empirical standard deviation, SE the average estimated standard error, and CP the empirical coverage probability of the 95\% confidence interval.}
\label{tabsimu1}
\begin{tabular}{cccccccccc}
\hline
&     & \multicolumn{4}{c}{$p=10$}        & \multicolumn{4}{c}{$p=20$}               \\ \hline
& GEL & Bias      & SD   & SE   & CP    & Bias             & SD   & SE   & CP    \\ \hline
\multirow{4}{*}{Case 1} & CUE & 0.238\%   & 0.093 & 0.087 & 0.932 & -0.511\%         & 0.066 & 0.071 & 0.968 \\
& EL  & 0.236\%   & 0.091 & 0.085 & 0.926 & -0.291\%         & 0.064 & 0.066 & 0.952 \\
& ET  & 0.263\%   & 0.091 & 0.086 & 0.934 & -0.448\%         & 0.064 & 0.067 & 0.956 \\
& AFT & -24.163\% & 0.462 & 0.008 & 0.000 & -78.848\%        & 0.418 & 0.014 & 0.000 \\ \hline
\multirow{4}{*}{Case 2} & CUE & 0.618\%   & 0.082 & 0.088 & 0.976 & -0.429\%         & 0.072 & 0.072 & 0.960 \\
& EL  & 0.544\%   & 0.081 & 0.085 & 0.966 & -0.366\%         & 0.068 & 0.066 & 0.936 \\
& ET  & 0.549\%   & 0.082 & 0.086 & 0.970 & -0.328\%         & 0.070 & 0.067 & 0.944 \\
& AFT & -32.038\% & 0.554 & 0.011 & 0.000 & -96.333\%        & 0.201 & 0.017 & 0.000 \\ \hline
\multirow{4}{*}{Case 3} & CUE & -0.047\%  & 0.086 & 0.088 & 0.962 & -0.292\%         & 0.066 & 0.072 & 0.958 \\
& EL  & -0.040\%  & 0.085 & 0.086 & 0.970 & -0.198\%         & 0.063 & 0.066 & 0.950 \\
& ET  & -0.035\%  & 0.085 & 0.086 & 0.968 & -0.228\%         & 0.064 & 0.067 & 0.956 \\
& AFT & -66.156\% & 0.498 & 0.014 & 0.008 & -99.328\%        & 0.083 & 0.014 & 0.002 \\ \hline
\multirow{4}{*}{Case 4} & CUE & 0.995\%   & 0.091 & 0.088 & 0.948 & 0.063\%          & 0.070 & 0.072 & 0.960 \\
& EL  & 0.943\%   & 0.089 & 0.086 & 0.940 & 0.065\%          & 0.068 & 0.066 & 0.938 \\
& ET  & 1.006\%   & 0.089 & 0.086 & 0.942 & 0.031\%          & 0.069 & 0.067 & 0.946 \\
& AFT & -80.035\% & 0.444 & 0.017 & 0.000 & -100.005\% & 0.001 & 0.014 & 0.000 \\ \hline
\end{tabular}
\end{table}
For both $n=10000$ and $n=20000$, the three GEL estimators, CUE, EL, and ET, exhibit stable performance across all four cases. The estimated biases remain small, while the SD closely matches the SE, indicating accurate variance estimation. The empirical coverage probabilities stay near the nominal level of $0.95$ across censoring levels and instrument dimensions, demonstrating that the proposed AIPCW estimator delivers reliable inference in settings with many weak and invalid instruments. Among the three GEL variants, CUE exhibits modestly larger bias in some settings, particularly under large censoring rate, whereas EL and ET remain comparatively stable.

In contrast, the AFT estimator performs poorly across all configurations. The estimated biases are substantial and do not diminish as the sample size increases, reflecting the sensitivity of the AFT specification to unmeasured confounding. Moreover, the estimated standard errors severely understate the sampling variability, leading to empirical coverage probabilities far below the nominal level. These results suggest that the AFT model does not yield reliable causal inference in the presence of many weak interaction instruments.

To check whether at least one constructed interaction term is correlated with the exposure in the observed sample, we conduct a heteroskedasticity-robust $F$ test \citep{white1980,mackinnon1985some}. Across all simulation designs, the heteroskedasticity-robust $F$ test yields $p$-values smaller than $0.001$, indicating strong evidence that the interaction component in the first stage is nontrivial and the interaction relevance. To validate Theorem \ref{thm:overid}, we conduct an overidentification test and report the empirical type I error based on 500 simulation replications. The results are presented in the supplementary materials.

Overall, the simulation results provide clear evidence that the proposed GEL estimators maintain accuracy and inferential validity across a wide range of challenging cases, while the standard AFT approach is unable to accommodate the complexities introduced by the unobserved confounding.

\section{Application to UK Biobank data}\label{sec8}
UK Biobank is a large population-based cohort comprising roughly half a million participants recruited across the United Kingdom between 2006 and 2010. Baseline assessments include anthropometric measurements, blood and urine biomarkers, and detailed lifestyle and medical questionnaires \citep{Sudlow2015}. Longitudinal follow-up is enabled through linkage to national hospital, primary care, and mortality registries, which provide information on diagnoses, procedures, and prescription records. These data allow the construction of prospective time-to-event outcomes from routinely collected electronic health records.

In this application, we restrict attention to individuals who were at least 60 years old at baseline, for whom age related cardiovascular and gastrointestinal conditions become increasingly common.  We further limit the sample to unrelated participants of white British ancestry and apply standard quality control procedures. These procedures exclude individuals with mismatched genetic and reported sex, excessive genotype missingness, or withdrawn consent. Variants with low imputation quality are also removed \citep{liu2023,ye2024}. We consider two right-censored outcomes derived from phenotype code (phecode) based on ICD-coded hospital and primary care records \citep{dey2022}. The first outcome is ischemic heart disease (phecode 411), and the second outcome is gastritis and duodenitis (phecode 535). Their sample sizes were 78995 and 80567, with 7155 and 6320 observed events, respectively. The corresponding censoring rates were 0.909 and 0.922. For each phecode, the failure time for uncensored individuals was defined as the logarithm of the age at disease onset minus the age at enrollment into the UK Biobank. For censored individuals, the failure time was defined as the logarithm of the last observed age from hospital admissions, general practice records, or death registries minus the age at enrollment.

The exposure of interest is body mass index (BMI). For instrument selection, we identify genetic variants associated with BMI using previously established genome-wide association results \citep{locke2015genetic}. Among single nucleotide polymorphisms that reach genome-wide significance at the threshold $p<5\times10^{-8}$, we apply linkage disequilibrium clumping to obtain an approximately independent set, using an $r^{2}$ threshold $0.001$ and distance based pruning consistent with prior Mendelian randomization studies. From the resulting pool, we retain the top $15$ single nucleotide polymorphisms with the strongest marginal associations with BMI as candidate instruments. As benchmark methods, we fit two AFT models for disease onset with BMI as the covariate of interest, denoted as AFT (without covariates) and AFT (with covariates), respectively, where the latter further includes age and sex as covariates.

\begin{table}[htbp]
\centering
\caption{Estimated causal effect of BMI on two survival outcomes, with exponentiated coefficients and standard errors. $p_{F}$ and $p_{\textrm{over-id}}$ represent the $p$ value for heteroskedasticity-robust $F$ test and overidentification test, respectively.}
\label{tab:ukbresults}
\begin{tabular}{ccccc}
\hline
\multicolumn{5}{c}{Gastritis and Duodenitis}                                                                                                                  \\ \hline
                          & \multicolumn{1}{c}{$\text{exp}(\hat\beta)$} & \multicolumn{1}{c}{SE} & $p_{F}$                           & $p_{\textrm{over-id}}$ \\\hline
CUE                       & 0.8418                                      & 0.0163                 & \multirow{3}{*}{\textless{}0.001} & 0.605                  \\
EL                        & 0.8414                                      & 0.0117                 &                                   & 0.652                  \\
ET                        & 0.8365                                      & 0.0156                 &                                   & 0.796                  \\
AFT  (without covariates) & 0.9960                                      & 0.0013                 & -                                 & -                      \\
AFT  (with covariates)    & 0.9957                                      & 0.0014                 & -                                 & -                      \\ \hline
\multicolumn{5}{c}{Ischemic Heart   Disease}                                                                                                                  \\ \hline
                          & \multicolumn{1}{c}{$\text{exp}(\hat\beta)$} & \multicolumn{1}{c}{SE} & $p_{F}$                           & $p_{\textrm{over-id}}$ \\ \hline
CUE                       & 0.8370                                      & 0.1490                 & \multirow{3}{*}{\textless{}0.001} & 0.303                  \\
EL                        & 0.8431                                      & 0.0169                 &                                   & 0.893                  \\
ET                        & 0.8041                                      & 0.0351                 &                                   & 0.690                  \\
AFT  (without covariates) & 0.9771                                      & 0.0014                 & -                                 & -                      \\
AFT  (with covariates)    & 0.9763                                      & 0.0014                 & -                                 & -                      \\ \hline
\end{tabular}
\end{table}
 
Table \ref{tab:ukbresults} reports the estimated causal effects of BMI on the two time-to-event outcomes. For interpretability, the estimates are presented on the exponentiated scale, and the corresponding standard errors are obtained via the delta method. For gastritis and duodenitis, all three GEL-based estimators (CUE, EL, ET) yield highly consistent results, with estimated effects ranging from exp($\widehat\beta$) = 0.836 to 0.842. These estimates imply that a one-unit increase in the exposure multiplies the time to onset by approximately 0.84, corresponding to an average 16\% shorter time to onset on the multiplicative scale. In contrast, the AFT estimator produces an effect estimate extremely close to one, suggesting almost no detectable association, which is likely attributable to residual confounding not adequately addressed by the standard AFT specification. We also conduct two diagnostic methods to test whether the identification assumption and the moment conditions are satisfied in UK Biobank data. Both test results support the validity of the model. In Table \ref{tab:ukbresults}, $p_{F}$ denotes the $p$-value of the heteroskedasticity-robust $F$ test \citep{white1980,mackinnon1985some}, which is used to assess the interaction relevance assumption. The test results strongly rejects the null hypothesis of no interaction effects, indicating that at least one constructed interaction term is correlated with the exposure. In addition, the overidentification test does not show systematic rejection, providing no evidence against the imposed moment conditions. 

For ischemic heart disease, the pattern is similar. The GEL estimators produce multiplicative effects between exp($\widehat\beta$) = 0.804 and 0.834, indicating that each one-unit increase in the exposure corresponding to an approximately 17\% shorter time to onset. Again, the AFT model yields an estimate near one, substantially attenuated relative to the GEL-based results. This attenuation suggests that the traditional AFT model fails to fully account for unmeasured confounding, leading to underestimation of the true causal effect. The close agreement among CUE, EL, and ET further demonstrates the robustness of our proposed AIPCW-GEL framework in capturing meaningful causal signals across both outcomes.

\section{Discussion}\label{sec9}

This paper studies causal inference for time-to-event outcomes under right censoring when many candidate instruments may be weak and invalid. We propose a generalized structural accelerated failure time model that generalizes the classical SAFT framework to accommodate potential weak and invalid IVs. Within this framework, we develop a censoring-adjusted moment construction that allows for right censoring. Identification is achieved by exploiting interaction-based moment restrictions, which remain informative even under weak instrument strength and invalidity. Estimation and inference are conducted using generalized empirical likelihood, enabling the combination of many moments and delivering stable performance in the many weak instrument settings. We establish the large-sample properties of the proposed estimator, and demonstrate its finite sample and empirical performance through simulation studies and an application to the UK Biobank.

Several directions merit further investigation. First, under weak moment conditions, the existence and characterization of a variance lower bound is still unclear and warrants further investigation. Second, an important direction is to extend the proposed framework to other survival models, such as structural Cox model \citep{wang2023} and semiparametric AFT model \citep{huling2019}. More broadly, it would be of interest to develop data-adaptive procedures for selecting or weighting interaction terms in very large genetic data, with finite sample guarantees in regimes where the number of candidate interactions grows rapidly with sample size.

\bibliography{ref.bib}

\newpage
\appendix
\setcounter{section}{0}
\setcounter{condition}{2}
\setcounter{equation}{0}
\renewcommand{\thelemma}{S\arabic{lemma}}  
\renewcommand{\thesection}{S\arabic{section}}
\renewcommand{\theequation}{S\arabic{equation}}
\renewcommand{\thetable}{S\arabic{table}}
\renewcommand{\thefigure}{S\arabic{figure}}
\begin{center}
  \Large\textbf{Supplementary material for `Identification and Inference for Structural Accelerated Failure Time Models via Instrument Interactions'}
\end{center}
\section{Proof of Theorem \ref{thmidentification}}
\begin{proof}
For the first term $\frac{\delta}{G_0(Y| \Z, D)}\{g(\beta,\bLambda_0;\O)-\xi_0(\beta;Y,\Z, D)\}$, since $Y=T$ when $\delta=1$ and the whole term is 0 when $\delta=0$, we have  
\begin{align*}
&\mathbb{E}\left[\frac{\delta}{G_0(Y| \Z, D)}\{g(\beta,\bLambda_0;\O)-\xi_0(\beta;Y,\Z, D)\} \Big|T,\Z, D\right]\\
=&\mathbb{E}\left[\frac{\delta}{G_0(T| \Z, D)}\{g(\beta,\bLambda_0;\O_T)-\xi_0(\beta;T,\Z, D)\} \Big|T,\Z, D\right]\\
=& \mathbb{E}  \left[\frac{I(C\ge T)}{G_0(T| \Z, D)} \Big| T,\Z, D\right]\{g(\beta,\bLambda_0;\O_T)-\xi_0(\beta;T,\Z, D)\}\\
=& g(\beta,\bLambda_0;\O_T)-\xi_0(\beta;T,\Z, D).
\end{align*}     
For the third term $\int_{-\infty}^{Y}\frac{d\xi_0(\beta;u,\Z, D)}{G_0(u| \Z, D)}$, we have
\begin{align}\label{eq:intxi}
&\mathbb{E}  \left[\int_{-\infty}^{Y}\frac{d\xi_0(\beta;u,\Z, D)}{G_0(u| \Z, D)} \Big| T,\Z, D\right]\n\\
=&\mathbb{E}  \left[\int_{-\infty}^{\infty}I(T\ge u)I(C\ge u)\frac{d\xi_0(\beta;u,\Z, D)}{G_0(u| \Z, D)} \Big| T,\Z, D\right]\n\\
=&\int_{-\infty}^{\infty}I(T\ge u) 
\mathbb{E}  \left[\frac{I(C\ge u)}{G_0(u| \Z, D)} \Big| T,\Z, D\right] d\xi_0(\beta;u,\Z, D)\n\\
=&\int_{-\infty}^{\infty}I(T\ge u) 
 d\xi_0(\beta;u,\Z, D)\n\\
=&\int_{-\infty}^{T} d\xi_0(\beta;u,\Z, D)\n\\
=&\xi_0(\beta;T,\Z, D)-\xi_0(\beta;-\infty,\Z, D).
\end{align}
Thus
\begin{align*}
 &\mathbb{E}\psi(\beta_0,\boldeta_0;\O)\\
=&  \mathbb{E}\left(\mathbb{E}\left[\psi(\beta_0,\boldeta_0;\O)\Big|T,\Z, D\right] \right)\\
=&\mathbb{E}\left[g(\beta_0,\bLambda_0;\O_T)-\xi_0(\beta_0;T,\Z, D)+\xi_0(\beta_0;-\infty,\Z, D)+\xi_0(\beta_0;T,\Z, D)-\xi_0(\beta_0;-\infty,\Z, D)    \right]\\
=&\mathbb{E}\left[g(\beta_0,\bLambda_0;\O_T) \right]\\
=&0.
\end{align*}

Then we prove the uniqueness of $\beta$. If there exists $\beta_1$ and $\beta_2$ both satisfy $\bbE\psi(\beta,\boldeta_0;\O)=0$, then we have 
\begin{align*}
&\bbE \psi_k(\beta_1,\boldeta_0;\O) - \bbE \psi_k(\beta_2,\boldeta_0;\O)\\
=&\bbE g_k(\beta_1,\bLambda_0;\O_T) - \bbE g_k(\beta_2,\bLambda_0;\O_T)\\
=& \bbE\left[\Int_k(\Z;\bzeta_0)(Y-V_k\vartheta_{k,0}-\beta_1(D-V_k\omega_{k,0}))  \right]-\bbE\left[\Int_k(\Z;\bzeta_0)(Y-V_k\vartheta_{k,0}-\beta_2(D-V_k\omega_{k,0}))  \right]\\
=& (\beta_1-\beta_2)\bbE\left[ \Int_k(\Z;\bzeta_0)(D-V_k\omega_{k,0}) \right]\\
=&0.
\end{align*}
Since $V_k$ includes only interaction terms up to order $k-1$, while $\Int_k(\Z;\bzeta_0)$ involves $k$th-order interactions, there exists at least one centered component $(Z_j-\bzeta_{j0})$ that appears in $\Int_k(\Z;\bzeta_0)$ but not in any element of $V_k$. Under Assumption \ref{condidentification}, this unmatched centered factor is independent of $V_k$ and has mean zero, which implies $\bbE\left[\Int_k(\Z;\bzeta_0)V_k\right]=0$. Putting interaction terms of order $2$ to $q$ together, given that $\bbE\left[ \Int_{2:q}(\Z;\bzeta_0)D\right]$ has at least one non-zero entry, we have $\bbE\left[ \Int_{2:q}(\Z;\bzeta_0)(D-V_k\omega_{k,0}) \right]=\bbE\left[ \Int_{2:q}(\Z;\bzeta_0)D\right]$ also has at least one non-zero entry, which means $\beta_1=\beta_2$.
\end{proof}

\section{Proof of Theorem \ref{thmneyman}}
\begin{proof}
(i). First, we prove $\psi$ is Neyman orthogonal to $\xi$. For an admissible perturbation direction $h_{\xi}$ 
\begin{align*}
 &\frac{\partial \psi(\beta_0,\xi_0+th_{\xi};\O)}{\partial t}\Big|_{t=0}\\      
=& \frac{\partial}{\partial t}\left[\frac{\delta}{G_0(Y| \Z, D)}\{g(\beta_0,\bLambda_0;\O)-\xi_0(\beta_0;Y,\Z, D)-th_{\xi}(\beta_0;Y,\Z, D)\}\right]\Big|_{t=0}\\
&+  \frac{\partial}{\partial t}\left[\xi_0(\beta_0;-\infty,\Z, D)+th_{\xi}(\beta_0;-\infty,\Z, D)
\right]\Big|_{t=0}\\
&+  \frac{\partial}{\partial t}\left[\int_{-\infty}^{Y}\frac{d\xi_0(\beta_0;u,\Z, D)+th_{\xi}(\beta_0;u,\Z, D)}{G_0(u| \Z, D)}\right]\Big|_{t=0}\\
=& -\frac{\delta}{G_0(Y| \Z, D)}h_{\xi}(\beta_0;Y,\Z, D)+h_{\xi}(\beta_0;-\infty,\Z, D)+\int_{-\infty}^{Y}\frac{dh_{\xi}(\beta_0;u,\Z, D)}{G_0(u| \Z, D)}.
\end{align*}
Taking expectation, we have
\begin{align*}
& \mathbb{E}\frac{\partial \psi(\beta_0,\xi_0+th_{\xi};\O)}{\partial t}\Big|_{t=0}\\
=&  \mathbb{E}\left(\mathbb{E}\left[-\frac{\delta}{G_0(Y| \Z, D)}h_{\xi}(\beta_0;Y,\Z, D)+h_{\xi}(\beta_0;-\infty,\Z, D)+\int_{-\infty}^{Y}\frac{dh_{\xi}(\beta_0;u,\Z, D)}{G_0(u| \Z, D)}\Big| T,\Z,D\right]\right)\\
=& \mathbb{E}\left(-h_{\xi}(\beta_0;T,\Z, D)+ h_{\xi}(\beta_0;-\infty,\Z, D)+ h_{\xi}(\beta_0;T,\Z, D)- h_{\xi}(\beta_0;-\infty,\Z, D)   \right)\\
=& 0,
\end{align*}
where the integration term is similar to \eqref{eq:intxi}. Thus $\psi$ is Neyman orthogonal to $\xi$. 

Next, we prove $\psi$ is Neyman orthogonal to $G$.
\begin{align*}
 &\frac{\partial \psi(\beta_0,G_0+th_{G};\O)}{\partial t}\Big|_{t=0}\\      
=& \frac{\partial}{\partial t}\left[\frac{\delta}{G_0(Y| \Z, D)+th_{G}(Y| \Z, D)}\{g(\beta_0,\bLambda_0;\O)-\xi_0(\beta_0;Y,\Z, D)\}\right]\Big|_{t=0}\\
&+  \frac{\partial}{\partial t}\left[\xi_0(\beta_0;-\infty,\Z, D)+\int_{-\infty}^{Y}\frac{d\xi_0(\beta_0;u,\Z, D)}{G_0(u| \Z, D)+th_{G}(u| \Z, D)}\right]\Big|_{t=0}\\
=&-\frac{\delta h_{G}(Y| \Z, D)}{G_0(Y| \Z, D)^2}\{g(\beta_0,\bLambda_0;\O)-\xi_0(\beta_0;Y,\Z, D)\}-\int_{-\infty}^{Y}\frac{h_{G}(u| \Z, D)d\xi_0(\beta_0;u,\Z, D)}{G_0(u| \Z, D)^2}.
\end{align*}
Taking expectation, we have
\begin{align}\label{eq:partialG}
& -\mathbb{E}\frac{\partial \psi(\beta_0,G_0+th_{G};\O)}{\partial t}\Big|_{t=0}\n\\
=&  \mathbb{E}\left(\mathbb{E}\left[\frac{\delta h_{G}(Y| \Z, D)}{G_0(Y| \Z, D)^2}\{g(\beta_0,\bLambda_0;\O)-\xi_0(\beta_0;Y,\Z, D)\}\Big| T,\Z,D\right]\right)\n\\
&+  \mathbb{E}\left(\mathbb{E}\left[\int_{-\infty}^{Y}\frac{h_{G}(u| \Z, D)d\xi_0(\beta_0;u,\Z, D)}{G_0(u| \Z, D)^2}\Big| T,\Z,D\right]\right)\n\\
=& \mathbb{E}\left(\frac{ h_{G}(T| \Z, D)}{G_0(T| \Z, D)}\{g(\beta_0,\bLambda_0;\O_T)-\xi_0(\beta_0;T,\Z, D)\}\right)\n\\
&+ \mathbb{E}\left(\int_{-\infty}^{T}\frac{h_{G}(u| \Z, D)}{G_0(u| \Z, D)}d\xi_0(\beta_0;u,\Z, D)\right)\n\\
=& \mathbb{E}\left(\frac{ h_{G}(T| \Z, D)}{G_0(T| \Z, D)}\{g(\beta_0,\bLambda_0;\O_T)-\xi_0(\beta_0;T,\Z, D)\}\right)\n\\
&+ \mathbb{E}\left( \frac{ h_{G}(T| \Z, D)}{G_0(T| \Z, D)}\xi_0(\beta_0;u,\Z, D)\Big|_{0}^{T}- \int_{-\infty}^{T}  \xi_0(\beta_0;u,\Z, D) d\frac{ h_{G}(u| \Z, D)}{G_0(u| \Z, D)}\right)\n\\
=&\mathbb{E}\left(\frac{ h_{G}(T| \Z, D)}{G_0(T| \Z, D)}g(\beta_0,\bLambda_0;\O_T)-\frac{ h_{G}(T| \Z, D)}{G_0(T| \Z, D)}\xi_0(\beta_0;T,\Z, D)\right)\n\\
&+\mathbb{E}\left( \frac{ h_{G}(T| \Z, D)}{G_0(T| \Z, D)}\xi_0(\beta_0;T,\Z, D)-\frac{ h_{G}(-\infty| \Z, D)}{G_0(-\infty| \Z, D)}\xi_0(\beta_0;-\infty,\Z, D)\right.\n\\
&-\left.\int_{-\infty}^{T}  \xi_0(\beta_0;u,\Z, D) d\frac{ h_{G}(u| \Z, D)}{G_0(u| \Z, D)}\right)\n\\
=&\mathbb{E}\left(\frac{ h_{G}(T| \Z, D)}{G_0(T| \Z, D)}g(\beta_0,\bLambda_0;\O_T)-\frac{ h_{G}(-\infty| \Z, D)}{G_0(-\infty| \Z, D)}\xi_0(\beta_0;-\infty,\Z, D)\right.\n\\
&-\left.\int_{-\infty}^{T}  \xi_0(\beta_0;u,\Z, D) d\frac{ h_{G}(u| \Z, D)}{G_0(u| \Z, D)}\right).
\end{align}
Let $F(t| \Z, D)=P(T\le t| \Z, D)$ and $S(t| \Z, D)=1-F(t| \Z, D)=P(T\ge t| \Z, D)$. The definition of $\xi$ implies that 
$$\xi_0(\beta;u,\Z, D)=\mathbb{E}[g(\beta_0,\bLambda_0;\O_T)|T\geq u,\Z,D]=\frac{\int_{u}^{\infty}g(\beta; t,\Z, D) dF(t| \Z, D)}{S(u| \Z, D)}.$$
This implies
$$
d[\xi_0(\beta;u,\Z, D)S(u| \Z, D)]=-g(\beta,\bLambda_0; u,\Z, D) dF(u| \Z, D).
$$
Note that $S(-\infty|\Z,D)=1,S(\infty|\Z,D)=0$, we have
\begin{align}\label{eq:EhG}
 &\mathbb{E}\left[\frac{ h_{G}(T| \Z, D)}{G_0(T| \Z, D)}g(\beta_0,\bLambda_0;T,\Z, D)\Big| \Z,D  \right] \n \\    
=& \int_{-\infty}^{\infty}\frac{ h_{G}(u| \Z, D)}{G_0(u| \Z, D)}g(\beta,\bLambda_0; u,\Z, D) dF(u| \Z, D)\n\\
=& -\int_{-\infty}^{\infty}\frac{ h_{G}(u| \Z, D)}{G_0(u| \Z, D)} d[\xi_0(\beta;u,\Z, D)S(u| \Z, D)]\n\\
=& -\frac{ h_{G}(u| \Z, D)}{G_0(u| \Z, D)}\xi_0(\beta;u,\Z, D)S(u| \Z, D)\Big|_{-\infty}^{\infty}+\int_{-\infty}^{\infty}\xi_0(\beta;u,\Z, D)S(u| \Z, D) d\frac{ h_{G}(u| \Z, D)}{G_0(u| \Z, D)}\n\\
=& \frac{ h_{G}(-\infty| \Z, D)}{G_0(-\infty| \Z, D)}\xi_0(\beta;-\infty,\Z, D)+\int_{-\infty}^{\infty}\xi_0(\beta;u,\Z, D)S(u| \Z, D) d\frac{ h_{G}(u| \Z, D)}{G_0(u| \Z, D)}.
\end{align}
On the other hand,
\begin{align}\label{eq:intxi0}
&\mathbb{E}\left[\int_{-\infty}^{T}  \xi_0(\beta_0;u,\Z, D) d\frac{ h_{G}(u| \Z, D)}{G_0(u| \Z, D)}\Big|\Z,D\right]\n\\
=& \mathbb{E}\left[\int_{-\infty}^{\infty} I(T\geq u) \xi_0(\beta_0;u,\Z, D) d\frac{ h_{G}(u| \Z, D)}{G_0(u| \Z, D)}\Big|\Z,D\right] \n \\
=& \int_{-\infty}^{\infty} \mathbb{E}\left[I(T\geq u)|\Z,D\right] \xi_0(\beta_0;u,\Z, D) d\frac{ h_{G}(u| \Z, D)}{G_0(u| \Z, D)} \n \\
=& \int_{-\infty}^{\infty}  \xi_0(\beta_0;u,\Z, D)S(u|\Z,D) d\frac{ h_{G}(u| \Z, D)}{G_0(u| \Z, D)}.
\end{align}
Then \eqref{eq:partialG} becomes
\begin{align*}
&-\mathbb{E}\frac{\partial \psi(\beta_0,G_0+th_{G};\O)}{\partial t}\Big|_{t=0}\\
=&\mathbb{E}\left(\frac{ h_{G}(T| \Z, D)}{G_0(T| \Z, D)}g(\beta_0,\bLambda_0;\O_T)-\frac{ h_{G}(-\infty| \Z, D)}{G_0(-\infty| \Z, D)}\xi_0(\beta_0;-\infty,\Z, D)\right.\n\\
&-\left.\int_{-\infty}^{T}  \xi_0(\beta_0;u,\Z, D) d\frac{ h_{G}(u| \Z, D)}{G_0(u| \Z, D)}\right)\\
=& \mathbb{E}\left(\mathbb{E}\left[\frac{ h_{G}(T| \Z, D)}{G_0(T| \Z, D)}g(\beta_0,\bLambda_0;\O_T)-\int_{-\infty}^{\infty}I(T\geq u)  \xi_0(\beta_0;u,\Z, D) d\frac{ h_{G}(u| \Z, D)}{G_0(u| \Z, D)}\Big| \Z,D\right]\right)\\
&- \mathbb{E}\left(\frac{ h_{G}(-\infty| \Z, D)}{G_0(-\infty| \Z, D)}\xi_0(\beta_0;-\infty,\Z, D)\right)\\
=& \mathbb{E}\left(\frac{ h_{G}(-\infty| \Z, D)}{G_0(-\infty| \Z, D)}\xi_0(\beta;-\infty,\Z, D)+\int_{-\infty}^{\infty}\xi_0(\beta;u,\Z, D)S(u| \Z, D) d\frac{ h_{G}(u| \Z, D)}{G_0(u| \Z, D)}\right)\\
=& -\mathbb{E}\left(\int_{-\infty}^{\infty}\xi_0(\beta;u,\Z, D)S(u| \Z, D) d\frac{ h_{G}(u| \Z, D)}{G_0(u| \Z, D)}-\frac{ h_{G}(-\infty| \Z, D)}{G_0(-\infty| \Z, D)}\xi_0(\beta;-\infty,\Z, D)  \right)\\
=& 0.
\end{align*}
Thus $\psi$ is Neyman orthogonal to $G$.

Finally, we prove $\psi$ is Neyman orthogonal to $\vartheta_{k}$, $\omega_{k}$, and $\bzeta$.  For $\vartheta_{k}$,
\begin{align*}
&\bbE\frac{\partial \psi(\beta_0,\boldeta_0;\O)}{\partial \vartheta_{k}}\Big|_{\vartheta_{k}=\vartheta_{k,0}}\\
=&\bbE\frac{\partial \psi(\beta_0,\boldeta_0;\O)}{\partial g(\beta_0,\bLambda_0;\O)}\frac{\partial g(\beta_0,\bLambda_0;\O)}{\partial \vartheta_{k}}\Big|_{\vartheta_{k}=\vartheta_{k,0}}   \\
=&-\bbE \left[\frac{\delta}{G_0(Y|\Z,D)}\Int_k(\Z;\bzeta_0)V_k\right]\\
=&-\bbE \left[\bbE\left(\frac{\delta}{G_0(Y|\Z,D)}|T,\Z,D\right)\Int_k(\Z;\bzeta_0)V_k\right]\\
=&-\bbE \left[\Int_k(\Z;\bzeta_0)(1,\Int_{1:k-1}(\Z)\tp)\right]\\
=&0.
\end{align*}
Similarly,
$$
\bbE\frac{\partial \psi(\beta_0,\boldeta_0;\O)}{\partial \omega_{k}}\Big|_{\omega_{k}=\omega_{k,0}}=\bbE \left[\frac{\delta}{G_0(Y|\Z,D)}\Int_k(\Z;\bzeta_0)V_k\beta\right] =0.
$$
Under structural model, we have
$$
\bbE(T|\Z) = \alpha_0+ \sum_{j=1}^{k-1} \Int_{j}(\Z)\tp\alpha_j +\sum_{j=k}^{p} \Int_{j}(\Z;\bzeta)\tp\alpha_j,
$$
and
$$
\bbE(D|\Z) = \iota_0+ \sum_{j=1}^{k-1}  \Int_{j}(\Z)\tp\iota_j+\sum_{j=k}^{p} \Int_{j}(\Z;\bzeta)\tp\iota_j.
$$
Here we have centered the interaction terms of order $k$ and higher. Correspondingly, in our setup, the nuisance component is $\vartheta_{k} = (\alpha_0,\dots,\alpha_{k-1})$ and $\omega_{k} = (\iota_0,\dots,\iota_{k-1})$. This implies
$$
\bbE\left[(T-V_k\vartheta_{k})-\beta(D-V_k\omega_{k})|\Z\right]=\sum_{j=k}^{p} \Int_{j}(\Z;\bzeta)\tp(\alpha_j-\beta\iota_j)
$$
For $\bzeta$, we have 
\begin{align*}
&\bbE\frac{\partial \psi_{k}(\beta_0,\boldeta_0;\O)}{\partial \zeta_{l}}\Big|_{\zeta_{l}=\zeta_{l,0}}\\
=&\bbE\frac{\partial \psi(\beta_0,\boldeta_0;\O)}{\partial g(\beta_0,\bLambda_0;\O)}\frac{\partial g(\beta_0,\bLambda_0;\O)}{\partial \zeta_{l}}\Big|_{\zeta_{l}=\zeta_{l,0}}   \\
=&\bbE \left[\frac{\delta}{G_0(Y|\Z,D)}\frac{\partial \Int_{k}(\Z;\bzeta)}{\partial \zeta_{l}}\{(Y-V_k\vartheta_{k})-\beta(D-V_k\omega_{k})\}\right]\\
=&\bbE \left[\bbE\left(\frac{\delta}{G_0(Y|\Z,D)}|T,\Z,D\right)\frac{\partial \Int_{k}(\Z;\bzeta)}{\partial \zeta_{l}}\bbE\{(T-V_k\vartheta_{k})-\beta(D-V_k\omega_{k})|\Z\}\right]\\
=&\bbE \left[\frac{\partial \Int_{k}(\Z;\bzeta)}{\partial \zeta_{l}}\sum_{j=k}^{p} \Int_{j}(\Z;\bzeta)\tp(\alpha_j-\beta\iota_j)\right]
\end{align*}
Taking the derivative of a centered $k$-th order interaction term $\Int_{k}(\Z;\bzeta)$ with respect to $\zeta_l$ yields either zero or a sum of centered $(k-1)$-th order interaction terms that do not involve $\zeta_l$. Note that the remaining term $\Int_{j}(\Z;\bzeta)$ involves centered interactions of order at least $k$. Because we assume mutual independence among the components of $\Z$, each higher-order term must contain at least one variable that does not appear in the derivative part. Consequently, the expectation of such a product is zero. Thus $\psi$ is Neyman orthogonal to all nuisances in $\boldeta$.

(ii). When $G_0$ is correctly specified, but $\xi$ is misspecified, we have
\begin{align*}
 & \mathbb{E}\left[\frac{\delta}{G_0(Y| \Z, D)}\{g(\beta_0,\bLambda_0; \O)-\xi(\beta_0;Y,\Z, D)\}+\xi(\beta_0;-\infty,\Z, D)
+\int_{-\infty}^{Y}\frac{d\xi(\beta_0;u,\Z, D)}{G_0(u| \Z, D)}\right]  \\
=& \mathbb{E}\left[\mathbb{E}\left(\frac{\delta}{G_0(Y| \Z, D)}\{g(\beta_0,\bLambda_0; \O)-\xi(\beta_0;Y,\Z, D)\}\Big| T,\Z,D\right)\right]   \\    
&+\mathbb{E}\left[\mathbb{E}\left(\xi(\beta_0;-\infty,\Z, D)
+\int_{-\infty}^{Y}\frac{d\xi(\beta_0;u,\Z, D)}{G_0(u| \Z, D)}\Big| T,\Z,D\right)\right] \\
=& \mathbb{E}\left[\mathbb{E}\left(\frac{\delta}{G_0(T| \Z, D)}\Big| T,\Z,D\right)\{g(\beta_0,\bLambda_0; \O_T)-\xi(\beta_0;T,\Z, D)\}\right]   \\    
&+\mathbb{E}\left[\xi(\beta_0;-\infty,\Z, D)
+\int_{-\infty}^{\infty}{I(T\geq u)}\mathbb{E}\left(\frac{I(C\geq u)}{G_0(u| \Z, D)}\Big| T,\Z,D\right)d\xi(\beta_0;u,\Z, D)\right] \\
=&\mathbb{E}\left[g(\beta_0,\bLambda_0; \O_T)-\xi(\beta_0;T,\Z, D)+\xi(\beta_0;-\infty,\Z, D)+\int_{-\infty}^{T}d\xi(\beta_0;u,\Z, D) \right]   \\
=&   \mathbb{E}\left[g(\beta_0,\bLambda_0; \O_T)\right]\\
=&0.
\end{align*} 
On the other hand, when $G$ is misspecified but $\xi_0$ is correctly specified, we have
\begin{align*}
 & \mathbb{E}\left[ \frac{\delta}{G(Y| \Z, D)}\{g(\beta_0,\bLambda_0; \O)-\xi_0(\beta_0;Y,\Z, D)\} +\xi_0(\beta_0;-\infty,\Z, D) +\int_{-\infty}^{Y}\frac{d\xi_0(\beta_0;u,\Z, D)}{G(u| \Z, D)} \right]  \\
=& \mathbb{E}\left[\mathbb{E}\left(\frac{\delta}{G(Y| \Z, D)}\{g(\beta_0,\bLambda_0; \O)-\xi_0(\beta_0;Y,\Z, D)\}\Big| T,\Z,D\right)\right]   \\
&+\mathbb{E}\left[\mathbb{E}\left(\xi_0(\beta_0;-\infty,\Z, D)+\int_{-\infty}^{Y}\frac{d\xi_0(\beta_0;u,\Z, D)}{G(u| \Z, D)}\Big| T,\Z,D\right)\right] \\
=& \mathbb{E}\left[\mathbb{E}\left(\frac{\delta}{G(Y| \Z, D)}\Big| T,\Z,D\right)\{g(\beta_0,\bLambda_0; \O)-\xi_0(\beta_0;Y,\Z, D)\}\right]   \\
&+\mathbb{E}\left[\xi_0(\beta_0;-\infty,\Z, D)+\int_{-\infty}^{\infty}I(T\geq u)\mathbb{E}\left(\frac{I(C\geq u)}{G(u| \Z, D)}\Big| T,\Z,D\right)d\xi_0(\beta_0;u,\Z, D)\right] \\
=& \mathbb{E}\left[\frac{G_0(T|\Z,D)}{G(T|\Z,D)}\{g(\beta_0,\bLambda_0; \O_T)-\xi_0(\beta_0;T,\Z, D)\}+\xi_0(\beta_0;-\infty,\Z, D)\right.\\
&+\left.\int_{-\infty}^{T}\frac{G_0(u|\Z,D)}{G(u|\Z,D)} d\xi_0(\beta_0;u,\Z, D)\right]\\ 
=&\mathbb{E}\left[\frac{G_0(T|\Z,D)}{G(T|\Z,D)}g(\beta_0,\bLambda_0; \O_T)-\frac{G_0(T|\Z,D)}{G(T|\Z,D)}\xi_0(\beta_0;T,\Z, D)+\xi_0(\beta_0;-\infty,\Z, D)\right]\\
&+\mathbb{E}\left[\xi_0(\beta_0;u,\Z, D)\frac{G_0(u|\Z,D)}{G(u|\Z,D)}\Big|_{0}^{T}-\int_{-\infty}^{T}\xi_0(\beta_0;u,\Z, D)d\frac{G_0(u|\Z,D)}{G(u|\Z,D)}\right]\\
=&\mathbb{E}\left[\frac{G_0(T|\Z,D)}{G(T|\Z,D)}g(\beta_0,\bLambda_0; \O_T)+\xi_0(\beta_0;-\infty,\Z, D)\right]\\
&+\mathbb{E}\left[-\xi_0(\beta_0;-\infty,\Z, D)\frac{G_0(-\infty|\Z,D)}{G(-\infty|\Z,D)}-\int_{-\infty}^{T}\xi_0(\beta_0;u,\Z, D)d\frac{G_0(u|\Z,D)}{G(u|\Z,D)}\right].
\end{align*}
Similar to \eqref{eq:EhG}, we have
\begin{align*}
& \mathbb{E}\left[\frac{G_0(T|\Z,D)}{G(T|\Z,D)}g(\beta_0,\bLambda_0; \O_T)\Big| \Z,D\right]   \\
=& \frac{G_0(-\infty|\Z,D)}{G(-\infty|\Z,D)}\xi_0(\beta_0;-\infty,\Z, D)+\int_{-\infty}^{\infty}\xi_0(\beta_0;u,\Z, D)S(u| \Z, D) d\frac{G_0(T|\Z,D)}{G(T|\Z,D)},
\end{align*}
and 
\begin{align*}
\mathbb{E}\left[\int_{-\infty}^{T}  \xi_0(\beta_0;u,\Z, D) d\frac{G_0(u|\Z,D)}{G(u|\Z,D)}\Big|\Z,D\right]     
=&\int_{-\infty}^{\infty}  \xi_0(\beta_0;u,\Z, D)S(u|\Z,D) d\frac{G_0(u|\Z,D)}{G(u|\Z,D)}.
\end{align*}
Then \eqref{eq:partialG} becomes
\begin{align*}
&\mathbb{E}\left[\frac{G_0(T|\Z,D)}{G(T|\Z,D)}g(\beta_0,\bLambda_0; \O_T)+\xi_0(\beta_0;-\infty,\Z, D)\right]\\
&+\mathbb{E}\left[-\xi_0(\beta_0;-\infty,\Z, D)\frac{G_0(-\infty|\Z,D)}{G(-\infty|\Z,D)}-\int_{-\infty}^{T}\xi_0(\beta_0;u,\Z, D)d\frac{G_0(u|\Z,D)}{G(u|\Z,D)}\right]\\
=&\mathbb{E}\left[\mathbb{E}\left(\frac{G_0(T|\Z,D)}{G(T|\Z,D)}g(\beta_0,\bLambda_0; \O_T)+\xi_0(\beta_0;-\infty,\Z, D)\Big| \Z,D\right)\right]\\
&+\mathbb{E}\left[\mathbb{E}\left(-\xi_0(\beta_0;-\infty,\Z, D)\frac{G_0(-\infty|\Z,D)}{G(-\infty|\Z,D)}-\int_{-\infty}^{T}\xi_0(\beta_0;u,\Z, D)d\frac{G_0(u|\Z,D)}{G(u|\Z,D)}\Big| \Z,D\right)\right]\\
=& \mathbb{E}\left[\frac{G_0(-\infty|\Z,D)}{G(-\infty|\Z,D)}\xi_0(\beta_0;-\infty,\Z, D)+\int_{-\infty}^{\infty}\xi_0(\beta_0;u,\Z, D)S(u| \Z, D) d\frac{G_0(u|\Z,D)}{G(u|\Z,D)}\right]\\
&+ \mathbb{E}\left[-\frac{G_0(-\infty|\Z,D)}{G(-\infty|\Z,D)}\xi_0(\beta_0;-\infty,\Z, D)-\int_{-\infty}^{\infty}  \xi_0(\beta_0;u,\Z, D)S(u|\Z,D) d\frac{G_0(u|\Z,D)}{G(u|\Z,D)}\right]\\
&+\mathbb{E}\left[\xi_0(\beta_0;-\infty,\Z, D)\right] \\
=&\mathbb{E}\left[\xi_0(\beta_0;-\infty,\Z, D)\right]\\
=&\mathbb{E}\left[\mathbb{E}\left(g(\beta_0,\bLambda_0; \O_T)|T\geq -\infty,\Z,D\right)\right]\\
=& 0.
\end{align*}
Thus, $\mathbb{E}\psi=0$ either $\xi_0$ or $G_0$ is correctly specified.

\end{proof}

\begin{remark} \label{remark1}
From Theorem \ref{thmneyman}, we can deduce that $\psi'$ is also Neyman orthogonal to $\boldeta$. 
$$
\psi_{k}'(\boldeta_0;\O)=\frac{\delta}{G_0(Y| \Z, D)}\{g_{k}'(\bLambda_0;\O)-\xi_{k,0}'(Y,\Z, D)\}+\xi_{k,0}'(-\infty,\Z, D)
+\int_{-\infty}^{Y}\frac{d\xi_{k,0}'(u,\Z, D)}{G_0(u| \Z, D)},
$$
where
$$
g_{k}'(\O) = -\Int_k(\Z;\bzeta)(D-V_k\omega_{k}).
$$
Because $g_k$ and $\xi_k$ is linear in $\beta$, $g_{k}'(\bLambda_0;\O)$ and $\xi_{k,0}'(0,\Z, D)$ do not rely on $\beta$. Also the nuisances in $\psi_{k}'(\boldeta_0;\O)$ do not contain $\vartheta_{k}$. For $\xi$ and $G$, the proof in Theorem \ref{thmneyman} does not rely on the specific form of $g$, and thus the same result holds if we replace $g$ with $g'$. For $\omega_{k}$ and $\zeta$, we have
$$
\bbE\frac{\partial \psi'(\boldeta_0;\O)}{\partial \omega_{k}}\Big|_{\omega_{k}=\omega_{k,0}}=\bbE \left[\frac{\delta}{G_0(Y|\Z,D)}\Int_k(\Z;\bzeta_0)V_k\right] =0,
$$
and
\begin{align*}
\bbE\frac{\partial \psi'(\boldeta_0;\O)}{\partial \zeta_{l}}\Big|_{\zeta_{l}=\zeta_{l,0}}=\bbE \left[\frac{\partial \Int_{j}(\Z;\bzeta)}{\partial \zeta_{l}}\sum_{j=k}^{p} \Int_{j}(\Z;\bzeta)\tp\alpha_j\right]=0.
\end{align*}
Therefore, $\psi'$ is also Neyman orthogonal to $\boldeta$.
\end{remark}

\section{Proof of Theorem \ref{thmnormality}}\label{suppsecpfthmnormality} 
We first introduce some useful notation. Let $\bar \psi(\beta,\boldeta) = 1/n \sum_{i=1}^n \psi(\beta,\boldeta;\O_i)$, $\bar \Omega(\beta,\boldeta) = 1/n \sum_{i=1}^n \psi(\beta,\boldeta;\O_i)\psi(\beta,\boldeta;\O_i)\tp$, $\Omega(\beta,\boldeta) = \bbE \psi(\beta,\boldeta;\O)\psi(\beta,\boldeta;\O)\tp$. Let $\bbP$ denote the true probability measure and $\bbP_n$ the empirical measure based on $n$ observations. Define $D_{\boldeta} f (\boldeta_0)$ to be the pathway derivative of $f$ at the direction $\boldeta_0$.

Before introducing lemmas and the proof of Theorems, we introduce some additional regularity conditions.

\begin{condition}[Bounded higher-order moments]\label{condmoment}
(i). $\bbE(T^8), \bbE(D^8)$, and $\Z$ are bounded;\\
(ii). $\{\bbE(\|\psi(\beta_0,\boldeta_0;\O)\|^4)+\bbE(\|\psi^{'}(\boldeta_0;\O)\|^4)\}m/n$=o(1);\\
(iii). There exists $ l>2$ such that $n^{1/ l}\{\bbE(\sup_{\beta\in \calB}\|\psi(\beta,\boldeta_0;\O)\|^ l)\}^{1/ l}(m+\mu_n)/\sqrt{n}\rightarrow 0$. 
\end{condition}
\begin{condition}[Bounded eigenvalue] \label{condeigen}
There exists a positive constant $c$ such that $ 1/c\leq \lambda_{\min}(\Omega(\beta,\boldeta_0)) < \lambda_{\max}(\Omega(\beta,\boldeta_0))\leq c $ for all $\beta\in\calB$, and $\lambda_{\max}\bbE(\psi^{'}(\boldeta_0;\O)\psi^{'}(\boldeta_0;\O)\tp)\leq c $.
\end{condition}

Condition~\ref{condmoment} (i) and Condition~\ref{condeigen} impose basic regularity conditions ensuring that the key observable quantities and matrices remain well behaved. In particular, they require bounded eighth moments for $T$ and $D$, bounded support for $\Z$, and uniformly bounded eigenvalues of the covariance matrix $\Omega(\beta,\boldeta_0)$. These conditions guarantee that the moment functions and associated weighting matrices do not degenerate as the sample size increases. Condition~\ref{condmoment} (ii) and (iii) place additional restrictions on the AIPCW moment function $\psi(\beta,\boldeta;\O)$ and its derivative. These constraints are especially relevant in settings where the number of moment functions $m$ increases with $n$ or when the interactions generate weak identifying information. They ensure that the empirical processes generated by the score and its derivative remain sufficiently well controlled so that the stochastic expansions used in establishing the asymptotic distribution remain valid. Similar types of moment and growth conditions are standard in the weak IV and many instrument GEL literature; see, for example, Assumption 6 in \citet{newey2009} and Assumptions 5 and 6 in \citet{ye2024}.

\begin{condition}[GEL function] \label{condgel}
(i). $\rho(\cdot)$ is concave and three times continuously differentiable; (ii). $\rho(0)=0$, and $\rho^{'}(0)=\rho^{''}(0)=-1$.
\end{condition}
Condition~\ref{condgel} (i) requires the GEL function $\rho(\cdot)$ to be concave and three-times continuously differentiable, which provides the smoothness needed for a third-order Taylor expansion of the criterion function. Condition~\ref{condgel} (ii) is a normalization condition that sets the scale and centering of the GEL objective at zero by imposing $\rho(0)=0$ and $\rho'(0)=\rho''(0)=-1$. These properties are standard in the GEL literature and are consistent with the regularity conditions imposed in \citet{newey2004}.

\begin{lemma}\label{lemma1a}
Under Conditions \ref{condweak}, there exists a positive constant $c>0$
such that for all $\beta\in\calB$,
$$
|\beta-\beta_0|\le c\,\frac{\sqrt{n}}{\mu_n}\|\bbE\psi(\beta,\boldeta_0;\O)\|.
$$
\end{lemma}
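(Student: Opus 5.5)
The plan is to exploit that $\psi(\beta,\boldeta_0;\O)$ is affine in $\beta$. This is inherited from $g_k$, since $\xi_k$ is a conditional expectation of $g_k$ and all augmentation terms are linear in $g_k$. Hence
\[
\bbE\psi(\beta,\boldeta_0;\O)=\bbE\psi(\beta_0,\boldeta_0;\O)+(\beta-\beta_0)\,\bbE\psi'(\boldeta_0;\O)=(\beta-\beta_0)\psi^*,
\]
where the first term vanishes by Theorem \ref{thmidentification}. Taking norms gives the exact identity
\[
\|\bbE\psi(\beta,\boldeta_0;\O)\|=|\beta-\beta_0|\,\|\psi^*\|.
\]
The lemma therefore reduces to a lower bound of the form $\|\psi^*\|\ge c^{-1}\mu_n/\sqrt n$.

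To get this bound, I would use Condition \ref{condweak}(ii) together with the eigenvalue bound of Condition \ref{condeigen} applied at $\beta=\beta_0$. Since $\lambda_{\min}(\Omega_0)\ge 1/c_1$, we have $\lambda_{\max}(\Omega_0^{-1})\le c_1$, and so
\[
\mu_n^2 c\le n\,\psi^{*\top}\Omega_0^{-1}\psi^*\le n\,c_1\|\psi^*\|^2 .
\]
This gives $\|\psi^*\|\ge \sqrt{c/c_1}\,\mu_n/\sqrt n$. Combining with the identity above yields
\[
|\beta-\beta_0|\le \sqrt{c_1/c}\,\frac{\sqrt n}{\mu_n}\|\bbE\psi(\beta,\boldeta_0;\O)\|
\]
for all $\beta\in\calB$, uniformly. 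This is in fact an equality up to constants, with no compactness argument required.

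The main step to check carefully is the affine structure at $\boldeta_0$. Here $\xi_{k,0}(\beta;u,\Z,D)$ is linear in $\beta$ by linearity of conditional expectation, because $g_k=\Int_k(\Z;\bzeta)\{(T-V_k\vartheta_k)-\beta(D-V_k\omega_k)\}$. The map $\xi\mapsto\psi$ is also linear for fixed $G$, so $\psi'$ is free of $\beta$, as noted in Remark \ref{remark1}.

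A minor point is that the lemma cites only Condition \ref{condweak}, while the argument uses the bounded eigenvalue condition (Condition \ref{condeigen}). If one insists on using Condition \ref{condweak} alone, the bound $\lambda_{\min}(\Omega_0)$ bounded away from zero still has to come from somewhere. I would therefore state explicitly that Condition \ref{condeigen} is invoked for this step.
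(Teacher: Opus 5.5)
Your proof is correct and takes the same route as the paper. Both use that $\psi$ is affine in $\beta$ at $\boldeta_0$ and that $\bbE\psi(\beta_0,\boldeta_0;\O)=0$ to get $\|\bbE\psi(\beta,\boldeta_0;\O)\|=|\beta-\beta_0|\,\|\psi^*\|$, and then bound $\|\psi^*\|$ below by a constant times $\mu_n/\sqrt n$. The paper simply asserts that this lower bound follows from Condition \ref{condweak}, so your explicit use of $\lambda_{\min}(\Omega_0)\ge 1/c$ from Condition \ref{condeigen} correctly supplies a step the paper leaves implicit.
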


\begin{proof}
From the linear expansion of $\bbE\psi(\beta,\boldeta_0;\O)$ around $\beta_0$,
$$
\bbE\psi(\beta,\boldeta_0;\O)-\bbE\psi(\beta_0,\boldeta_0;\O)
=(\beta-\beta_0)\psi^*,
$$
Since $\bbE\psi(\beta_0,\boldeta_0;\O)=0$, we have $\bbE\psi(\beta,\boldeta_0;\O)=(\beta-\beta_0)\psi^*$.
Hence,
$$
\frac{\sqrt{n}}{\mu_n}\|\bbE\psi(\beta,\boldeta_0;\O)\|
=\frac{\sqrt{n}}{\mu_n}|\beta-\beta_0|\|\psi^*\|.
$$
Condition \ref{condweak} implies that $\sqrt{n}\|\psi^*\|/\mu_n$ is bounded away from
zero and infinity. Therefore, there exists $c>0$ such that
$$
|\beta-\beta_0|\le c\,\frac{\sqrt{n}}{\mu_n}\|\bbE\psi(\beta,\boldeta_0;\O)\|.
$$
\end{proof}


\begin{lemma}\label{lemma2}
Under Conditions \ref{condweak}-\ref{condeigen}, there exists a constant $c>0$
and a random variable $\hat M=O_p(1)$ such that for all $\beta',\beta\in\calB$,
$$
\frac{\sqrt{n}}{\mu_n}\|\bbE\psi(\beta',\boldeta_0;\O)-\bbE\psi(\beta,\boldeta_0;\O)\|
\le c|\beta'-\beta|,
$$
and
$$
\frac{\sqrt{n}}{\mu_n}\|\bar\psi(\beta',\boldeta_0)-\bar\psi(\beta,\boldeta_0)\|
\le \hat M|\beta'-\beta|.
$$
\end{lemma}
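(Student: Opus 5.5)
The whole argument rests on one structural fact: $\psi(\beta,\boldeta_0;\O)$ is affine in $\beta$. Indeed $g_k(\beta,\bLambda_k;\O_T)$ is linear in $\beta$, so $\xi_k(\beta;u,\Z,D)$, as a conditional expectation of $g_k$, is also linear in $\beta$. The AIPCW map in \eqref{eq:psik} is linear in the pair $(g_k,\xi_k)$ for fixed $G$. Hence
\[
\psi(\beta,\boldeta_0;\O)=\psi(\beta_0,\boldeta_0;\O)+(\beta-\beta_0)\psi'(\boldeta_0;\O),
\]
where $\psi'(\boldeta_0;\O)$ does not depend on $\beta$, as noted in Remark~\ref{remark1}. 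I will therefore reduce both inequalities to bounds on the norms of a fixed vector and its sample analogue, multiplied by $|\beta'-\beta|$.

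\textbf{First inequality.} Taking expectations gives
\[
\bbE\psi(\beta',\boldeta_0;\O)-\bbE\psi(\beta,\boldeta_0;\O)=(\beta'-\beta)\psi^*.
\]
So it suffices to show that $\sqrt n\|\psi^*\|/\mu_n$ is bounded above. By Condition~\ref{condeigen}, $\lambda_{\max}(\Omega_0)\le c$, so $\lambda_{\min}(\Omega_0^{-1})\ge 1/c$. Then
\[
n\|\psi^*\|^2\le c\,n\psi^{*\top}\Omega_0^{-1}\psi^*\le c\,c'\mu_n^2
\]
by Condition~\ref{condweak}(ii). This is exactly the bound already used in Lemma~\ref{lemma1a}.

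\textbf{Second inequality.} Similarly,
\[
\bar\psi(\beta',\boldeta_0)-\bar\psi(\beta,\boldeta_0)=(\beta'-\beta)\bar\psi',
\qquad \bar\psi'=n^{-1}\sum_{i=1}^n\psi'(\boldeta_0;\O_i).
\]
So I take $\hat M=\sqrt n\|\bar\psi'\|/\mu_n$ and need to show $\hat M=O_p(1)$. Write $\bar\psi'=\psi^*+(\bar\psi'-\psi^*)$. The first piece contributes $O(1)$ by the previous step. For the centered piece,
\[
\bbE\|\bar\psi'-\psi^*\|^2\le n^{-1}\bbE\|\psi'(\boldeta_0;\O)\|^2=n^{-1}\operatorname{tr}\bbE\{\psi'\psi'^{\top}\}\le cm/n,
\]
using Condition~\ref{condeigen}. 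Hence $\sqrt n\|\bar\psi'-\psi^*\|/\mu_n=O_p(\sqrt{m}/\mu_n)$. This is $O_p(1)$ because $m/\mu_n^2$ is bounded by Condition~\ref{condweak}(ii). Since neither bound depends on $\beta$ or $\beta'$, the inequalities hold uniformly over $\calB$.

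\textbf{Main obstacle.} The only delicate step is justifying linearity of $\psi$ in $\beta$ through the augmentation terms, in particular the Stieltjes integral $\int d\xi_k/G$. I will argue this by writing $\xi_k(\beta;u,\Z,D)=\xi_k^{(0)}(u,\Z,D)+\beta\,\xi_k^{(1)}(u,\Z,D)$, with the two parts being the conditional expectations of the $\beta$-free and $\beta$-coefficient parts of $g_k$. Linearity of the integral in its integrator then carries the decomposition through. The finiteness of $\bbE\|\psi'\|^2$ needed above follows from Assumption~\ref{condposint} together with the moment bounds in Condition~\ref{condmoment}.
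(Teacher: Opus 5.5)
Your proof is correct. For the first inequality it matches the paper's argument: linearity gives $(\beta'-\beta)\psi^*$, and then $\sqrt n\|\psi^*\|/\mu_n$ is bounded. You are more careful than the paper in noting that this upper bound needs $\lambda_{\max}(\Omega_0)\le c$ from Condition~\ref{condeigen}, not Condition~\ref{condweak} alone.

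For the second inequality your route differs. The paper writes $\bar\psi(\beta',\boldeta_0)-\bar\psi(\beta,\boldeta_0)$ as the population difference plus two centered terms. It bounds the centered terms by $\sup_{\beta\in\calB}\|\bar\psi(\beta,\boldeta_0)-\bbE\psi(\beta,\boldeta_0;\O)\|=O_p(\sqrt{m/n})$ through a uniform law of large numbers, and arrives at $c|\beta'-\beta|+O_p(1)$. That bound is additive, and the paper's final step to $\hat M|\beta'-\beta|$ does not follow from it when $|\beta'-\beta|$ is small. You instead use linearity a second time, so the centered part is exactly $(\beta'-\beta)(\bar\psi'-\psi^*)$. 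This gives the explicit choice $\hat M=\sqrt n\|\bar\psi'\|/\mu_n$ and requires only Chebyshev's inequality with $\bbE\|\bar\psi'-\psi^*\|^2\le n^{-1}\tr\,\bbE\{\psi'(\boldeta_0;\O)\psi'(\boldeta_0;\O)^{\top}\}\le cm/n$; no uniform law is needed.

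The paper's decomposition is the template one would use for moments that are nonlinear in $\beta$. There it would need an extra Lipschitz-envelope condition to produce the multiplicative form. Your argument exploits the exact affine structure available here and actually delivers the inequality as stated.

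Incidentally, the obstacle you flag is vacuous. Since $\Z$ and $D$ are conditioned on, the $\beta$-coefficient of $\xi_k$ is $-\Int_k(\Z;\bzeta)(D-V_k\omega_k)$, which is constant in $u$. So the Stieltjes term contributes nothing to $\psi'$, and $\psi'_k$ reduces to $-\Int_k(\Z;\bzeta_0)(D-V_k\omega_{k,0})$.
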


\begin{proof}
By the linearity of $\bbE\psi(\beta,\boldeta_0;\O)$ in $\beta$, we have
$$
\bbE\psi(\beta',\boldeta_0;\O)-\bbE\psi(\beta,\boldeta_0;\O)=(\beta'-\beta)\psi^*,
$$
which yields
$$
\frac{\sqrt{n}}{\mu_n}\|\bbE\psi(\beta',\boldeta_0;\O)-\bbE\psi(\beta,\boldeta_0;\O)\|
=\frac{\sqrt{n}}{\mu_n}|\beta'-\beta|\|\psi^*\|
\le c|\beta'-\beta|.
$$
Moreover,
\begin{align*}
\bar\psi(\beta',\boldeta_0;\O)-\bar\psi(\beta,\boldeta_0;\O)
&=\{\bbE\psi(\beta',\boldeta_0;\O)-\bbE\psi(\beta,\boldeta_0;\O)\}\\
&\quad+\{\bar\psi(\beta',\boldeta_0)-\bbE\psi(\beta',\boldeta_0;\O)\}
-\{\bar\psi(\beta,\boldeta_0)-\bbE\psi(\beta,\boldeta_0;\O)\}.
\end{align*}
Hence,
$$
\frac{\sqrt{n}}{\mu_n}\|\bar\psi(\beta',\boldeta_0)-\bar\psi(\beta,\boldeta_0)\|
\le c|\beta'-\beta|
+\frac{\sqrt{n}}{\mu_n}\sup_{\beta\in\calB}
\|\bar\psi(\beta,\boldeta_0)-\bbE\psi(\beta,\boldeta_0;\O)\|.
$$
By a uniform law of large numbers and Conditions \ref{condweak}--\ref{condeigen},
$$
\sup_{\beta\in\calB}\|\bar\psi(\beta,\boldeta_0)-\bbE\psi(\beta,\boldeta_0;\O)\|
=O_p\!\left(\sqrt{\frac{m}{n}}\right),
$$
and since $m/\mu_n^2$ is bounded,
$$
\frac{\sqrt{n}}{\mu_n}\sup_{\beta\in\calB}
\|\bar\psi(\beta,\boldeta_0)-\bbE\psi(\beta,\boldeta_0;\O)\|=O_p(1).
$$
Therefore, there exists a $\hat M=O_p(1)$ such that for all $\beta',\beta\in\calB$,
$$
\frac{\sqrt{n}}{\mu_n}\|\bar\psi(\beta',\boldeta_0)-\bar\psi(\beta,\boldeta_0)\|
\le \hat M|\beta'-\beta|,
$$
which completes the proof.
\end{proof}


\begin{lemma}\label{lemma1c}
Under Condition \ref{condmoment}, there exists a positive constant $c$ such that
for all $a,b\in\mathbb R^m$ and $\beta',\beta\in\calB$,
$$
|a\tp\{\Omega(\beta',\boldeta_0)-\Omega(\beta,\boldeta_0)\}b|
\le c\|a\|\|b\||\beta'-\beta|.
$$
\end{lemma}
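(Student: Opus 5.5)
The plan is to exploit the fact that $\psi(\beta,\boldeta_0;\O)$ is affine in $\beta$. Both $g_k$ and $\xi_k$ are linear in $\beta$, and the AIPCW operator in \eqref{eq:psik} is linear in $(g_k,\xi_k)$. So I would first write
$$
\psi(\beta,\boldeta_0;\O)=\psi(\beta_0,\boldeta_0;\O)+(\beta-\beta_0)\psi'(\boldeta_0;\O),
$$
where $\psi'(\boldeta_0;\O)$ does not depend on $\beta$; this is the expression recorded in Remark~\ref{remark1}. Write $\psi_0=\psi(\beta_0,\boldeta_0;\O)$ and $\psi'=\psi'(\boldeta_0;\O)$. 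Then
$$
\Omega(\beta,\boldeta_0)=\bbE\psi_0\psi_0\tp+(\beta-\beta_0)\bigl\{\bbE\psi_0\psi'^{\top}+\bbE\psi'\psi_0\tp\bigr\}+(\beta-\beta_0)^2\,\bbE\psi'\psi'^{\top}.
$$
This expresses $\Omega$ as a quadratic polynomial in $\beta$ with matrix coefficients that do not depend on $\beta$.

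Next, for $\beta,\beta'\in\calB$, I difference this expression:
$$
\Omega(\beta',\boldeta_0)-\Omega(\beta,\boldeta_0)=(\beta'-\beta)\bigl\{\bbE\psi_0\psi'^{\top}+\bbE\psi'\psi_0\tp\bigr\}+(\beta'-\beta)(\beta'+\beta-2\beta_0)\,\bbE\psi'\psi'^{\top}.
$$
Since $\calB$ is compact, $|\beta'+\beta-2\beta_0|\le 2\,\mathrm{diam}(\calB)$. It therefore suffices to bound the bilinear forms $a\tp Mb$ uniformly by $c\|a\|\|b\|$ for $M\in\{\bbE\psi_0\psi'^{\top},\bbE\psi'\psi_0\tp,\bbE\psi'\psi'^{\top}\}$. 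This amounts to bounding their operator norms by a constant.

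For $\bbE\psi'\psi'^{\top}$, the operator norm is its largest eigenvalue, which Condition~\ref{condeigen} bounds by $c$. For the cross term, Cauchy--Schwarz gives
$$
|a\tp\bbE(\psi_0\psi'^{\top})b|\le\{\bbE(a\tp\psi_0)^2\}^{1/2}\{\bbE(b\tp\psi')^2\}^{1/2}=(a\tp\Omega_0a)^{1/2}\bigl(b\tp\bbE\psi'\psi'^{\top}b\bigr)^{1/2}\le c\|a\|\|b\|.
$$
This again uses the eigenvalue bounds of Condition~\ref{condeigen} at $\beta=\beta_0$. Collecting the constants gives the claim.

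The main obstacle is the attribution: the statement cites only Condition~\ref{condmoment}. If only Condition~\ref{condmoment} may be used, the uniform constant has to come from moment bounds, and dimension-free control is the delicate part. Boundedness of $\Z$ and the eighth moments of $T,D$ bound each coordinate of $\psi_0$ and $\psi'$ in $L_2$. Together with the positivity bound $G_0\ge c$ from Assumption~\ref{condposint} and boundedness of $\xi$, this handles $\delta/G_0$ and the augmentation integral. However, coordinate-wise bounds alone give operator norms growing with $m$. So I would invoke the eigenvalue bounds of Condition~\ref{condeigen}, as in Lemma~\ref{lemma2}, to keep $c$ independent of $m$, and note that the lemma is applied only under Conditions~\ref{condweak}--\ref{condeigen} jointly.
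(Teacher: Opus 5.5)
Your proposal is correct and follows essentially the same route as the paper: both use that $\psi(\cdot,\boldeta_0;\O)$ is affine in $\beta$, expand $\Omega(\beta',\boldeta_0)-\Omega(\beta,\boldeta_0)$ into a $\bbE\{\psi'(\boldeta_0;\O)\psi'(\boldeta_0;\O)\tp\}$ term plus cross-moment terms, bound their operator norms, and absorb the extra factor of $\beta$ using compactness of $\calB$. Your version is slightly more careful than the paper's: expanding about $\beta_0$ makes the coefficient matrices free of $\beta$, and your Cauchy--Schwarz step shows that the $m$-free constant comes from the eigenvalue bounds in Condition~\ref{condeigen}, not from Condition~\ref{condmoment} alone as the lemma statement and the paper's one-line justification suggest.
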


\begin{proof}
By the linearity of $\psi(\beta,\boldeta_0;\O)$ in $\beta$,
$$
\psi(\beta',\boldeta_0;\O)
=\psi(\beta,\boldeta_0;\O)+(\beta'-\beta)\psi'(\boldeta_0;\O),
$$
Then,
\begin{align*}
\Omega(\beta',\boldeta_0)-\Omega(\beta,\boldeta_0)
&=\bbE\{\psi(\beta',\boldeta_0)\psi(\beta',\boldeta_0)\tp
-\psi(\beta,\boldeta_0)\psi(\beta,\boldeta_0)\tp\}\\
&=(\beta'-\beta)^2\bbE\{\psi'(\boldeta_0)\psi'(\boldeta_0)\tp\}\\
&\quad+(\beta'-\beta)\bbE\{\psi'(\boldeta_0)\psi(\beta,\boldeta_0)\tp\}
+(\beta'-\beta)\bbE\{\psi(\beta,\boldeta_0)\psi'(\boldeta_0)\tp\}.
\end{align*}
Hence, for any $a,b\in\mathbb R^m$,
\begin{align*}
|a\tp\{\Omega(\beta',\boldeta_0)-\Omega(\beta,\boldeta_0)\}b|
&\le|\beta'-\beta|^2\|a\|\|\bbE(\psi'(\beta_0,\boldeta_0)\psi'(\beta_0,\boldeta_0)\tp)\|\|b\|\\
&\quad+2|\beta'-\beta|\|a\|\|\bbE(\psi'(\beta_0,\boldeta_0)\psi(\beta_0,\boldeta_0)\tp)\|\|b\|.
\end{align*}
Condition \ref{condmoment} ensures the matrix $\bbE(\psi'(\beta_0,\boldeta_0)\psi(\beta_0,\boldeta_0)\tp)$ have uniformly bounded spectral norms, and the parameter space $\calB$ of $\beta$ is a compact set, so there exists a constant $c>0$ such that  $|a\tp\{\Omega(\beta',\boldeta_0)-\Omega(\beta,\boldeta_0)\}b|\le c\|a\|\|b\||\beta'-\beta|.$
\end{proof}




\begin{lemma}\label{lemmapsi}
Under Condition \ref{condweak}-\ref{condeigen},
(i). $\left\|\bar{\psi}\left(\beta_0, \hat{\boldeta}\right)-\bar{\psi}\left(\beta_0, \boldeta_0\right)\right\|=o_p(n^{-1/2})$;

(ii). $\left\|\bar{\psi'}(\hat{\boldeta})-\bar{\psi'}\left(\boldeta_0\right)\right\|=o_p(n^{-1/2})$;

(iii). $\sup _{\beta \in \calB}\left\|\bar{\psi}(\hat{\boldeta})-\bar{\psi}\left(\beta,\boldeta_0\right)\right\|=o_p(n^{-1/2})$;   

(iv). $\sup_{\beta\in\calB}\|\bar \psi(\beta,\hat\boldeta)\|=o_p(\mu_n/\sqrt{n})$, $\|\bar \psi(\beta_0,\hat\boldeta)\|=o_p(\sqrt{m/n})$.   
\end{lemma}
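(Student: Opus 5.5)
The plan is to follow the standard double/debiased machine learning argument. We expand $\bar\psi(\beta_0,\hat\boldeta)-\bar\psi(\beta_0,\boldeta_0)$ around $\boldeta_0$ and split it into an empirical-process term, a first-order (pathwise derivative) term and a second-order remainder. Cross-fitting makes $\hat\boldeta$ independent of the evaluation fold. Working fold by fold and conditioning on the auxiliary fold, we write
\[
\bar\psi(\beta_0,\hat\boldeta)-\bar\psi(\beta_0,\boldeta_0)=(\bbP_n-\bbP)\{\psi(\beta_0,\hat\boldeta;\O)-\psi(\beta_0,\boldeta_0;\O)\}+\bbP\{\psi(\beta_0,\hat\boldeta;\O)-\psi(\beta_0,\boldeta_0;\O)\}.
\]

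\textbf{Empirical-process term.} Conditionally on the auxiliary fold, this term has mean zero. Its conditional second moment is $n^{-1}\bbE\|\psi(\beta_0,\hat\boldeta;\O)-\psi(\beta_0,\boldeta_0;\O)\|^2$. We bound this coordinate-wise using the explicit form \eqref{eq:psik}:
\begin{itemize}
\item Assumption~\ref{condposint} keeps $G_0$ and $\hat G$ bounded below on $(-\infty,\tau]$ with probability tending to one.
\item Condition~\ref{condmoment}(i) gives a bounded $\Z$ and finite moments of $T$ and $D$.
\end{itemize}
Each coordinate difference is then bounded by a constant times $\|\hat G-G_0\|+\|\hat\xi-\xi_0\|+\|\hat\bLambda-\bLambda_0\|$. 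The integral term $\int d\xi/G$ needs an integration-by-parts step so that it is controlled by sup and $L_2$ norms of $\hat\xi-\xi_0$. Summing over the $m$ coordinates and applying Condition~\ref{condkm}, the squared norm is $o_p(m\cdot n^{-1/2}m^{-1/2})$. The empirical-process term is therefore $o_p(n^{-1/2})$ once $m/\sqrt n$ is controlled, which holds by $m^3/n\to0$. Chebyshev's inequality finishes this step.

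\textbf{Bias term.} We Taylor-expand $t\mapsto \bbP\psi(\beta_0,\boldeta_0+t(\hat\boldeta-\boldeta_0))$. The first derivative at $t=0$ vanishes by Theorem~\ref{thmneyman}(i). The remainder is controlled by the second pathwise derivative. For the $(G,\xi)$ cross term, the second derivative is bilinear of the form $\bbE[(\hat G-G_0)/G_0^2\cdot d(\hat\xi-\xi_0)]$, so it is bounded by $\|\hat G-G_0\|\|\hat\xi-\xi_0\|$. The pure $G$ term is handled by the exact double robustness identity of Theorem~\ref{thmneyman}(ii). With $\xi=\xi_0$, $\bbE\psi=0$ for every $G$, so the only genuine remainder is the product term. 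Per coordinate this is $o_p(n^{-1/2}m^{-1/2})$, and summing $m$ squared coordinates gives norm $o_p(n^{-1/2})$. The $\bLambda$ components ($\hat\bzeta,\hat\vartheta_k,\hat\omega_k$) are parametric and estimated at rate $O_p(n^{-1/2})$. Since $g_k$ is polynomial in $\bzeta$ and linear in $(\vartheta_k,\omega_k)$, orthogonality kills the linear part and the quadratic part is $O_p(n^{-1})$ per coordinate. This gives (i).

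\textbf{Parts (ii)--(iv).} Part (ii) is identical, with $\psi'$ in place of $\psi$, using Remark~\ref{remark1} for orthogonality of $\psi'$. Part (iii) follows from (i) and (ii) because $\psi$ is linear in $\beta$: $\psi(\beta,\boldeta)=\psi(\beta_0,\boldeta)+(\beta-\beta_0)\psi'(\boldeta)$. Hence
\[
\sup_{\beta\in\calB}\|\bar\psi(\beta,\hat\boldeta)-\bar\psi(\beta,\boldeta_0)\|\le \|{\rm(i)}\|+\sup_{\beta\in\calB}|\beta-\beta_0|\,\|{\rm(ii)}\|,
\]
and $\calB$ is compact. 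For (iv), we combine (iii) with $\sup_\beta\|\bar\psi(\beta,\boldeta_0)-\bbE\psi(\beta,\boldeta_0)\|=O_p(\sqrt{m/n})$ from the proof of Lemma~\ref{lemma2}. We also use $\|\bbE\psi(\beta,\boldeta_0)\|=|\beta-\beta_0|\|\psi^*\|=O(\mu_n/\sqrt n)$ from Condition~\ref{condweak}(ii) together with Condition~\ref{condeigen}. These give $\sup_\beta\|\bar\psi(\beta,\hat\boldeta)\|=O_p(\mu_n/\sqrt n)$, since $m/\mu_n^2$ is bounded. At $\beta_0$ the mean is zero, so $\|\bar\psi(\beta_0,\hat\boldeta)\|\le O_p(\sqrt{m/n})+o_p(n^{-1/2})$.

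A caveat: the stated $o_p$ rather than $O_p$ in (iv) is not literally delivered by this argument. I expect to obtain $O_p$, which is what the downstream consistency argument uses, or $o_p$ only after an additional normalization. I would flag this rather than force it.

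\textbf{Main obstacle.} The hardest step is the second-order remainder for the $(G,\xi)$ pair. The Stieltjes integral $\int_{-\infty}^{Y} d\hat\xi/\hat G$ involves the differential of an estimated function, so $L_2$ rates alone do not directly bound it. I would first integrate by parts, as in \eqref{eq:EhG}, to move the differential onto $1/\hat G-1/G_0$. I would then use the bounded-variation and positivity properties on $(-\infty,\tau]$ to reduce the remainder to the product $\|\hat G-G_0\|\|\hat\xi-\xi_0\|$. Keeping the constant uniform over the $m$ coordinates is where Condition~\ref{condmoment} and the boundedness of $\Z$ are needed.
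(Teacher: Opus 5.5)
Your proposal is correct and follows the paper's proof essentially step for step: a cross-fitted split into an $(\bbP_n-\bbP)$ term controlled by Chebyshev and a $\bbP$ term whose first-order part vanishes by Neyman orthogonality, then linearity in $\beta$ for (iii) and the triangle inequality for (iv). Your double-robustness reduction of the bias to the $(G,\xi)$ cross term sharpens the paper's bare claim that the remainder is $O_p(\|\hat\boldeta-\boldeta_0\|^2)$. However, integration by parts with bounded variation by itself only gives $\sup_u|\hat\xi-\xi_0|$ times the total variation of $1/\hat G-1/G_0$, and that variation does not vanish for a step-function estimator such as the local Kaplan--Meier $\hat G$. A genuine product rate therefore needs a variation-norm or smoothness condition, which the paper also leaves implicit. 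Your caveat on (iv) is well founded: the paper's own proof of (iv) also ends with $O_p(\mu_n/\sqrt n)$ and $O_p(\sqrt{m/n})$, not $o_p$.
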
   
\begin{proof}
(i). The moment function $\psi$ is a $m$-dimensional vector. We begin by studying each entry $\psi_j$ and then turn to the analysis of the entire vector. We first show that $\psi$ is Lipschitz continuous in $\boldeta$. 
Since $\psi$ is linear in $\bLambda$ and $\xi$, $\psi$ is Lipschitz continuous in these components. 
For the component $G$, note that for any two functions $G_1$ and $G_2$,
\begin{align*}
&\Big[\frac{\delta}{G_1}\{g-\xi_0(Y|\Z,D)\}+\xi_0(-\infty|\Z,D)
+\int_{-\infty}^Y\frac{d\xi_0(u|\Z,D)}{G_1}\Big]\\
&-\Big[\frac{\delta}{G_2}\{g-\xi_0(Y|\Z,D)\}+\xi_0(-\infty|\Z,D)
+\int_{-\infty}^Y\frac{d\xi_0(u|\Z,D)}{G_2}\Big]\\
=&\frac{\delta\{g-\xi_0(Y|\Z,D)\}(G_1-G_2)}{G_1G_2}
+\int_{-\infty}^Y\frac{d\xi_0(u|\Z,D)}{G_1G_2}(G_1-G_2).
\end{align*}
Because $G$ is uniformly bounded away from zero from Assumption \ref{condposint}, both terms on the right-hand side are bounded by a constant multiple of $\|G_1-G_2\|$, which implies that $\psi$ is Lipschitz continuous in $G$. Combining these results, $\psi$ is Lipschitz continuous in $\boldeta=(\bLambda,\xi,G)$, and the Lipschitz coefficient is $\bbE L(\O) = (1+|Y|+|D|+\|h(\Z)\|)=O(\sqrt{m})$. Since we employ sample splitting so that the sample used to construct the moment functions is independent of the sample used to estimate the nuisance functions. Then we have
\begin{align*}
& \bbP\left[ \sqrt{nm}(\bbP_n-\bbP)(\psi_j(\beta_0,\hat\boldeta;\O)-\psi_j(\beta_0,\boldeta_0;\O))  >\epsilon \right]  \\
& \leq m \frac{\var(\psi_j(\beta_0,\hat\boldeta;\O)-\psi_j(\beta_0,\boldeta_0;\O))}{\epsilon^2}\\
& \leq mL(\O)^2\|\hat\boldeta-\boldeta_0\|^2\\
& = \sqrt{m^3/n} \\
&= o(1),
\end{align*}
which implies
\begin{equation}\label{eq:pn-p}
(\bbP_n-\bbP)(\psi_j(\beta_0,\hat\boldeta;\O)-\psi_j(\beta_0,\boldeta_0;\O))=o_p(n^{-1/2}m^{-1/2}).   
\end{equation}
Furthermore,
\begin{align*}
&\bbP(\psi_{j}\left(\beta_0, \hat{\boldeta};\O\right)-\psi_{j}\left(\beta_0, \boldeta_0;\O\right))\\
&= \bbE D_{\boldeta}\psi_{j}\left(\beta_0, \boldeta_0;\O\right)( \hat\boldeta- \boldeta_0)+\bbE D_{\boldeta\boldeta}\psi_{j}\left(\beta_0, \boldeta_0;\O\right)( \hat\boldeta- \boldeta_0)^2+o_p(\|\hat\boldeta- \boldeta_0\|^2).   
\end{align*}
We have shown that the nuisances in the moment function $\psi$ are all Neyman orthogonal, meaning that the expectation of the first-order derivative of $\psi$ with respect to each nuisance parameter is zero, i.e., $\bbE D_{\boldeta}\psi_j\left(\beta_0, \boldeta_0;\O\right)( \hat\boldeta- \boldeta_0)=0$. Therefore, we have
$$
\bbP(\psi_{j}\left(\beta_0, \hat{\boldeta};\O\right)-\psi_{j}\left(\beta_0, \boldeta_0;\O\right))=O_p(\|\hat\boldeta- \boldeta_0\|^2)=o_p(n^{-1/2}m^{-1/2}).
$$
Together with \eqref{eq:pn-p}, we have
$$
\bar \psi_j\left(\beta_0, \hat{\boldeta}\right)-\bar \psi_j\left(\beta_0, {\boldeta_0}\right)=(\bbP_n-\bbP+\bbP)(\psi_j\left(\beta_0, \hat{\boldeta}\right)-\psi_j\left(\beta_0, {\boldeta_0}\right))=o_p(n^{-1/2}m^{-1/2}).
$$
Since $\psi$ consists of $m$ moment functions, it follows that $\|\bar{\psi}(\beta_0, \hat{\boldeta})-\bar{\psi}(\beta_0, \boldeta_0)\|=o_p(n^{-1/2})$.

(ii). Note that $\psi'$ is also Neyman orthogonal to $\boldeta$ according to Remark \ref{remark1}, so we can apply the same argument to prove $\left\|\bar{\psi'}(\hat{\boldeta})-\bar{\psi'}\left(\boldeta_0\right)\right\|=o_p(n^{-1/2})$ as that in part (i), with $\psi$ replaced by $\psi'$.

(iii). Since $\psi$ is linear in $\beta$, we have
\bse 
&&\sup_{\beta\in \calB}\left\|\bar{\psi}\left(\beta, \hat{\boldeta}\right)-\bar{\psi}\left(\beta, \boldeta_0\right)\right\|\\
&=& \sup_{\beta\in \calB}\left\|\bar{\psi}\left(\beta, \hat{\boldeta}\right)-\bar{\psi}\left(\beta_0, \hat{\boldeta}\right)+\bar{\psi}\left(\beta_0, \hat{\boldeta}\right)-\bar{\psi}\left(\beta_0, {\boldeta}_0\right)+\bar{\psi}\left(\beta_0, {\boldeta}_0\right)-\bar{\psi}\left(\beta, {\boldeta}_0\right) \right\|\\
&\leq&\sup_{\beta\in \calB}\left\|(\bar{\psi'}(\hat\boldeta)-\bar{\psi'}(\boldeta_0))(\beta-\beta_0)\right\|  + \left\|\bar{\psi}\left(\beta_0, \hat{\boldeta}\right)-\bar{\psi}\left(\beta_0, {\boldeta}_0\right) \right\|\\
&=&o_p\left(n^{-1 / 2}\right).
\ese

(iv). By Condition \ref{condeigen}, we have
$$
\frac{n\,\bbE\{\|\bar\psi(\beta_0,\boldeta_0)\|^2\}}{m}
\le \frac{\tr(\Omega_0)}{m}
\le c.
$$
By Markov’s inequality, it follows that
$$
\|\bar\psi(\beta_0,\boldeta_0)\|=O_p\!\left(\sqrt{\frac{m}{n}}\right).
$$
Furthermore, by Lemma \ref{lemma2} and Lemma \ref{lemmapsi} (iii), we have
\bse
&&\sup_{\beta\in\calB}\|\bar\psi(\beta,\hat\boldeta)\|\\
&=&\sup_{\beta\in\calB}\|\bar\psi(\beta,\hat\boldeta)
-\bar\psi(\beta,\boldeta_0)
+\bar\psi(\beta,\boldeta_0)
-\bar\psi(\beta_0,\boldeta_0)
+\bar\psi(\beta_0,\boldeta_0)\|\\
&\le&
\sup_{\beta\in\calB}\|\bar\psi(\beta,\hat\boldeta)
-\bar\psi(\beta,\boldeta_0)\|
+\sup_{\beta\in\calB}\|\bar\psi(\beta,\boldeta_0)
-\bar\psi(\beta_0,\boldeta_0)\|
+\|\bar\psi(\beta_0,\boldeta_0)\|\\
&\le&
o_p(n^{-1/2})
+O_p\!\left({\mu_n}/{\sqrt{n}}\right)
+O_p\!\left(\sqrt{{m}/{n}}\right)\\
&=&
O_p\!\left({\mu_n}/{\sqrt{n}}\right).
\ese
For $\|\bar\psi(\beta_0,\hat\boldeta)\|$, the second term in the above
decomposition, $\sup_{\beta\in\calB}\|\bar\psi(\beta,\boldeta_0)
-\bar\psi(\beta_0,\boldeta_0)\|$, is zero, we can similarly obtain
\bse
&&\|\bar\psi(\beta_0,\hat\boldeta)\|\\
&=&\|\bar\psi(\beta_0,\hat\boldeta)
-\bar\psi(\beta_0,\boldeta_0)
+\bar\psi(\beta_0,\boldeta_0)\|\\
&\le&
\|\bar\psi(\beta_0,\hat\boldeta)
-\bar\psi(\beta_0,\boldeta_0)\|
+\|\bar\psi(\beta_0,\boldeta_0)\|\\
&\le&
o_p(n^{-1/2})
+O_p\!\left(\sqrt{{m}/{n}}\right)\\
&=&
O_p\!\left(\sqrt{{m}/{n}}\right).
\ese
This completes the proof of Lemma \ref{lemmapsi}.

\end{proof}

\begin{lemma}\label{lemmaOmega}
Under Conditions \ref{condweak}-\ref{condeigen},
(i). $\left\|\bar{\Omega}\left(\beta_0,\hat{\boldeta}\right)-\Omega_0\right\|=o_p(1/\sqrt{m})$;

(ii). $\left\|n^{-1} \sum_{i=1}^n \psi(\beta_0, \hat{\boldeta};\O_i) \psi'(\hat{\boldeta};\O_i)\tp-\bbE\left\{\psi(\beta_0,\boldeta_0;\O) \psi'(\boldeta_0;\O)\tp\right\}\right\|=o_p(1/\sqrt{m})$;

(iii). $\left\|n^{-1} \sum_{i=1}^n \psi'(\hat{\boldeta};\O_i) \psi'(\hat{\boldeta};\O_i)\tp-\bbE\left\{\psi'(\boldeta_0;\O) \psi'(\boldeta_0;\O)\tp\right\}\right\|=
o_p(1/\sqrt{m})$;

(iv). $\sup _{\beta \in \calB}\left\|\bar{\Omega}(\beta, \hat{\boldeta})-\Omega\left(\beta, \boldeta_0\right)\right\|=o_p(1/\sqrt{m})$;

(v). $\sup _{\beta \in \calB}\left\|n^{-1} \sum_{i=1}^n \psi(\beta, \hat{\boldeta};\O_i)\psi'(\hat{\boldeta};\O_i)\tp-\bbE\left\{\psi(\beta, \boldeta_0;\O) \psi'(\boldeta_0;\O)\tp\right\}\right\|=o_p(1/\sqrt{m})$.       
\end{lemma}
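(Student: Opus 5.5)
Each part splits into a population-level term and a nuisance-perturbation term, and I will bound each in Frobenius norm. Take (i) as the template. Write
\[
\bar\Omega(\beta_0,\hat\boldeta)-\Omega_0=\{\bar\Omega(\beta_0,\hat\boldeta)-\bar\Omega(\beta_0,\boldeta_0)\}+\{\bar\Omega(\beta_0,\boldeta_0)-\Omega_0\}.
\]

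For the second bracket, compute $\bbE\|\bar\Omega(\beta_0,\boldeta_0)-\Omega_0\|^2\le n^{-1}\bbE\|\psi(\beta_0,\boldeta_0;\O)\|^4$. By Condition \ref{condmoment}(ii) this is $o(1/m)$, so Markov's inequality gives $o_p(1/\sqrt m)$.

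For the first bracket, use the identity
\[
\hat\psi_i\hat\psi_i\tp-\psi_i\psi_i\tp=(\hat\psi_i-\psi_i)(\hat\psi_i-\psi_i)\tp+(\hat\psi_i-\psi_i)\psi_i\tp+\psi_i(\hat\psi_i-\psi_i)\tp .
\]
The Frobenius norm of the average is then at most
\[
n^{-1}\sum_i\|\hat\psi_i-\psi_i\|^2+2\Big(n^{-1}\sum_i\|\hat\psi_i-\psi_i\|^2\Big)^{1/2}\Big(n^{-1}\sum_i\|\psi_i\|^2\Big)^{1/2}.
\]
The factor $n^{-1}\sum_i\|\psi_i\|^2$ is $O_p(m)$, since its mean is $\tr\Omega_0\le cm$ by Condition \ref{condeigen}.

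The main work is the term $n^{-1}\sum_i\|\hat\psi_i-\psi_i\|^2$. I will use cross-fitting: conditional on the auxiliary fold, $\hat\boldeta$ is fixed. I will also use the Lipschitz property of $\psi_j$ in $\boldeta$ established in the proof of Lemma \ref{lemmapsi}(i), which relies on Assumption \ref{condposint} to keep $G$ away from zero. These give
\[
\bbE\bigl(\|\hat\psi_i-\psi_i\|^2\mid\hat\boldeta\bigr)\lesssim m\,\|\hat\boldeta-\boldeta_0\|^2 .
\]
By Condition \ref{condkm}, $\|\hat\boldeta-\boldeta_0\|^2=o_p(n^{-1/2}m^{-1/2})$, and $\bLambda$ converges at the parametric rate. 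So the term is $o_p(m^{1/2}n^{-1/2})$, and the cross term is $o_p(m^{3/4}n^{-1/4})$. Both are $o_p(1/\sqrt m)$ exactly when $m^{5/4}n^{-1/4}\to 0$, i.e.\ $m^5/n\to0$. This is stronger than $m^3/n\to0$ in Condition \ref{condweak}, so it is the main obstacle.

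To get around it, I will first try to exploit the block structure. Only the scalar factors $\delta/G$, $\xi$ and $\int d\xi/G$ depend on $(G,\xi)$, while $\Int_k(\Z;\bzeta)$ is bounded because $\Z$ is bounded. This should make the per-entry perturbation $O(\|\hat\boldeta-\boldeta_0\|)$ uniformly in $j$ rather than $O(\sqrt m)$. If that works, the $m$ entries contribute $m\,o_p(n^{-1/2}m^{-1/2})$ in squared norm. I will then bound the cross term entrywise via $\max_j$ and Cauchy--Schwarz, aiming for rate $o_p(m^{1/4}n^{-1/4})\cdot O(1)$ per row aggregated as $o_p(m^{3/4}n^{-1/4}\cdot m^{-1/2})$. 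If this still falls short, I will strengthen the reading of Condition \ref{condkm} to an entrywise sup-norm rate.

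Part (ii) follows by the same decomposition with one $\psi$ replaced by $\psi'$. Remark \ref{remark1} shows $\psi'$ has the same structure and does not depend on $\beta$ or $\vartheta_k$. The population term uses $\bbE\|\psi'\|^4$ from Condition \ref{condmoment}(ii), and the $O_p(m)$ bound for $n^{-1}\sum_i\|\psi'_i\|^2$ uses $\lambda_{\max}$ from Condition \ref{condeigen}. Part (iii) is identical with both factors equal to $\psi'$.

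For (iv) and (v), I will use linearity, $\psi(\beta,\boldeta;\O)=\psi(\beta_0,\boldeta;\O)+(\beta-\beta_0)\psi'(\boldeta;\O)$. It gives
\[
\bar\Omega(\beta,\hat\boldeta)=\bar\Omega(\beta_0,\hat\boldeta)+(\beta-\beta_0)\{\widehat{\bbE}\psi\psi'^{\top}+\widehat{\bbE}\psi'\psi\tp\}+(\beta-\beta_0)^2\widehat{\bbE}\psi'\psi'^{\top},
\]
and the same expansion holds for $\Omega(\beta,\boldeta_0)$. Since $\calB$ is compact, $\sup_\beta|\beta-\beta_0|$ is bounded. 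Hence (iv) follows from (i)–(iii), and (v) follows from (ii)–(iii) in the same way.
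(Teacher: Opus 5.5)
Your population-term bound, $\bbE\|\bar\Omega(\beta_0,\boldeta_0)-\Omega_0\|^2\le n^{-1}\bbE\|\psi(\beta_0,\boldeta_0;\O)\|^4=o(1/m)$ via Condition~\ref{condmoment}(ii), is correct. It works directly in the Frobenius norm, whereas the paper cites a matrix concentration inequality. Your reduction of (iv)--(v) to (i)--(iii), by linearity in $\beta$ and compactness of $\calB$, is the paper's argument.

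The gap is the nuisance cross term $n^{-1}\sum_i(\hat\psi_i-\psi_i)\psi_i\tp$, with $\hat\psi_i=\psi(\beta_0,\hat\boldeta;\O_i)$ and $\psi_i=\psi(\beta_0,\boldeta_0;\O_i)$. Your Cauchy--Schwarz step bounds the $\psi$ factor by $\{n^{-1}\sum_i\|\psi_i\|^2\}^{1/2}=O_p(\sqrt m)$, i.e.\ by the trace of $\bar\Omega(\beta_0,\boldeta_0)$. This discards a factor $\sqrt m$ and is exactly what creates the spurious requirement $m^5/n\to0$. None of your remedies repairs it:
\begin{itemize}
\item The block-structure argument reproduces the same estimate $n^{-1}\sum_i\|\hat\psi_i-\psi_i\|^2\lesssim m\|\hat\boldeta-\boldeta_0\|^2$; you arrive at the same $m\,o_p(n^{-1/2}m^{-1/2})$.
\item An entrywise bound through $\max_{j,l}$ and Cauchy--Schwarz only gives a Frobenius norm of at most $m$ times the largest entry, i.e.\ again $o_p(m^{3/4}n^{-1/4})$. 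The extra factor $m^{-1/2}$ in your ``aggregated'' rate has no mechanism behind it.
\item Strengthening Condition~\ref{condkm} changes the hypotheses of the lemma.
\end{itemize}
As written, (i)--(iii), and hence (iv)--(v), are established only under $m^5/n\to0$.

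The missing idea is to keep the $\psi$ factor in operator norm, which is what Condition~\ref{condeigen} controls. The paper's Cauchy--Schwarz step likewise treats $\|\bbE\psi\psi\tp\|^{1/2}$ as $O(1)$, which holds only in this spectral sense. Write the cross term as $n^{-1}\Delta\tp\Psi$, where $\Delta,\Psi\in\mathbb{R}^{n\times m}$ have rows $(\hat\psi_i-\psi_i)\tp$ and $\psi_i\tp$. The inequality $\|AB\|_F\le\|A\|_F\|B\|_{\mathrm{op}}$ then gives
\[
\Big\|\frac1n\sum_{i=1}^n(\hat\psi_i-\psi_i)\psi_i\tp\Big\|\le\Big(\frac1n\sum_{i=1}^n\|\hat\psi_i-\psi_i\|^2\Big)^{1/2}\lambda_{\max}\{\bar\Omega(\beta_0,\boldeta_0)\}^{1/2}.
\]
By Condition~\ref{condeigen}, Weyl's inequality and your population step, $\lambda_{\max}\{\bar\Omega(\beta_0,\boldeta_0)\}\le c+\|\bar\Omega(\beta_0,\boldeta_0)-\Omega_0\|=O_p(1)$. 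The cross term is therefore $o_p(m^{1/4}n^{-1/4})$, which is $o_p(m^{-1/2})$ precisely under $m^3/n\to0$, i.e.\ Condition~\ref{condweak}(i). Your quadratic term $o_p(m^{1/2}n^{-1/2})$ is already $o_p(m^{-1/2})$, since $m^2/n\to0$.

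The same device completes (ii) and (iii). There you need $\lambda_{\max}$ of the sample second-moment matrix of $\psi'$ to be $O_p(1)$. This follows from $\lambda_{\max}\bbE\{\psi'(\boldeta_0;\O)\psi'(\boldeta_0;\O)\tp\}\le c$ and the analogous Markov bound from Condition~\ref{condmoment}(ii).
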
 

\begin{proof}
(i). First, we have $\|\bar{\Omega}\left(\beta_0,\hat{\boldeta}\right)-\Omega_0\|\leq \|\bar{\Omega}\left(\beta_0,\hat{\boldeta}\right)-\bar{\Omega}\left(\beta_0,{\boldeta_0}\right)\|+\|\bar{\Omega}\left(\beta_0,{\boldeta_0}\right)- \Omega_0\|$.
The matrix concentration inequality in \cite{tropp2016} gives that 
$$
\|\bar{\Omega}\left(\beta_0,{\boldeta_0}\right)- \Omega_0\|=O_p(\sqrt{m\log(m)/n})=o_p(1/\sqrt{m})
$$
under $m^3/n\to 0$. Similarly, we have
$$
\|\frac{1}{n}\sum_{i=1}^n D_{\boldeta} \psi(\beta_0,{\boldeta_0};\O_i) D_{\boldeta} \psi(\beta_0,\boldeta_0;\O_i)\tp-\bbE D_{\boldeta}\psi(\beta_0,{\boldeta_0};\O_i) D_{\boldeta} \psi(\beta_0,\boldeta_0;\O_i)\tp \|=o_p(1/\sqrt{m})
$$
and 
$$
\|\frac{1}{n}\sum_{i=1}^n  \psi(\beta_0,{\boldeta_0};\O_i)  \psi(\beta_0,\boldeta_0;\O_i)\tp-\bbE\psi(\beta_0,{\boldeta_0};\O_i)\psi(\beta_0,\boldeta_0;\O_i)\tp \|=o_p(1/\sqrt{m}),
$$
Then Taylor's expansion gives that 
\begin{align*}
&\|\bar{\Omega}\left(\beta_0,\hat{\boldeta}\right)-\bar{\Omega}\left(\beta_0,{\boldeta_0}\right)\|\\
=&\|D_{\boldeta}\bar{\Omega}\left(\beta_0,{\boldeta_0}\right)\|\|\hat\boldeta-\boldeta_0\|+ o_p(\|\hat\boldeta-\boldeta_0\|)\\
=&2\|\frac{1}{n}\sum_{i=1}^n D_{\boldeta} \psi(\beta_0,{\boldeta_0};\O_i) \psi(\beta_0,\boldeta_0;\O_i)\tp\|\|\hat\boldeta-\boldeta_0\|+o_p(\|\hat\boldeta-\boldeta_0\|)\\
\leq & 2\|\frac{1}{n}\sum_{i=1}^n D_{\boldeta} \psi(\beta_0,{\boldeta_0};\O_i) D_{\boldeta} \psi(\beta_0,\boldeta_0;\O_i)\tp\|^{\frac{1}{2}}\|\frac{1}{n}\sum_{i=1}^n \psi(\beta_0,{\boldeta_0};\O_i) \psi(\beta_0,\boldeta_0;\O_i)\tp\|^{\frac{1}{2}}\|\hat\boldeta-\boldeta_0\|\\
& + o_p(\|\hat\boldeta-\boldeta_0\|)\\
\leq & 2\|\bbE D_{\boldeta} \psi(\beta_0,{\boldeta_0};\O_i) D_{\boldeta} \psi(\beta_0,\boldeta_0;\O_i)\tp\|^{\frac{1}{2}}\|\bbE \psi(\beta_0,{\boldeta_0};\O_i) \psi(\beta_0,\boldeta_0;\O_i)\tp\|^{\frac{1}{2}}\|\hat\boldeta-\boldeta_0\|\\
& + o_p(\|\hat\boldeta-\boldeta_0\|)\\
=& O_p(1)O_p(1)o_p((nm)^{-1/4})\\
=&o_p(1/\sqrt{m}),
\end{align*}
where the last step comes from $m^3/n\to 0$.
 So we obtain $\left\|\bar{\Omega}\left(\beta_0,\hat{\boldeta}\right)-\Omega_0\right\|=o_p(1/\sqrt{m})$.

(ii). The proof follows the same steps as in part (i), the only difference is changing $\psi$ into $\psi'$, and the details are omitted here.

(iv). According to Lemma \ref{lemmapsi}, Lemma \ref{lemmaOmega} and Taylor expansion, we have
\be 
&&\sup_{\beta \in \calB}\left\|\bar{\Omega}(\beta, \hat{\boldeta})-\Omega_0\left(\beta, \boldeta_0\right)\right\|\n\\
&=&\sup_{\beta \in \calB}\left\|\bar{\Omega}(\beta, \hat{\boldeta})-\bar\Omega(\beta_0,\hat\boldeta)+\bar\Omega(\beta_0,\hat\boldeta)-\Omega(\beta_0,\boldeta_0)+\Omega(\beta_0,\boldeta_0)-\Omega\left(\beta, \boldeta_0\right)\right\|\n\\
&\leq& \sup_{\beta \in \calB}\Big\|\frac{2}{n}\sum_{i=1}^n\psi(\beta_0,\hat\boldeta;\O_i)\psi'(\hat\boldeta;\O_i)\tp(\beta-\beta_0)+ \frac{1}{n}\sum_{i=1}^n\psi'(\hat\boldeta;\O_i)\psi'(\hat\boldeta;\O_i)\tp(\beta-\beta_0)^2\n\\
&&-2\bbE \{\psi(\beta_0,\boldeta_0;\O_i)\psi'(\boldeta_0;\O_i)\tp\}(\beta-\beta_0)- \bbE\{ \psi'(\boldeta_0;\O_i)\psi'(\boldeta_0;\O_i)\tp\}(\beta-\beta_0)^2 \Big\|\n\\
&&+\|\bar\Omega(\beta_0,\hat\boldeta)-\Omega(\beta_0,\boldeta_0)\|\n\\
&\leq& 2\sup_{\beta \in \calB}|\beta-\beta_0|\left\|\frac{1}{n}\sum_{i=1}^n\psi(\beta_0,\hat\boldeta;\O_i)\psi'(\hat\boldeta;\O_i)\tp-\bbE\{ \psi(\beta_0,\boldeta_0;\O_i)\psi'(\boldeta_0;\O_i)\tp\}\right\|\n\\
&&+\sup_{\beta \in \calB}|\beta-\beta_0|^2\left\|\frac{1}{n}\sum_{i=1}^n\psi'(\boldeta_0;\O_i)\psi'(\hat\boldeta;\O_i)\tp-\bbE \{\psi'(\boldeta_0;\O_i)\psi'(\boldeta_0;\O_i)\tp\}\right\| \n\\
&&+\|\bar\Omega(\beta_0,\hat\boldeta)-\Omega(\beta_0,\boldeta_0)\|\n\\
&=&o_p(1/\sqrt{m}).\n
\ee 
A useful implication of this conclusion here is $\sup_{\beta\in\calB}\|\bar \Omega(\beta,\hat\boldeta)\|\leq\sup_{\beta\in\calB}\|\bar \Omega(\beta,\hat\boldeta)-\Omega(\beta,\boldeta_0)\|+\sup_{\beta\in\calB}\|\Omega(\beta,\boldeta_0)\|=O_p(1) $.

(v). The proof is similar to the proof of (iv) and is omitted here.

\end{proof}

Define $P(\blambda,\beta,\boldeta)=1/n\sum_{i=1}^n \rho(\blambda\tp \psi(\beta,\boldeta;\O_i))$, $\blambda(\beta,\boldeta)=\argmax_{\blambda \in {L}(\beta,\boldeta)} P(\blambda,\beta,\boldeta)$
\begin{lemma}\label{lemmalambda}
Under Conditions \ref{condweak}-\ref{condgel},
(i) $\sup_{\beta\in\mathcal B}\|\blambda(\beta,\hat\boldeta)\|=O_p(\mu_n/\sqrt n)$, $\sup_{\beta\in\mathcal B}\|\blambda(\beta,\boldeta_0)\|=O_p(\mu_n/\sqrt n)$, and $\|\blambda(\beta_0,\boldeta_0)\|=O_p(\sqrt{m/n})$.

(ii) $\|\blambda(\beta_0,\hat\boldeta)-\blambda(\beta_0,\boldeta_0)\|=o_p(n^{-1/2})$, and $\sup_{\beta\in\mathcal B}\|\blambda(\beta,\hat\boldeta)-\blambda(\beta,\boldeta_0)\|=o_p(\mu_n/\sqrt n)$.
\end{lemma}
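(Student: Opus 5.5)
The proof follows the standard GEL argument of \citet{newey2004} and \citet{newey2009}, with Lemmas \ref{lemmapsi} and \ref{lemmaOmega} supplying the control needed when the nuisance estimate $\hat\boldeta$ replaces $\boldeta_0$. Fix $\boldeta\in\{\hat\boldeta,\boldeta_0\}$ and $\beta\in\calB$, and write $\bar\psi=\bar\psi(\beta,\boldeta)$ and $\bar\Omega=\bar\Omega(\beta,\boldeta)$.

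\textbf{Step 1: a shrinking ball where $\rho$ is well behaved.} Take a deterministic sequence $\kappa_n$ with $\kappa_n\to0$ and $\kappa_n\max_i\sup_\beta\|\psi(\beta,\boldeta;\O_i)\|\to_p0$. Condition \ref{condmoment}(iii) gives
$$
\max_i\sup_\beta\|\psi(\beta,\boldeta_0;\O_i)\|=O_p\bigl(n^{1/l}\{\bbE\sup_\beta\|\psi\|^l\}^{1/l}\bigr),
$$
so any $\kappa_n$ slightly larger than $\mu_n/\sqrt n$ works. For $\hat\boldeta$, the Lipschitz bound in the proof of Lemma \ref{lemmapsi}(i), together with Assumption \ref{condposint} (which keeps $\hat G$ away from zero), transfers the same order of the maximum. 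On $\{\|\blambda\|\le\kappa_n\}$ we then have $\max_i|\blambda\tp\psi_i|\to_p0$. By Condition \ref{condgel}, $\rho''(v)\le-1/2$ there with probability approaching one.

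\textbf{Step 2: bound $\|\blambda\|$ by $\|\bar\psi\|$.} Let $\bar\blambda=\argmax_{\|\blambda\|\le\kappa_n}P(\blambda,\beta,\boldeta)$. A second-order expansion around $0$ gives
$$
0=P(0)\le P(\bar\blambda)\le -\bar\blambda\tp\bar\psi-\tfrac14\bar\blambda\tp\bar\Omega\bar\blambda .
$$
Lemma \ref{lemmaOmega}(iv) and Condition \ref{condeigen} give $\lambda_{\min}(\bar\Omega)\ge 1/(2c)$ uniformly in $\beta$ w.p.a.1. Hence $\|\bar\blambda\|\le C\|\bar\psi\|$, uniformly in $\beta$ (the sup over $\beta$ is harmless because every bound used is uniform).

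Lemma \ref{lemmapsi}(iv) gives $\sup_\beta\|\bar\psi(\beta,\hat\boldeta)\|=O_p(\mu_n/\sqrt n)$. The $\boldeta_0$ analogue comes from Lemma \ref{lemma2} plus $\|\bar\psi(\beta_0,\boldeta_0)\|=O_p(\sqrt{m/n})$, and $\sqrt{m/n}\lesssim\mu_n/\sqrt n$ since $m/\mu_n^2$ is bounded. So $\|\bar\blambda\|=o_p(\kappa_n)$, and $\bar\blambda$ is interior to the ball. Concavity of $P$ in $\blambda$ then makes $\bar\blambda$ the global maximizer, i.e.\ $\bar\blambda=\blambda(\beta,\boldeta)$. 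This proves (i). The sharper rate at $\beta_0$ uses $\|\bar\psi(\beta_0,\boldeta_0)\|=O_p(\sqrt{m/n})$ directly, which was shown in the proof of Lemma \ref{lemmapsi}(iv).

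\textbf{Step 3: part (ii) via the first-order condition.} The first-order condition reads $n^{-1}\sum_i\rho'(\blambda\tp\psi_i)\psi_i=0$. Expanding $\rho'$ around $0$ gives
$$
\blambda(\beta,\boldeta)=-\tilde\Omega(\beta,\boldeta)^{-1}\bar\psi(\beta,\boldeta),\qquad \tilde\Omega=n^{-1}\sum_i-\rho''(\dot v_i)\psi_i\psi_i\tp ,
$$
where $\dot v_i$ lies between $0$ and $\blambda\tp\psi_i$. Since $\max_i|\dot v_i|\to_p0$ and $\rho''(0)=-1$, we get $\|\tilde\Omega-\bar\Omega\|\le o_p(1)\|\bar\Omega\|$. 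Therefore $\tilde\Omega^{-1}$ has bounded norm for both $\hat\boldeta$ and $\boldeta_0$. Now decompose
$$
\blambda(\beta,\hat\boldeta)-\blambda(\beta,\boldeta_0)=-\tilde\Omega(\hat\boldeta)^{-1}\{\bar\psi(\beta,\hat\boldeta)-\bar\psi(\beta,\boldeta_0)\}-\{\tilde\Omega(\hat\boldeta)^{-1}-\tilde\Omega(\boldeta_0)^{-1}\}\bar\psi(\beta,\boldeta_0).
$$
The first term is $o_p(n^{-1/2})$ by Lemma \ref{lemmapsi}(i) and (iii), uniformly in $\beta$. For the second term, use
$$
\|\tilde\Omega(\hat\boldeta)-\tilde\Omega(\boldeta_0)\|=o_p(1/\sqrt m)+o_p(1),
$$
which follows from Lemma \ref{lemmaOmega}(i) and (iv) and the negligibility of $\rho''(\dot v_i)+1$. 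Multiplying by $\|\bar\psi\|$ gives $o_p(\mu_n/\sqrt n)$ uniformly in $\beta$, and $o_p(\sqrt{m/n})\cdot$ at $\beta_0$.

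\textbf{Main obstacle.} The delicate step is getting the claimed $o_p(n^{-1/2})$ at $\beta_0$. The second term is only $o_p(1)\cdot O_p(\sqrt{m/n})$, which is too crude. I would instead use the exact weights $\rho''(\dot v_i)+1=O(|\blambda\tp\psi_i|)$. That makes the perturbation of order $\|\blambda\|\max_i\|\psi_i\|\,\|\bar\Omega\|=O_p(\sqrt{m/n})\,n^{1/l}(\cdots)$. Multiplied by $\|\bar\psi(\beta_0,\boldeta_0)\|=O_p(\sqrt{m/n})$, this gives order $m n^{1/l}(\cdots)/n$, which is $o(n^{-1/2})$ by Condition \ref{condmoment}(iii). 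Combined with the $o_p(1/\sqrt m)$ rate of Lemma \ref{lemmaOmega}(i) against $\sqrt{m/n}$, this yields $o_p(n^{-1/2})$.
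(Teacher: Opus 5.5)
Your proposal is correct and follows the same structure as the paper's proof. For (i), both arguments use a shrinking ball on which $\max_i|\blambda\tp\psi_i|=o_p(1)$, a second-order expansion of $\rho$, and concavity to get $\|\blambda(\beta,\boldeta)\|\le C\|\bar\psi(\beta,\boldeta)\|$. For (ii), both linearize the first-order condition and split the difference into a $\bar\psi$-difference and an $\Omega$-difference. You differ only in the bookkeeping at $\beta_0$, and there your version is the one that actually proves the stated rate.

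The paper bounds the remainder $R$ with the uniform rate $\|\blambda\|=O_p(\mu_n/\sqrt n)$. It also multiplies the $\Omega$-difference by $\|\bar\psi(\beta_0,\boldeta_0)\|$ treated as $O_p(\mu_n/\sqrt n)$. So its written argument only concludes $\|\blambda(\beta_0,\hat\boldeta)-\blambda(\beta_0,\boldeta_0)\|=o_p(\mu_n/\sqrt n)$, which is strictly weaker than the claimed $o_p(n^{-1/2})$ because $\mu_n\to\infty$.

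Your refinement closes this. You use the exact mean-value weights $|\rho''(\dot v_i)+1|\le c|\blambda\tp\psi_i|$ together with the sharper rates $\|\blambda(\beta_0,\cdot)\|=O_p(\sqrt{m/n})$ and $\|\bar\psi(\beta_0,\boldeta_0)\|=O_p(\sqrt{m/n})$. This makes the perturbation contribution $O_p(\kappa m/n)$, where $\kappa=\max_{i\le n}\sup_\beta\|\psi(\beta,\boldeta_0;\O_i)\|$, and that is $o_p(n^{-1/2})$ by Condition \ref{condmoment}(iii). The remaining term is $o_p(1/\sqrt m)\cdot O_p(\sqrt{m/n})$ from Lemma \ref{lemmaOmega}(i), which is also $o_p(n^{-1/2})$. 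The paper's remainder-$R$ route could be repaired the same way, since at $\beta_0$ one has $\|R\|\le c\,\kappa\|\blambda(\beta_0,\cdot)\|^2\|\bar\Omega\|=O_p(\kappa m/n)$.

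Two small points. First, your bound $\|n^{-1}\sum_i w_i\psi_i\psi_i\tp\|\le\max_i|w_i|\,\|\bar\Omega\|$ holds in operator norm, not the paper's Frobenius norm; operator norm is all the vector estimate needs. Second, Assumption \ref{condposint} bounds $G_0$ away from zero, not $\hat G$. Transferring the maximal bound on $\|\psi_i\|$ to $\hat\boldeta$ therefore tacitly needs $\hat G$ bounded below with probability approaching one. The paper's ``mean value expansion in $\boldeta$'' makes the same unstated assumption.
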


\begin{proof}
(i) Let $\kappa=\max_{i\le n}\sup_{\beta\in\mathcal B}\|\psi(\beta,\boldeta_0;\O_i)\|$. Standard maximal inequalities yield
$$
\kappa=O_p  \left(n^{1/\iota}  \mathbb E\big\{\sup_{\beta\in\mathcal B}\|\psi(\beta,\boldeta_0;\O)\|^{\iota}\big\}^{1/\iota}\right)
\quad \text{for any } \iota>2 .
$$
By a mean value expansion in $\boldeta$,
$$
\mathbb E\big\{\sup_{\beta\in \calB}\|\psi(\beta,\hat\boldeta;\O)\|^{\iota}\big\}^{1/\iota}
=
\mathbb E\big\{\sup_{\beta\in \calB}\|\psi(\beta,\boldeta_0;\O)\|^{\iota}\big\}^{1/\iota}
+O_p  \big(\|\hat\boldeta-\boldeta_0\|\big),
$$
hence with $\iota>2$, $\|\hat\boldeta-\boldeta_0\|=o_p(n^{-1/4}m^{-1/4})$ and Condition \ref{condmoment},
\begin{align*}
&n^{1/\iota} \mathbb E\big\{\sup_{\beta\in \calB}\|\psi(\beta,\hat\boldeta;\O)\|^{\iota}\big\}^{1/\iota} \frac{m+\mu_n}{\sqrt n}\\
&n^{1/\iota} \mathbb E\big\{\sup_{\beta\in \calB}\|\psi(\beta,\boldeta_0;\O)\|^{\iota}\big\}^{1/\iota} \frac{m+\mu_n}{\sqrt n}+n^{1/\iota}\|\hat\boldeta-\boldeta_0\|\frac{m+\mu_n}{\sqrt n}\\
&=o_p(1).   
\end{align*}

We can choose $\kappa_n$ such that $\kappa_n=o(\kappa^{-1})$ and $\mu_n/\sqrt n=o(\kappa_n)$, and set $L_n=\{\blambda:\|\blambda\|\le \kappa_n\}$. For any $\blambda\in L_n$, we have
\begin{equation}\label{eq:sup-lin}
\sup_{i\le n,\ \beta\in\mathcal B}\big|\blambda^{\top}\psi(\beta,\hat\boldeta;\O_i)\big| \le \|\blambda\|\psi(\beta,\hat\boldeta;\O_i) \|\|
\le \kappa_n\kappa=o_p(1).
\end{equation}
A Taylor expansion of $\rho$ at zero gives that for each $i$,
$$
\rho  \left(\blambda^{\top}\psi(\beta,\hat\boldeta;\O_i)\right)
= - \blambda^{\top}\psi(\beta,\hat\boldeta;\O_i)
+\tfrac12 \rho''(\bar x_i) \blambda^{\top}\psi(\beta,\hat\boldeta;\O_i)\psi(\beta,\hat\boldeta;\O_i)^{\top}\blambda,
$$
with $\bar x_i$ between $0$ and $\blambda^{\top}\psi(\beta,\hat\boldeta;\O_i)$. Therefore
$$
 P(\blambda,\beta,\hat\boldeta)
= - \blambda^{\top}\bar\psi(\beta,\hat\boldeta)
+\tfrac12 \rho''(\bar x) \blambda^{\top}\bar\Omega(\beta,\hat\boldeta)\blambda ,
$$
with $\|\bar\Omega(\beta,\hat\boldeta)\|$ bounded between $c_1$ and $c_2$, and $\rho''(\bar x)$ bounded between $-c_2$ and $-c_1$ for some positive constant $c_1$ and $c_2$ because $\bar x=o_p(1)$ and $\rho''(0)=-1$. Since $\rho$ is a concave function, any local maximizer is global maximizer. Maximization over $L_n$ yields
$$
0=P(0,\beta,\hat\boldeta)\leq P\big(\blambda(\beta,\hat\boldeta),\beta,\hat\boldeta\big)
\le \|\blambda(\beta,\hat\boldeta)\| \|\bar\psi(\beta,\hat\boldeta)\|-c \|\blambda(\beta,\hat\boldeta)\|^2,
$$
for some $c>0$, which implies $\|\blambda(\beta,\hat\boldeta)\|\le c \|\bar\psi(\beta,\hat\boldeta)\|$. Lemma \ref{lemmapsi} then gives
\begin{equation}\label{eq:rate-lambda-hat}
\sup_{\beta\in\mathcal B}\|\blambda(\beta,\hat\boldeta)\|
\le c \sup_{\beta\in\mathcal B}\|\bar\psi(\beta,\hat\boldeta)\|
=O_p(\mu_n/\sqrt n)
=o_p(\kappa_n),
\end{equation}
so the maximizer lies in $L_n$ and is the global maximizer over the whole domain. The same argument with $\boldeta_0$ gives $\sup_{\beta\in \calB}\|\blambda(\beta,\boldeta_0)\|=O_p(\mu_n/\sqrt n)$. Since $\|\bar\psi(\beta_0,\boldeta_0)\|=O_p(\sqrt{m/n})$, \eqref{eq:rate-lambda-hat} yields $\|\blambda(\beta_0,\boldeta_0)\|=O_p(\sqrt{m/n})$.

(ii). The first order condition of $\blambda$ implies that
$$
\frac{1}{n}\sum_{i=1}^n \frac{\partial \rho(\blambda(\beta_0,\hat\boldeta)^{\top}\psi(\beta_0,\hat\boldeta;\O_i))}{\partial \blambda} = \frac{1}{n}\sum_{i=1}^n \rho'  \left(\blambda(\beta_0,\hat\boldeta)^{\top}\psi(\beta_0,\hat\boldeta;\O_i)\right)\psi(\beta_0,\hat\boldeta;\O_i)=0.
$$
A Taylor expansion of $\rho'$ at zero gives
\begin{align}\label{eq:phod1talyor}
0=&\frac{1}{n}\sum_{i=1}^n \rho'  \left(\blambda(\beta_0,\hat\boldeta)^{\top}\psi(\beta_0,\hat\boldeta;\O_i)\right)\psi(\beta_0,\hat\boldeta;\O_i)\n\\
=&\frac{1}{n}\sum_{i=1}^n \rho'(0)\psi(\beta_0,\hat\boldeta;\O_i)+  \frac{1}{n}\sum_{i=1}^n \rho''(0)\psi(\beta_0,\hat\boldeta;\O_i)\psi(\beta_0,\hat\boldeta;\O_i)^{\top}\blambda(\beta_0,\hat\boldeta)+R \n\\
=&  - \bar\psi(\beta_0,\hat\boldeta)-\bar\Omega(\beta_0,\hat\boldeta)\blambda(\beta_0,\hat\boldeta)+R,   
\end{align}
where
$$
R=\frac{1}{2n}\sum_{i=1}^n \rho'''(\bar x_i) \big(\blambda(\beta_0,\hat\boldeta)^{\top}\psi(\beta_0,\hat\boldeta;\O_i)\big)^2 \psi(\beta_0,\hat\boldeta;\O_i),
$$
with $\bar x_i$ between $0$ and $\blambda(\beta_0,\hat\boldeta)^{\top}\psi(\beta_0,\hat\boldeta;\O_i)$. Using \eqref{eq:sup-lin}, the boundedness of $\rho'''(\bar x_i)$, and part (i),
\begin{align}\label{Rorder}
&\|R\|\n\\
\le& c \sup_{i\le n}\|\psi(\beta_0,\hat\boldeta;\O_i)\|\|\blambda(\beta_0,\hat\boldeta)\|^2\|\bar\Omega(\beta_0,\hat\boldeta)\|\n\\
=&O_p  \left(n^{1/\iota}\{\mathbb{E}\|\psi(\beta_0,\hat\boldeta;\O)\|^{\iota}\}  ^{1/\iota}\right)O_p(\mu_n^2/n)O_p(1)\n\\
=&o_p(\mu_n/\sqrt n).
\end{align}
Hence \eqref{eq:phod1talyor} implies
\begin{align*}
\blambda(\beta_0,\hat\boldeta)
=-\bar\Omega(\beta_0,\hat\boldeta)^{-1}\bar\psi(\beta_0,\hat\boldeta)+o_p(\mu_n/\sqrt n),
\end{align*}
and
\begin{align*}
\blambda(\beta_0,\boldeta_0)
=- \bar\Omega(\beta_0,\boldeta_0)^{-1}\bar\psi(\beta_0,\boldeta_0)+o_p(\mu_n/\sqrt n).
\end{align*}
By Lemmas \ref{lemmapsi} and \ref{lemmaOmega},
\begin{align*}
&\|\blambda(\beta_0,\hat\boldeta)-\blambda(\beta_0,\boldeta_0)\|\\
\le& \| \bar\Omega(\beta_0,\boldeta_0)^{-1}\bar\psi(\beta_0,\boldeta_0)-\bar\Omega(\beta_0,\hat\boldeta)^{-1}\bar\psi(\beta_0,\hat\boldeta)\|+o_p(\mu_n/\sqrt n)\\
\le& \|(\bar\Omega(\beta_0,\boldeta_0)^{-1}-\bar\Omega(\beta_0,\hat\boldeta)^{-1})\bar\psi(\beta_0,\boldeta_0)\|
+\|\bar\Omega(\beta_0,\hat\boldeta)^{-1}(\bar\psi(\beta_0,\boldeta_0)-\bar\psi(\beta_0,\hat\boldeta))\|+o_p(\mu_n/\sqrt n)\\
=& o_p(1/\sqrt{m}) O_p(\mu_n/\sqrt{n})+O_p(1) o_p(1/\sqrt n)+o_p(\mu_n/\sqrt n)\\
=& o_p(\mu_n/\sqrt n),
\end{align*}
The uniform bound follows by the same argument with $\beta$ in place of $\beta_0$, which gives $\sup_{\beta\in\mathcal B}\|\blambda(\beta,\hat\boldeta)-\blambda(\beta,\boldeta_0)\|=o_p(\mu_n/\sqrt n)$.
\end{proof}

\begin{lemma} \label{lemmaq}
Let $\tilde Q(\beta,\boldeta)=\bbE \psi(\beta,\boldeta)^{\top}\Omega(\beta,\boldeta)^{-1}\bbE \psi(\beta,\boldeta)/2+m/(2n)$. 
Under Conditions \ref{condweak}--\ref{condgel},
$$
\sup_{\beta\in\mathcal B}\big|\tilde Q(\beta,\boldeta_0)-\hat Q(\beta,\hat\boldeta)\big|
=o_p \left(\frac{\mu_n^2}{n}\right).
$$
\end{lemma}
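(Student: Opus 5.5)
We follow the Newey--Windmeijer (2009) argument for uniform approximation of the GEL objective by the continuously updated quadratic form. The proof has four steps.

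\textbf{Step 1: reduce $\hat Q$ to a quadratic form.} Fix $\beta\in\calB$ and write $\hat\blambda=\blambda(\beta,\hat\boldeta)$. By Lemma \ref{lemmalambda}(i), $\sup_\beta\|\hat\blambda\|=O_p(\mu_n/\sqrt n)$, and by \eqref{eq:sup-lin}, $\max_i|\hat\blambda\tp\psi(\beta,\hat\boldeta;\O_i)|=o_p(1)$ uniformly in $\beta$. A third-order Taylor expansion of $\rho$ at zero, using Condition \ref{condgel}, gives
$$
\hat Q(\beta,\hat\boldeta)=-\hat\blambda\tp\bar\psi(\beta,\hat\boldeta)-\tfrac12\hat\blambda\tp\bar\Omega(\beta,\hat\boldeta)\hat\blambda+R_1(\beta).
$$
The remainder is bounded by
$$
|R_1|\le c\max_i\|\psi(\beta,\hat\boldeta;\O_i)\|\,\|\hat\blambda\|^3\,\|\bar\Omega(\beta,\hat\boldeta)\|
= O_p\big(n^{1/\iota}\{\bbE\sup_\beta\|\psi\|^\iota\}^{1/\iota}\big)\,O_p(\mu_n^3/n^{3/2}).
$$
By Condition \ref{condmoment}(iii) this is $o_p(\mu_n^2/n)$.

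Next, the first-order condition expansion \eqref{eq:phod1talyor}, applied uniformly in $\beta$, gives
$$
\hat\blambda=-\bar\Omega(\beta,\hat\boldeta)^{-1}\bar\psi(\beta,\hat\boldeta)+r(\beta).
$$
Substituting this in, the linear and quadratic terms combine so that the error from $r$ enters only at second order, through $\|r\|\,\|\bar\psi\|$. The remainder bound for $R$ in \eqref{Rorder}, with the rate sharpened by Condition \ref{condmoment}(iii), controls $\|r\|$. The result is
$$
\sup_\beta\Big|\hat Q(\beta,\hat\boldeta)-\tfrac12\bar\psi(\beta,\hat\boldeta)\tp\bar\Omega(\beta,\hat\boldeta)^{-1}\bar\psi(\beta,\hat\boldeta)\Big|=o_p(\mu_n^2/n).
$$

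\textbf{Step 2: replace estimated nuisances and the estimated weight matrix.} First swap $\bar\Omega(\beta,\hat\boldeta)^{-1}$ for $\Omega(\beta,\boldeta_0)^{-1}$. Lemma \ref{lemmaOmega}(iv) and Condition \ref{condeigen} give $\|\bar\Omega^{-1}-\Omega^{-1}\|=o_p(1/\sqrt m)$ uniformly. Combined with $\sup_\beta\|\bar\psi(\beta,\hat\boldeta)\|^2=O_p(\mu_n^2/n)$ from Lemma \ref{lemmapsi}(iv), the error is $o_p(\mu_n^2/n)$.

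Then swap $\bar\psi(\beta,\hat\boldeta)$ for $\bar\psi(\beta,\boldeta_0)$. By Lemma \ref{lemmapsi}(iii) the difference is $o_p(n^{-1/2})$ uniformly. The cross term is therefore $o_p(n^{-1/2})\,O_p(\mu_n/\sqrt n)=o_p(\mu_n^2/n)$, since $\mu_n\to\infty$.

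\textbf{Step 3: expand the oracle quadratic form and take expectations.} Write $\bar\psi(\beta,\boldeta_0)=\bbE\psi(\beta,\boldeta_0)+\Delta(\beta)$, where $\Delta$ is the centered empirical average. The quadratic form then splits as
$$
\bbE\psi\tp\Omega^{-1}\bbE\psi+2\,\bbE\psi\tp\Omega^{-1}\Delta+\Delta\tp\Omega^{-1}\Delta.
$$
The three pieces are handled as follows.
\begin{itemize}
\item \emph{First piece.} This is twice the leading part of $\tilde Q(\beta,\boldeta_0)$.
\item \emph{Cross term.} By Lemma \ref{lemma1a}, $\|\bbE\psi(\beta,\boldeta_0)\|\le c\,\mu_n/\sqrt n$, and $\Delta(\beta)=O_p(1/\sqrt n)$ along any fixed direction. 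Hence the cross term is $O_p(\mu_n/n)=o_p(\mu_n^2/n)$.
\item \emph{Last piece.} By linearity in $\beta$, $\Delta(\beta)=\Delta_0+(\beta-\beta_0)\Delta'$, and $\Omega(\beta,\boldeta_0)$ is a quadratic polynomial in $\beta$ (Lemma \ref{lemma1c}). The term $\Delta\tp\Omega^{-1}\Delta$ has mean $\tr(\Omega^{-1}\Omega)/n=m/n$ up to a small correction, namely $\tr(\Omega(\beta)^{-1}\Omega(\beta))/n=m/n$ exactly when the second moment is used. Its variance is $O(m/n^2)$ by Condition \ref{condmoment}(ii). So it equals $m/n+O_p(\sqrt m/n)$, which is $m/n+o_p(\mu_n^2/n)$ because $m/\mu_n^2$ is bounded.
\end{itemize}

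\textbf{Step 4: make the bounds uniform in $\beta$.} All of the above must hold uniformly over compact $\calB$. We get this from the Lipschitz bounds in Lemmas \ref{lemma2} and \ref{lemma1c}, which control the modulus in $\beta$, plus pointwise convergence on a finite grid of mesh $\epsilon$, followed by letting $\epsilon\to0$.

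The main obstacle is the last piece of Step 3: showing $\sup_\beta|\Delta(\beta)\tp\Omega(\beta,\boldeta_0)^{-1}\Delta(\beta)-m/n|=o_p(\mu_n^2/n)$ when $m$ grows. The plan is to write it as a degree-four polynomial in $(\beta-\beta_0)$, after expanding $\Omega^{-1}$ via the polynomial structure and the eigenvalue bounds, with random coefficients. Each coefficient would then be bounded by a second-moment (U-statistic plus diagonal) calculation under Conditions \ref{condmoment}--\ref{condeigen}. Compactness of $\calB$ then delivers the uniform bound.
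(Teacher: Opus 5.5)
Your proposal is correct. Its Step 1 matches the paper's argument: a third-order expansion of $\rho$ with remainder $O_p(\kappa\mu_n^3/n^{3/2})$, then substitution of $\blambda(\beta,\hat\boldeta)=-\bar\Omega^{-1}\bar\psi+r$ to reach $\tfrac12\bar\psi(\beta,\hat\boldeta)\tp\bar\Omega(\beta,\hat\boldeta)^{-1}\bar\psi(\beta,\hat\boldeta)$. In fact the linear terms cancel exactly at the maximizer of the quadratic, so the substitution error is exactly $-\tfrac12 r\tp\bar\Omega r$.

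The two routes diverge after that. The paper stops and cites Lemma 7 of \citet{ye2024} for the uniform approximation of this quadratic form by $\tilde Q(\beta,\boldeta_0)$. Your Steps 2--4 prove that approximation directly, and the argument holds:
\begin{itemize}
\item Removing $\hat\boldeta$ via Lemmas~\ref{lemmapsi}(iii) and \ref{lemmaOmega}(iv) costs $o_p(\mu_n/n)+o_p(\mu_n^2/(n\sqrt m))$.
\item The cross term is $O_p(\mu_n/n)$ pointwise.
\item The pure-noise term has mean $m/n-\bbE\psi\tp\Omega^{-1}\bbE\psi/n$. Splitting it into a diagonal part and a degenerate U-statistic part gives standard deviation $o(1/(n\sqrt m))$ from fourth moments plus $O(\sqrt m/n)$ from the eigenvalue bounds.
\item Finally, $\sqrt m/n\lesssim\mu_n/n=o(\mu_n^2/n)$, because $m/\mu_n^2$ is bounded and $\mu_n\to\infty$.
\end{itemize}
The paper's route is shorter, but it leaves implicit that the external lemma applies to cross-fitted AIPCW moments. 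Yours is self-contained and shows exactly where Conditions~\ref{condweak}, \ref{condmoment} and \ref{condeigen} enter.

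Two small corrections. First, Lemma~\ref{lemma1a} gives a \emph{lower} bound on $\|\bbE\psi(\beta,\boldeta_0;\O)\|$. The upper bound $c\mu_n/\sqrt n$ that your cross-term step needs comes from Lemma~\ref{lemma2} with $\beta'=\beta_0$, together with compactness of $\calB$.

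Second, the ``degree-four polynomial'' plan in your last paragraph is not literally right, since $\Omega(\beta,\boldeta_0)^{-1}$ is rational rather than polynomial in $\beta$. You do not need it, because your Step 4 grid argument already covers this term. Using $\sup_\beta\|\Delta(\beta)\|=O_p(\sqrt{m/n})$, Lemma~\ref{lemma1c} and Condition~\ref{condeigen}, the map $\beta\mapsto\Delta(\beta)\tp\Omega(\beta,\boldeta_0)^{-1}\Delta(\beta)$ is Lipschitz with constant $O_p(m/n)=O_p(\mu_n^2/n)$. Pointwise convergence on a finite mesh, followed by letting the mesh shrink, then gives the uniform bound.
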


\begin{proof}
By the Taylor expansion of $\rho$ at zero, we have
\begin{align}\label{qhat}
\hat Q(\beta,\hat\boldeta)
=&\frac{1}{n}\sum_{i=1}^n\rho \left(\blambda(\beta,\hat\boldeta)^{\top}\psi(\beta,\hat\boldeta;\O_i)\right)\n\\
=&\frac{1}{n}\sum_{i=1}^n\Big[\rho(0)
+\rho'(0)\blambda(\beta,\hat\boldeta)^{\top}\psi(\beta,\hat\boldeta;\O_i)\Big]\n\\
&+\frac{1}{2n}\sum_{i=1}^n\rho''(0)\blambda(\beta,\hat\boldeta)^{\top}\psi(\beta,\hat\boldeta;\O_i)\psi(\beta,\hat\boldeta;\O_i)^{\top}\blambda(\beta,\hat\boldeta)+r,
\end{align}
where the remainder term is
$$
r=\frac{1}{6n}\sum_{i=1}^n\rho'''(\bar x_i)
\big(\blambda(\beta,\hat\boldeta)^{\top}\psi(\beta,\hat\boldeta;\O_i)\big)^3,
$$
and each $\bar x_i$ lies between $0$ and $\blambda(\beta,\hat\boldeta)^{\top}\psi(\beta,\hat\boldeta;\O_i)$. To bound $r$, note that $\|\blambda(\beta,\hat\boldeta)\|=O_p(\mu_n/\sqrt n)$ 
and $\max_{i\le n}\sup_{\beta\in\mathcal B}\|\psi(\beta,\hat\boldeta;\O_i)\|=O_p(\kappa)$. 
Since $\rho'''(\bar x_i)$ is bounded, we have
\begin{align}\label{rorder}
|r|
&\le c \|\blambda(\beta,\hat\boldeta)\|\max_{i\le n}\sup_{\beta\in\mathcal B}\|\psi(\beta,\hat\boldeta;\O_i)\|
 \blambda(\beta,\hat\boldeta)^{\top}\bar\Omega(\beta,\hat\boldeta)\blambda(\beta,\hat\boldeta)\nonumber\\
&=O_p \left(\frac{\mu_n}{\sqrt n}\kappa\right)O_p \left(\frac{\mu_n^2}{n}\right)
=o_p \left(\frac{\mu_n^2}{n}\right).
\end{align}

From \eqref{Rorder} and \eqref{rorder}, substituting 
$\blambda(\beta,\hat\boldeta)
=-\bar\Omega(\beta,\hat\boldeta)^{-1}\bar \psi(\beta,\hat\boldeta)+R$ 
with $\|R\|=o_p(\mu_n/\sqrt n)$ into \eqref{qhat}, we have
\begin{align}\label{eq:hatQ-CUE}
&\hat Q(\beta,\hat\boldeta)\n\\
=&-\frac{1}{n}\sum_{i=1}^n(-\bar \psi(\beta,\hat\boldeta)+R)^{\top}
\bar\Omega(\beta,\hat\boldeta)^{-1}\psi(\beta,\hat\boldeta;\O_i)\nonumber\\
&-\frac{1}{2n}\sum_{i=1}^n(-\bar \psi(\beta,\hat\boldeta)+R)^{\top}
\bar\Omega(\beta,\hat\boldeta)^{-1}\psi(\beta,\hat\boldeta;\O_i)
\psi(\beta,\hat\boldeta;\O_i)^{\top}\bar\Omega(\beta,\hat\boldeta)^{-1}
(-\bar \psi(\beta,\hat\boldeta)+R)
+r\nonumber\\
=&\frac{1}{2} \bar \psi(\beta,\hat\boldeta)^{\top}
\bar\Omega(\beta,\hat\boldeta)^{-1}\bar \psi(\beta,\hat\boldeta)
+o_p \left(\frac{\mu_n^2}{n}\right).
\end{align}

Furthermore, Lemma 7 in \cite{ye2024} shows that
$$
\sup_{\beta\in\mathcal B}
\Big|
\frac{1}{2}\bar \psi(\beta,\hat\boldeta)^{\top}\bar\Omega(\beta,\hat\boldeta)^{-1}\bar \psi(\beta,\hat\boldeta)
-\tilde Q(\beta,\boldeta_0)
\Big|
=o_p \left(\frac{\mu_n^2}{n}\right).
$$
Combining the above results yields
$$
\sup_{\beta\in\mathcal B}
\big|\hat Q(\beta,\hat\boldeta)-\tilde Q(\beta,\boldeta_0)\big|
=o_p \left(\frac{\mu_n^2}{n}\right),
$$
which completes the proof.
\end{proof}

With these useful Lemmas, we can now turn to prove Theorem \ref{thmnormality}.
\begin{proof}
We first show consistency. 
By Lemma \ref{lemma1a}, it suffices to prove that 
$\sqrt{n}\|\bbE{\psi}(\hat{\beta},\boldeta_0)\|/\mu_n=o_p(1)$.
First, we have
$\bbE{\psi}(\beta_0,\boldeta_0)=\mathbb E\{\psi(\beta_0,\boldeta_0;\O)\}=0$, so
$$
\tilde Q(\beta_0,\boldeta_0)
=\frac{1}{2}\bbE{\psi}(\beta_0,\boldeta_0)^{\top}\tilde{\Omega}(\beta_0,\boldeta_0)^{-1}
\bbE{\psi}(\beta_0,\boldeta_0)
+\frac{m}{2n}
=\frac{m}{2n}.
$$
Using Lemma \ref{lemma2} and Condition \ref{condeigen}, we have
\begin{align*}
\|\hat{\beta}-\beta_0\|^2
&\le \frac{n}{\mu_n^2}\|\bbE{\psi}(\hat{\beta},\boldeta_0)\|^2\\
&\le c \frac{n}{\mu_n^2}
\bbE{\psi}(\hat{\beta},\boldeta_0)^{\top}\Omega(\hat{\beta},\boldeta_0)^{-1}
\bbE{\psi}(\hat{\beta},\boldeta_0)
+\frac{m}{2n}-\frac{m}{2n}\\
&\le c \frac{n}{\mu_n^2}
\big\{\tilde Q(\hat{\beta},\boldeta_0)-\tilde Q(\beta_0,\boldeta_0)\big\}\\
&\le c \frac{n}{\mu_n^2}
\big\{\tilde Q(\hat{\beta},\boldeta_0)-\hat Q(\hat{\beta},\hat{\boldeta})
+\hat Q(\hat{\beta},\hat{\boldeta})-\hat Q(\beta_0,\hat{\boldeta})
+\hat Q(\beta_0,\hat{\boldeta})-\tilde Q(\beta_0,\boldeta_0)\big\}.
\end{align*}
By Lemma \ref{lemmaq} and the fact that $\hat{\beta}$ minimizes $\hat Q(\beta,\hat{\boldeta})$, i.e.,
$\hat Q(\hat{\beta},\hat{\boldeta})\le\hat Q(\beta_0,\hat{\boldeta})$, so
$$
\|\hat{\beta}-\beta_0\|^2
\le c \frac{n}{\mu_n^2}o_p \left(\frac{\mu_n^2}{n}\right)
=o_p(1),
$$
which establishes the consistency.

We now turn to asymptotic normality. 
Consider the first order condition 
$\partial\hat Q(\beta,\hat{\boldeta})/\partial\beta|_{\beta=\hat{\beta}}=0$.
Taylor expansion around $\beta_0$ yields
\begin{equation}\label{Taylor}
0
=\frac{n}{\mu_n}
\left.\frac{\partial\hat Q(\beta,\hat{\boldeta})}{\partial\beta}\right|_{\beta=\hat{\beta}}
=\frac{n}{\mu_n}
\left.\frac{\partial\hat Q(\beta,\hat{\boldeta})}{\partial\beta}\right|_{\beta=\beta_0}
+\frac{n}{\mu_n^2}
\left.\frac{\partial^2\hat Q(\beta,\hat{\boldeta})}{\partial\beta^2}\right|_{\beta=\bar{\beta}}
\mu_n(\hat{\beta}-\beta_0),
\end{equation}
where $\bar{\beta}$ lies between $\beta_0$ and $\hat{\beta}$.

We first analyze 
$\partial\hat Q(\beta,\hat{\boldeta})/\partial\beta|_{\beta=\beta_0}$.
The first order condition for $\blambda$ implies
\begin{equation}\label{lambdafirstorder}
\frac{\partial \hat Q(\beta,\hat{\boldeta})}{\partial \blambda}=\frac{1}{n}\sum_{i=1}^n
\rho' \left(\blambda(\beta_0,\hat{\boldeta})^{\top}\psi(\beta_0,\hat{\boldeta};\O_i)\right)
\psi(\beta_0,\hat{\boldeta};\O_i)
=0.
\end{equation}
Define
\begin{align*}
\hat U_i
=\psi'(\hat\boldeta;\O_i)-\psi^{*}-\frac{1}{n}\sum_{j=1}^n\big\{\psi(\beta_0,\hat{\boldeta};\O_j)\psi'(\hat{\boldeta};\O_j)^{\top}\big\}
\bar{\Omega}(\beta_0,\hat{\boldeta})^{-1}
\psi(\beta_0,\hat{\boldeta};\O_i).
\end{align*}
Combining \eqref{lambdafirstorder}, Lemma \ref{lemmalambda}, we obtain
\begin{align}\label{qda123}
&\left.\frac{\partial \hat{Q}(\beta, \hat{\boldeta})}{\partial \beta}\right|_{\beta=\beta_0}\n\\
=&\frac{1}{n}\sum_{i=1}^n\rho'(\blambda(\beta_0,\hat\boldeta)\tp \psi(\beta_0,\hat\boldeta;\O_i))\blambda(\beta_0,\hat\boldeta)\tp \psi'(\hat\boldeta;\O_i)\n\\
&+\frac{1}{n}\sum_{i=1}^n
\rho' \left(\blambda(\beta_0,\hat{\boldeta})^{\top}\psi(\beta_0,\hat{\boldeta};\O_i)\right)
\psi(\beta_0,\hat{\boldeta};\O_i)\tp\frac{\partial \blambda(\beta_0,\hat{\boldeta})}{\partial \beta} \n\\
=&\frac{1}{n}\sum_{i=1}^n\{-1-\blambda(\beta_0,\hat\boldeta)\tp \psi(\beta_0,\hat\boldeta;\O_i)+r\}\{-\bar \psi(\beta_0,\hat\boldeta)\tp\bar\Omega(\beta_0,\hat\boldeta)^{-1}+R\}\psi'(\hat\boldeta;\O_i) \n\\
=&-\bar \psi(\beta_0,\hat\boldeta)\tp\bar\Omega(\beta_0,\hat\boldeta)^{-1}\left\{\frac{1}{n}\sum_{i=1}^{n}\psi(\beta_0,\hat\boldeta;\O_i)\psi'(\hat\boldeta;\O_i)\tp\right\}\bar\Omega(\beta_0,\hat\boldeta)^{-1}\bar \psi(\beta_0,\hat\boldeta) \n\\
&+\bar {\psi'}(\hat\boldeta)\tp\bar\Omega(\beta_0,\hat\boldeta)^{-1}\bar \psi(\beta_0,\hat\boldeta)+o_p({\mu_n}/{n})\n\\
=& \psi^{*\top} \bar\Omega(\beta_0,\hat\boldeta)^{-1}\bar \psi(\beta_0,\hat\boldeta)+ \frac{1}{n}\sum_{i=1}^n\hat U_i\tp\bar \Omega(\beta_0,\hat\boldeta)^{-1}\bar \psi(\beta_0,\hat\boldeta)+o_p({\mu_n}/{n}).
\end{align}

Lemma A12 in \cite{newey2009} gives
\begin{equation}\label{eqlemmaa12}
\frac{n}{\mu_n}
\left\{
\psi^{*\top}\Omega(\beta_0,\boldeta_0)^{-1}\bar{\psi}(\beta_0,\boldeta_0)
+\frac{1}{n}\sum_{i=1}^n
U_i^{\top}\Omega(\beta_0,\boldeta_0)^{-1}\bar{\psi}(\beta_0,\boldeta_0)
+o_p(1)
\right\}
\xrightarrow{d} N(0,H+V).
\end{equation}
The expression in \eqref{qda123} differs from \eqref{eqlemmaa12} only through the nuisance term: 
$\hat{\boldeta}$ instead of $\boldeta_0$. 
We then show that
\begin{equation}\label{normal1}
\psi^{*\top}\Omega(\beta_0,\boldeta_0)^{-1}\bar{\psi}(\beta_0,\boldeta_0)
-\psi^{*\top}\bar{\Omega}(\beta_0,\hat{\boldeta})^{-1}\bar{\psi}(\beta_0,\hat{\boldeta})
=o_p \left(\frac{\mu_n}{n}\right),
\end{equation}
and
\begin{equation}\label{normal2}
\frac{1}{n}\sum_{i=1}^n
U_i^{\top}\Omega(\beta_0,\boldeta_0)^{-1}\bar{\psi}(\beta_0,\boldeta_0)
-\frac{1}{n}\sum_{i=1}^n
\hat U_i^{\top}\bar{\Omega}(\beta_0,\hat{\boldeta})^{-1}\bar{\psi}(\beta_0,\hat{\boldeta})
=o_p \left(\frac{\mu_n}{n}\right).
\end{equation}

For \eqref{normal1}, using Condition \ref{condeigen} and Lemmas \ref{lemmapsi} and \ref{lemmaOmega},
\begin{align*}
&\psi^{*\top}\Omega(\beta_0,\boldeta_0)^{-1}\bar{\psi}(\beta_0,\boldeta_0)
-\psi^{*\top}\bar{\Omega}(\beta_0,\hat{\boldeta})^{-1}\bar{\psi}(\beta_0,\hat{\boldeta})\\
=&
\psi^{*\top}\Omega(\beta_0,\boldeta_0)^{-1}
\{\bar{\psi}(\beta_0,\boldeta_0)-\bar{\psi}(\beta_0,\hat{\boldeta})\}\\
&+
\psi^{*\top}
\{\Omega(\beta_0,\boldeta_0)^{-1}-\bar{\Omega}(\beta_0,\hat{\boldeta})^{-1}\}
\bar{\psi}(\beta_0,\hat{\boldeta})\\
=&O_p(\mu_n/\sqrt{n}) o_p(n^{-1/2})
+O_p(\mu_n/\sqrt{n})  o_p(\sqrt{1/m})O_p(\sqrt{m/n})\\
=&o_p(\mu_n/n).
\end{align*}
By the definitions of $U_i$ and $\hat U_i$, we have
\be
&&\left\|\frac{1}{n}\sum_{i=1}^n(\hat U_i-U_i)\right\|\n\\
&\leq&\|\bar{\psi}'(\hat\boldeta)-\bar{\psi}'(\boldeta_0)\|\n\\
&&+\Big\|\Big\{\frac{1}{n}\sum_{i=1}^n\psi(\beta_0,\hat\boldeta;\O_i)\psi'(\hat\boldeta;\O_i)\tp\Big\}
\bar\Omega(\beta_0,\hat\boldeta)^{-1}\bar\psi(\beta_0,\hat\boldeta)\n\\
&&\quad-\bbE\{\psi(\beta_0,\boldeta_0;\O)\psi'(\boldeta_0;\O)\tp\}
\Omega(\beta_0,\boldeta_0)^{-1}\bar\psi(\beta_0,\boldeta_0)\Big\|\n\\
&=&\|\bar{\psi}'(\hat\boldeta)-\bar{\psi}'(\boldeta_0)\|\n\\
&&+\Big\|\Big[\frac{1}{n}\sum_{i=1}^n\psi(\beta_0,\hat\boldeta;\O_i)\psi'(\hat\boldeta;\O_i)\tp
-\bbE\{\psi(\beta_0,\boldeta_0;\O)\psi'(\boldeta_0;\O)\tp\}\Big]
\Omega(\beta_0,\hat\boldeta)^{-1}\bar\psi(\beta_0,\hat\boldeta)\Big\|\n\\
&&+\Big\|\bbE\{\psi(\beta_0,\boldeta_0;\O)\psi'(\boldeta_0;\O)\tp\}
\{\bar\Omega(\beta_0,\hat\boldeta)^{-1}-\Omega(\beta_0,\boldeta_0)^{-1}\}
\bar\psi(\beta_0,\hat\boldeta)\Big\|\n\\
&&+\Big\|\bbE\{\psi(\beta_0,\boldeta_0;\O)\psi'(\boldeta_0;\O)\tp\}
\Omega(\beta_0,\boldeta_0)^{-1}\{\bar\psi(\beta_0,\hat\boldeta)-\bar\psi(\beta_0,\boldeta_0)\}\Big\|\n\\
&=&o_p(n^{-1/2})
+o_p\Big(\frac{1}{\sqrt{m}}\Big)O_p\Big(\frac{\mu_n}{\sqrt{n}}\Big)
+o_p\Big(\frac{1}{\sqrt{m}}\Big)O_p\Big(\frac{\mu_n}{\sqrt{n}}\Big)
+o_p(n^{-1/2})\n\\
&=&o_p\Big(\frac{\mu_n}{\sqrt{mn}}\Big)+o_p(n^{-1/2}).
\ee
Moreover, we have
\be \label{ui}
&&\Big\|\frac{1}{n}\sum_{i=1}^n U_i\Big\|\n\\
&\leq&
\|\bar{\psi}'(\hat\boldeta)\|+\|\psi^{*}\|
+\big\|\bbE\{\psi(\beta_0,\boldeta_0;\O)\psi^{*\top}\}
\Omega_0^{-1}\bar\psi(\beta_0,\boldeta_0)\big\|\n\\
&=&O_p\Big(\frac{\mu_n}{\sqrt{n}}\Big).
\ee
For \eqref{normal2}, based on Condition \ref{condeigen}, Lemmas \ref{lemmapsi}, \ref{lemmaOmega} and \eqref{ui}, we obtain
\be
&&\frac{1}{n}\sum_{i=1}^n U_i\tp\Omega(\beta_0,\boldeta_0)^{-1}\bar\psi(\beta_0,\boldeta_0)
-\frac{1}{n}\sum_{i=1}^n\hat U_i\tp\bar\Omega(\beta_0,\hat\boldeta)^{-1}\bar\psi(\beta_0,\hat\boldeta)\n\\
&=&\frac{1}{n}\sum_{i=1}^n
U_i\tp\Omega(\beta_0,\boldeta_0)^{-1}\{\bar\psi(\beta_0,\boldeta_0)-\bar\psi(\beta_0,\hat\boldeta)\}\n\\
&&+\frac{1}{n}\sum_{i=1}^n
U_i\tp\{\Omega(\beta_0,\boldeta_0)^{-1}-\bar\Omega(\beta_0,\hat\boldeta)^{-1}\}
\bar\psi(\beta_0,\hat\boldeta)\n\\
&&+\frac{1}{n}\sum_{i=1}^n
(U_i-\hat U_i)\tp\bar\Omega(\beta_0,\hat\boldeta)^{-1}\bar\psi(\beta_0,\hat\boldeta)\n\\
&=&
O_p\Big(\frac{\mu_n}{\sqrt{n}}\Big)O_p(1)o_p\Big(\frac{1}{\sqrt{n}}\Big)
+O_p\Big(\frac{\mu_n}{\sqrt{n}}\Big)
o_p\Big(\frac{1}{\sqrt{m}}\Big)
O_p\Big(\frac{\sqrt{m}}{\sqrt{n}}\Big)\n\\
&&+o_p\Big(\frac{\mu_n}{\sqrt{mn}}\Big)O_p(1)
O_p\Big(\frac{\sqrt{m}}{\sqrt{n}}\Big)
\n\\
&=&o_p(\mu_n/n).\n
\ee
Combining \eqref{qda123}--\eqref{normal2}, we obtain
\be\label{Taylor1}
\frac{n}{\mu_n}
\left.\frac{\partial \hat{Q}(\beta,\hat{\boldeta})}{\partial \beta}\right|_{\beta=\beta_0}
\xrightarrow{d} N(0,H+\bLambda).
\ee
Next we show
$$
\sup_{\beta\in\calB}\frac{n}{\mu_n^2}
\frac{\partial^2 \hat{Q}(\beta,\hat\boldeta)}{\partial \beta^2}\to H.
$$
Define
\[
\calV_{\psi\psi}(\beta,\boldeta)
=\frac{1}{n}\sum_{i=1}^n\rho''(\blambda(\beta,\boldeta)\tp\psi(\beta,\boldeta;\O_i))
\psi(\beta,\boldeta;\O_i)\psi(\beta,\boldeta;\O_i)\tp,
\]
\[
\calV_{\psi\psi'}(\beta,\boldeta)
=\frac{1}{n}\sum_{i=1}^n\rho''(\blambda(\beta,\boldeta)\tp\psi(\beta,\boldeta;\O_i))
\psi(\beta,\boldeta;\O_i)\psi'(\boldeta;\O_i)\tp,
\]
\[
\calV_{\psi\psi'}^{v}(\beta,\boldeta)
=\frac{1}{n}\sum_{i=1}^n\rho''(v)
\psi'(\boldeta;\O_i)\psi'(\boldeta;\O_i)\tp,
\]
\[
\calV_{\psi'\psi'}(\beta,\boldeta)
=\frac{1}{n}\sum_{i=1}^n\rho''(\blambda(\beta,\boldeta)\tp\psi(\beta,\boldeta;\O_i))
\psi'(\boldeta;\O_i)\psi'(\boldeta;\O_i)\tp.
\]
Then
\be\label{partialqbb}
&&\frac{\partial^2 \hat{Q}(\beta,\boldeta)}{\partial \beta^2}\n\\
&=&\frac{1}{n}\sum_{i=1}^n
\rho''(\blambda(\beta,\boldeta)\tp\psi(\beta,\boldeta;\O_i))
\{\blambda(\beta,\boldeta)\tp\psi'(\boldeta;\O_i)\}
\Big\{\blambda(\beta,\boldeta)\tp\psi'(\boldeta;\O_i)
+\frac{\partial\blambda(\beta,\boldeta)\tp}{\partial\beta}
\psi(\beta,\boldeta;\O_i)\Big\}\n\\
&&+\frac{1}{n}\sum_{i=1}^n
\rho'(\blambda(\beta,\boldeta)\tp\psi(\beta,\boldeta;\O_i))
\frac{\partial\blambda(\beta,\boldeta)\tp}{\partial\beta}\psi'(\boldeta;\O_i)\n\\
&=&\blambda(\beta,\boldeta)\tp
\calV_{\psi'\psi'}(\beta,\boldeta)\blambda(\beta,\boldeta)
+\blambda(\beta,\boldeta)\tp
\calV_{\psi\psi'}(\beta,\boldeta)
\frac{\partial\blambda(\beta,\boldeta)}{\partial\beta}\n\\
&&+\frac{1}{n}\sum_{i=1}^n
\{-1+\rho''(\bar v_1)\blambda(\beta,\boldeta)\tp\psi(\beta,\boldeta;\O_i)\}
\frac{\partial\blambda(\beta,\boldeta)\tp}{\partial\beta}\psi'(\boldeta;\O_i)\n\\
&=&\blambda(\beta,\boldeta)\tp
\calV_{\psi'\psi'}(\beta,\boldeta)\blambda(\beta,\boldeta)
+\blambda(\beta,\boldeta)\tp
\calV_{\psi\psi'}(\beta,\boldeta)
\frac{\partial\blambda(\beta,\boldeta)}{\partial\beta}\n\\
&&+\blambda(\beta,\boldeta)\tp
\calV_{\psi\psi'}^{\bar v_1}(\beta,\boldeta)
\frac{\partial\blambda(\beta,\boldeta)}{\partial\beta}
-\frac{\partial\blambda(\beta,\boldeta)\tp}{\partial\beta}
\bar{\psi}'(\boldeta)\n\\
&\equiv& D_1(\boldeta)+D_2(\boldeta)+D_3(\boldeta)+D_4(\boldeta),
\ee
where $|\bar v_1|\leq|\blambda(\beta,\hat\boldeta)\tp\psi(\beta,\hat\boldeta;\O_i)|$. The derivative of $\blambda(\beta,\boldeta)$ with respect to $\beta$ can be obtained
from the implicit function theorem:
\bse
\frac{\partial\blambda(\beta,\boldeta)}{\partial \beta}
&=&-\calV_{\psi\psi}(\beta,\boldeta)^{-1}
\Big\{\calV_{\psi\psi'}(\beta,\boldeta)\blambda(\beta,\boldeta)
+\frac{1}{n}\sum_{i=1}^n
\rho'(\blambda(\beta,\boldeta)\tp\psi(\beta,\boldeta;\O_i))
\psi'(\boldeta;\O_i)\Big\}\\
&=&-\calV_{\psi\psi}(\beta,\boldeta)^{-1}
\big[\{\calV_{\psi\psi'}(\beta,\boldeta)
+\calV_{\psi\psi'}^{\bar v_2}(\beta,\boldeta)\}
\blambda(\beta,\boldeta)-\psi'(\boldeta;\O_i)\big].
\ese
We now show $\sup_{\beta\in\calB}|D_i(\hat\boldeta)-D_i(\boldeta_0)|
=o_p(\mu_n^2/n)$ for $i=1,2,3,4$.
Using a Taylor expansion of $\rho''$ at $0$, we have
\bse
&&\|\calV_{\psi\psi}(\beta,\boldeta_0)+\bar\Omega(\beta,\boldeta_0)\|\\
&=&\Big\|\frac{1}{n}\sum_{i=1}^n
\{\rho''(\blambda(\beta,\boldeta_0)\tp\psi(\beta,\boldeta_0;\O_i))
-\rho''(0)\}
\psi(\beta,\boldeta_0;\O_i)\psi(\beta,\boldeta_0;\O_i)\tp\Big\|\\
&=&\Big\|\frac{1}{n}\sum_{i=1}^n
\rho^{'''}(x)\blambda(\beta,\boldeta_0)\tp\psi(\beta,\boldeta_0;\O_i)
\psi(\beta,\boldeta_0;\O_i)\psi(\beta,\boldeta_0;\O_i)\tp\Big\|\\
&\leq&c\max_{i\leq n}|\blambda(\beta,\boldeta_0)\tp\psi(\beta,\boldeta_0;\O_i)|
\frac{1}{n}\sum_{i=1}^n
\|\psi(\beta,\boldeta_0;\O_i)\psi(\beta,\boldeta_0;\O_i)\tp\|\\
&=&o_p(1)O_p(1)\\
&=&o_p(1),
\ese
where $|x|\leq|\blambda(\beta,\boldeta_0)\tp\psi(\beta,\boldeta_0;\O_i)|$.
Similarly, $\|\calV_{\psi\psi}(\beta,\hat\boldeta)+\bar\Omega(\beta,\hat\boldeta)\|=o_p(1)$.
Combining with Lemma \ref{lemmaOmega}, it follows that
\bse
&&\|\calV_{\psi\psi}(\beta,\hat\boldeta)
-\calV_{\psi\psi}(\beta,\boldeta_0)\|\\
&\leq&
\|\calV_{\psi\psi}(\beta,\hat\boldeta)+\bar\Omega(\beta,\hat\boldeta)\|
+\|\bar\Omega(\beta,\hat\boldeta)-\bar\Omega(\beta,\boldeta_0)\|
+\|\calV_{\psi\psi}(\beta,\boldeta_0)+\bar\Omega(\beta,\boldeta_0)\|\\
&=&o_p(1).
\ese
Analogous arguments give
$\|\calV_{\psi\psi'}(\beta,\hat\boldeta)
-\calV_{\psi\psi'}(\beta,\boldeta_0)\|=o_p(1)$,
$\|\calV_{\psi\psi'}^{v}(\beta,\hat\boldeta)
-\calV_{\psi\psi'}^{v}(\beta,\boldeta_0)\|=o_p(1)$,
and
$\|\calV_{\psi'\psi'}(\beta,\hat\boldeta)
-\calV_{\psi'\psi'}(\beta,\boldeta_0)\|=o_p(1)$.
Therefore,
\bse
&&\sup_{\beta\in\calB}|D_1(\hat\boldeta)-D_1(\boldeta_0)|\\
&=&\sup_{\beta\in\calB}
|\{\blambda(\beta,\hat\boldeta)-\blambda(\beta,\boldeta_0)\}\tp
\calV_{\psi'\psi'}(\beta,\hat\boldeta)
\{\blambda(\beta,\hat\boldeta)+\blambda(\beta,\boldeta_0)\}|\\
&&+\sup_{\beta\in\calB}
|\blambda(\beta,\hat\boldeta)\tp
\{\calV_{\psi'\psi'}(\beta,\hat\boldeta)
-\calV_{\psi'\psi'}(\beta,\boldeta_0)\}
\blambda(\beta,\boldeta_0)|\\
&=&o_p(\mu_n/\sqrt{n})O_p(1)O_p(\mu_n/\sqrt{n})
+O_p(\mu_n/\sqrt{n})o_p(1)O_p(\mu_n/\sqrt{n})\\
&=&o_p(\mu_n^2/n).
\ese
For $D_2$ and $D_3$, it is enough to show
$\sup_{\beta\in\calB}\|\partial\blambda(\beta,\hat\boldeta)/\partial\beta
-\partial\blambda(\beta,\boldeta_0)/\partial\beta\|
=o_p(\mu_n/\sqrt{n})$
and
$\sup_{\beta\in\calB}\|\partial\blambda(\beta,\boldeta)/\partial\beta\|
=O_p(\mu_n/\sqrt{n})$.
Then we can repeat the argument for $D_1$ with $\blambda$ replaced by
$\partial\blambda/\partial\beta$.
Based on Lemmas \ref{lemmapsi} and \ref{lemmalambda},
\bse
&&\sup_{\beta\in\calB}
\Big\|\frac{\partial\blambda(\beta,\boldeta)}{\partial \beta}\Big\|\\
&\leq&\sup_{\beta\in\calB}
\|\calV_{\psi\psi}(\beta,\boldeta)^{-1}
\{\calV_{\psi\psi'}(\beta,\boldeta)
+\calV_{\psi\psi'}^{\bar v_2}(\beta,\boldeta)\}
\blambda(\beta,\boldeta)\|\\
&&+\sup_{\beta\in\calB}
\|\calV_{\psi\psi}(\beta,\boldeta)^{-1}\psi'(\boldeta;\O_i)\|\\
&=&O_p(1)O_p(\mu_n/\sqrt{n})+O_p(1)O_p(\mu_n/\sqrt{n})\\
&=&O_p(\mu_n/\sqrt{n}).
\ese
Similarly,
$\sup_{\beta\in\calB}\|\partial\blambda(\beta,\hat\boldeta)/\partial\beta
-\partial\blambda(\beta,\boldeta_0)/\partial\beta\|
=o_p(\mu_n/\sqrt{n})$,
which implies
$\sup_{\beta\in\calB}|D_i(\hat\boldeta)-D_i(\boldeta_0)|
=o_p(\mu_n^2/n)$ for $i=2,3$.
For $D_4$, we have
\bse
&&\sup_{\beta\in\calB}|D_4(\hat\boldeta)-D_4(\boldeta_0)|\\
&\leq&\sup_{\beta\in\calB}
\Big|\frac{\partial\blambda(\beta,\hat\boldeta)\tp}{\partial\beta}
\{\psi'(\hat\boldeta)-\psi'(\boldeta_0)\}\Big|\\
&&+\sup_{\beta\in\calB}
\Big|\Big\{\frac{\partial\blambda(\beta,\hat\boldeta)\tp}{\partial\beta}
-\frac{\partial\blambda(\beta,\boldeta_0)\tp}{\partial\beta}\Big\}
\psi'(\boldeta_0)\Big|\\
&=&o_p(\mu_n^2/n).
\ese
Combining the bounds for $D_1$ to $D_4$, we obtain
\begin{equation*}
\sup_{\beta\in\calB}
\Big|
\frac{\partial^2 \hat{Q}(\beta,\hat\boldeta)}{\partial \beta^2}
-\frac{\partial^2 \hat{Q}(\beta,\boldeta_0)}{\partial \beta^2}
\Big|
=o_p(\mu_n^2/n).     
\end{equation*}
Lemma 13 in \cite{newey2009} implies
$$
\sup_{\beta\in\calB}
\frac{n}{\mu_n^2}
\Big|
\frac{\partial^2 \hat{Q}(\beta,\boldeta_0)}{\partial \beta^2}
\Big|
\to H,
$$
which yields
\be\label{Taylor2}
\sup_{\beta\in\calB}
\frac{n}{\mu_n^2}
\Big|
\frac{\partial^2 \hat{Q}(\beta,\hat\boldeta)}{\partial \beta^2}
\Big|=\sup_{\beta\in\calB}\Big|
\frac{\partial^2 \hat{Q}(\beta,\hat\boldeta)}{\partial \beta^2}
-\frac{\partial^2 \hat{Q}(\beta,\boldeta_0)}{\partial \beta^2}
\Big|+\sup_{\beta\in\calB}\Big|
\frac{\partial^2 \hat{Q}(\beta,\boldeta_0)}{\partial \beta^2}
\Big|
\to H.
\ee
Finally, combining \eqref{Taylor}, \eqref{Taylor1} and \eqref{Taylor2}, the asymptotic
normality of $\hat\beta$ follows:
$$
\frac{\mu_n(\hat{\beta}-\beta_0)}{\sqrt{H^{-1}(H+V)H^{-1}}}\xrightarrow{d}N(0,1).
$$
\end{proof}
\section{Proof of Theorem \ref{thm:overid}}

The proof of Theorem 4 in \cite{newey2009} implies that
$$
\frac{2n\widehat Q(\beta_0,\boldeta_0)-m}{\sqrt{2m}} \to N(0,1).
$$
Next we will prove $
2n\widehat Q(\widehat \beta,\widehat \boldeta)-2n\widehat Q(\beta_0,\boldeta_0)=o_p(m).
$
Taylor expansion at $\beta_0$ gives that
\be      \label{eq:2nQbeta}
&& 2n\widehat Q(\widehat \beta,\widehat \boldeta)-2n\widehat Q(\beta_0,\widehat\boldeta)\n\\
&=& \frac{2n}{\mu_n}\frac{\partial \widehat Q(\beta_0,\widehat\boldeta)}{\partial\beta}\mu_n(\widehat\beta-\beta_0) + \frac{2n}{\mu_n^2}\frac{\partial^2 \widehat Q(\bar\beta,\widehat\boldeta)}{\partial\beta^2}\mu_n^2(\widehat\beta-\beta_0)^2\n\\
&=& O_p(1),
\ee
where the last line comes from \eqref{Taylor1} and \eqref{Taylor2}.
Since $\|\bar\psi(\beta_0,\widehat \boldeta)\|=o_p(\sqrt{m/n}) $, by replacing $\beta$ with $\beta_0$, the remainder term in \eqref{rorder} becomes 
\begin{align}
|r|&\le c \|\blambda(\beta_0,\hat\boldeta)\|\max_{i\le n}\sup_{\beta\in\mathcal B}\|\psi(\beta_0,\hat\boldeta;\O_i)\|
 \blambda(\beta_0,\hat\boldeta)^{\top}\bar\Omega(\beta_0,\hat\boldeta)\blambda(\beta_0,\hat\boldeta)\nonumber\\
&=O_p \left(\sqrt{\frac{m}{n}}\kappa\right)O_p \left(\frac{m^2}{n}\right) 
=o_p \left(\frac{\sqrt{m}}{n}\right).
\end{align} and thus \eqref{eq:hatQ-CUE} becomes
\begin{align}
\hat Q(\beta_0,\hat\boldeta)=\frac{1}{2} \bar \psi(\beta_0,\hat\boldeta)^{\top} \bar\Omega(\beta_0,\hat\boldeta)^{-1}\bar \psi(\beta_0,\hat\boldeta) +o_p \left(\sqrt{m}/n\right).
\end{align}
 Then we have
\be    \label{eq:2nQeta}  
&& |2n\widehat Q(\beta_0,\widehat \boldeta)-2n\widehat Q(\beta_0,\boldeta_0)|\n\\
&=& n|\bar\psi(\beta_0,\hat\boldeta)\tp\bar\Omega(\beta_0,\hat\boldeta)\bar\psi(\beta_0,\hat\boldeta)-\bar\psi(\beta_0,\boldeta_0)\tp\bar\Omega(\beta_0,\boldeta_0)\bar\psi(\beta_0,\boldeta_0)| +o_p(\sqrt{m})\n\\
&=&  n\|\bar\psi(\beta_0,\hat\boldeta)-\bar\psi(\beta_0,\boldeta_0)\|\|\bar\Omega(\beta_0,\hat\boldeta)\|\|\bar\psi(\beta_0,\hat\boldeta)\| \n\\
&&+  n\|\bar\psi(\beta_0,\boldeta_0)\|\|\bar\Omega(\beta_0,\hat\boldeta)-\bar\Omega(\beta_0,\boldeta_0)\|\|\bar\psi(\beta_0,\hat\boldeta)\|\n\\
&&+ n\|\bar\psi(\beta_0,\hat\boldeta)-\bar\psi(\beta_0,\boldeta_0)\|\|\bar\Omega(\beta_0,\boldeta_0)\|\|\bar\psi(\beta_0,\boldeta_0)\|+o_p \left(\sqrt{m}\right)\n\\
&=& no_p(1/\sqrt{n})O_p(1)O_p(\sqrt{m/n}) + nO_p(\sqrt{m/n})o_p(1/\sqrt{m})O_p(\sqrt{m/n})\n\\
&&+nO_p(\sqrt{m/n})O_p(1)o_p(1/\sqrt{n})+o_p \left(\sqrt{m}\right)\n \\
&=& o_p \left(\sqrt{m}\right).
\ee
Together with \eqref{eq:2nQbeta} and \eqref{eq:2nQeta}, we have
\bse
&&\frac{2n\widehat Q(\widehat\beta,\widehat\boldeta)-2n\widehat Q(\beta_0,\boldeta_0)}{\sqrt{2(m-1)}} \\
&=&\frac{2n\widehat Q(\widehat\beta,\widehat\boldeta)-2n\widehat Q(\beta_0,\widehat\boldeta)+2n\widehat Q(\beta_0,\widehat\boldeta)-2n\widehat Q(\beta_0,\boldeta_0)}{\sqrt{2(m-1)}} \\
&=&o_p(1).
\ese
Therefore,
 \bse
&&\frac{2n\widehat Q(\widehat\beta,\widehat\boldeta)-(m-1)}{\sqrt{2(m-1)}}\\
&=&\frac{2n\widehat Q(\widehat\beta,\widehat\boldeta)-2n\widehat Q(\beta_0,\boldeta_0)+2n\widehat Q(\beta_0,\boldeta_0)-(m-1)}{\sqrt{2(m-1)}}\\
&=& \frac{\sqrt{2m}}{\sqrt{2(m-1)}}\frac{2n\widehat Q(\beta_0,\boldeta_0)-m}{\sqrt{2m}}+ \frac{1}{\sqrt{2(m-1)}} +o_p(1)
\ese
as $m,n\to\infty$. Consequently, the preceding arguments imply that
\[
P\!\left(2n\hat Q(\hat\beta,\hat\boldeta)\ge \chi^2_{1-\alpha}(m-1)\right)
=
P\!\left(
\frac{2n\hat Q(\hat\beta,\hat\boldeta)-(m-1)}{\sqrt{2(m-1)}}
\ge
\frac{\chi^2_{1-\alpha}(m-1)-(m-1)}{\sqrt{2(m-1)}}
\right)
\to \alpha .
\]
\section{Additional simulation results}\label{secsuppsimu}
In this section, we further examine the performance of the competing methods under a range of simulation settings, including censoring rates of 0.2, 0.4, and 0.6, Cases 1--4, numbers of instrumental variables equal to 10 and 20, and sample sizes of 10,000 and 20,000. The results are broadly consistent with those reported in the main text. Across all scenarios considered, the proposed framework exhibits relatively small bias and achieves coverage probabilities close to the nominal level. In contrast, the standard AFT model shows substantial bias throughout.
\begin{table}[htbp]
\centering
\caption{Performance of the proposed estimators for $n=10{,}000$ across Cases 1-4, varying censoring rates, and instrument dimensions $p=10$ and $p=20$. CR is the censoring rate, BIAS denotes percentage finite-sample bias, SD the empirical standard deviation, SE the average estimated standard error, and CP the empirical coverage probability of the 95\% confidence interval. }
\label{tabsimu1.5}
\footnotesize{\begin{tabular}{ccccccccccc}
\hline
&                      &     & \multicolumn{4}{c}{$p=10$}        & \multicolumn{4}{c}{$p=20$}         \\ \hline
& CR          & Method & Bias      & SD   & SE   & CP    & Bias       & SD   & SE   & CP    \\ \hline
\multirow{12}{*}{Case 1} & \multirow{4}{*}{0.2} & CUE & 0.238\%   & 0.093 & 0.087 & 0.932 & -0.511\%   & 0.066 & 0.071 & 0.968 \\
&                      & EL  & 0.236\%   & 0.091 & 0.085 & 0.926 & -0.291\%   & 0.064 & 0.066 & 0.952 \\
&                      & ET  & 0.263\%   & 0.091 & 0.086 & 0.934 & -0.448\%   & 0.064 & 0.067 & 0.956 \\
&                      & AFT & -24.163\% & 0.462 & 0.008 & 0.000 & -78.848\%  & 0.418 & 0.014 & 0.000 \\ \cline{2-11} 
& \multirow{4}{*}{0.4} & CUE & -0.239\%  & 0.430 & 0.407 & 0.938 & -0.859\%   & 0.311 & 0.335 & 0.960 \\
&                      & EL  & 0.247\%   & 0.419 & 0.396 & 0.932 & 0.152\%    & 0.293 & 0.319 & 0.956 \\
&                      & ET  & 0.111\%   & 0.419 & 0.396 & 0.940 & -0.459\%   & 0.301 & 0.312 & 0.958 \\
&                      & AFT & -23.247\% & 0.455 & 0.008 & 0.000 & -74.455\%  & 0.446 & 0.014 & 0.000 \\ \cline{2-11} 
& \multirow{4}{*}{0.6} & CUE & 3.797\%   & 1.152 & 1.147 & 0.942 & -6.032\%   & 0.890 & 0.972 & 0.966 \\
&                      & EL  & 5.602\%   & 1.104 & 1.113 & 0.942 & -1.370\%   & 0.807 & 0.868 & 0.944 \\
&                      & ET  & 4.595\%   & 1.129 & 1.105 & 0.938 & -4.170\%   & 0.850 & 0.884 & 0.952 \\
&                      & AFT & -16.933\% & 0.410 & 0.007 & 0.000 & -61.855\%  & 0.497 & 0.012 & 0.000 \\ \hline
\multirow{12}{*}{Case 2} & \multirow{4}{*}{0.2} & CUE & 0.618\%   & 0.082 & 0.088 & 0.976 & -0.429\%   & 0.072 & 0.072 & 0.960 \\
&                      & EL  & 0.544\%   & 0.081 & 0.085 & 0.966 & -0.366\%   & 0.068 & 0.066 & 0.936 \\
&                      & ET  & 0.549\%   & 0.082 & 0.086 & 0.970 & -0.328\%   & 0.070 & 0.067 & 0.944 \\
&                      & AFT & -32.038\% & 0.554 & 0.011 & 0.000 & -96.333\%  & 0.201 & 0.017 & 0.000 \\ \cline{2-11} 
& \multirow{4}{*}{0.4} & CUE & 2.718\%   & 0.377 & 0.407 & 0.968 & 0.909\%    & 0.333 & 0.338 & 0.958 \\
&                      & EL  & 3.031\%   & 0.367 & 0.396 & 0.966 & 1.310\%    & 0.318 & 0.309 & 0.940 \\
&                      & ET  & 2.585\%   & 0.372 & 0.396 & 0.962 & 1.040\%    & 0.324 & 0.314 & 0.952 \\
&                      & AFT & -26.041\% & 0.533 & 0.010 & 0.000 & -91.610\%  & 0.295 & 0.016 & 0.000 \\ \cline{2-11} 
& \multirow{4}{*}{0.6} & CUE & -3.046\%  & 1.137 & 1.146 & 0.950 & 4.491\%    & 0.887 & 0.977 & 0.960 \\
&                      & EL  & -2.171\%  & 1.079 & 1.107 & 0.952 & 6.186\%    & 0.802 & 0.872 & 0.946 \\
&                      & ET  & -2.865\%  & 1.109 & 1.104 & 0.948 & 4.963\%    & 0.847 & 0.886 & 0.952 \\
&                      & AFT & -20.131\% & 0.502 & 0.009 & 0.000 & -84.937\%  & 0.378 & 0.016 & 0.000 \\ \hline
\multirow{12}{*}{Case 3} & \multirow{4}{*}{0.2} & CUE & -0.047\%  & 0.086 & 0.088 & 0.962 & -0.292\%   & 0.066 & 0.072 & 0.958 \\
&                      & EL  & -0.040\%  & 0.085 & 0.086 & 0.970 & -0.198\%   & 0.063 & 0.066 & 0.950 \\
&                      & ET  & -0.035\%  & 0.085 & 0.086 & 0.968 & -0.228\%   & 0.064 & 0.067 & 0.956 \\
&                      & AFT & -66.156\% & 0.498 & 0.014 & 0.008 & -99.328\%  & 0.083 & 0.014 & 0.002 \\ \cline{2-11} 
& \multirow{4}{*}{0.4} & CUE & 2.035\%   & 0.415 & 0.406 & 0.940 & -0.491\%   & 0.319 & 0.337 & 0.966 \\
&                      & EL  & 2.433\%   & 0.407 & 0.395 & 0.936 & -0.171\%   & 0.298 & 0.308 & 0.956 \\
&                      & ET  & 2.290\%   & 0.411 & 0.396 & 0.932 & -0.439\%   & 0.310 & 0.313 & 0.958 \\
&                      & AFT & -60.861\% & 0.514 & 0.013 & 0.008 & -98.907\%  & 0.106 & 0.014 & 0.000 \\ \cline{2-11} 
& \multirow{4}{*}{0.6} & CUE & 0.988\%   & 1.097 & 1.143 & 0.960 & -2.274\%   & 0.917 & 0.974 & 0.956 \\
&                      & EL  & 2.152\%   & 1.044 & 1.104 & 0.952 & 0.824\%    & 0.816 & 0.866 & 0.946 \\
&                      & ET  & 1.300\%   & 1.072 & 1.102 & 0.948 & -1.298\%   & 0.875 & 0.884 & 0.952 \\
&                      & AFT & -51.823\% & 0.526 & 0.012 & 0.025 & -97.702\%  & 0.151 & 0.014 & 0.004 \\ \hline
\multirow{12}{*}{Case 4} & \multirow{4}{*}{0.2} & CUE & 0.995\%   & 0.091 & 0.088 & 0.948 & 0.063\%    & 0.070 & 0.072 & 0.960 \\
&                      & EL  & 0.943\%   & 0.089 & 0.086 & 0.940 & 0.065\%    & 0.068 & 0.066 & 0.938 \\
&                      & ET  & 1.006\%   & 0.089 & 0.086 & 0.942 & 0.031\%    & 0.069 & 0.067 & 0.946 \\
&                      & AFT & -80.035\% & 0.444 & 0.017 & 0.000 & -100.005\% & 0.001 & 0.014 & 0.000 \\ \cline{2-11} 
& \multirow{4}{*}{0.4} & CUE & 1.945\%   & 0.411 & 0.408 & 0.952 & -1.067\%   & 0.329 & 0.337 & 0.948 \\
&                      & EL  & 2.159\%   & 0.404 & 0.397 & 0.952 & -0.376\%   & 0.307 & 0.307 & 0.946 \\
&                      & ET  & 2.014\%   & 0.406 & 0.397 & 0.948 & -0.757\%   & 0.318 & 0.313 & 0.948 \\
&                      & AFT & -77.323\% & 0.464 & 0.016 & 0.000 & -100.001\% & 0.000 & 0.014 & 0.000 \\ \cline{2-11} 
& \multirow{4}{*}{0.6} & CUE & -2.733\%  & 1.094 & 1.145 & 0.962 & 0.696\%    & 0.852 & 0.976 & 0.972 \\
&                      & EL  & -1.047\%  & 1.064 & 1.111 & 0.966 & 0.908\%    & 0.775 & 0.864 & 0.940 \\
&                      & ET  & -1.919\%  & 1.078 & 1.105 & 0.960 & 0.902\%    & 0.816 & 0.886 & 0.962 \\
&                      & AFT & -69.221\% & 0.515 & 0.015 & 0.000 & -100.000\% & 0.000 & 0.014 & 0.000 \\ \hline
\end{tabular}}
\end{table}

\begin{table}[htbp]
\centering
\caption{Performance of the proposed estimators for $n=20{,}000$ across Cases 1-4, varying censoring rates, and instrument dimensions $p=10$ and $p=20$. CR is the censoring rate, BIAS denotes percentage finite-sample bias, SD the empirical standard deviation, SE the average estimated standard error, and CP the empirical coverage probability of the 95\% confidence. }
\label{tabsimu2}
\footnotesize{\begin{tabular}{ccccccccccc}
\hline
&                      &     & \multicolumn{4}{c}{$p=10$}        & \multicolumn{4}{c}{$p=20$}         \\ \hline
& CR          & Method & Bias      & SD   & SE   & CP    & Bias       & SD   & SE   & CP    \\ \hline
\multirow{12}{*}{Case 1} & \multirow{4}{*}{0.2} & CUE & 0.576\%   & 0.073 & 0.073 & 0.950 & -0.151\%   & 0.056 & 0.057 & 0.962 \\
&                      & EL  & 0.499\%   & 0.072 & 0.072 & 0.946 & -0.090\%   & 0.055 & 0.055 & 0.964 \\
&                      & ET  & 0.528\%   & 0.072 & 0.072 & 0.954 & -0.140\%   & 0.056 & 0.055 & 0.956 \\
&                      & AFT & -13.476\% & 0.395 & 0.004 & 0.000 & -72.815\%  & 0.458 & 0.010 & 0.000 \\ \cline{2-11} 
& \multirow{4}{*}{0.4} & CUE & 1.581\%   & 0.340 & 0.337 & 0.946 & 0.543\%    & 0.268 & 0.266 & 0.946 \\
&                      & EL  & 1.642\%   & 0.336 & 0.333 & 0.940 & 0.663\%    & 0.258 & 0.254 & 0.942 \\
&                      & ET  & 1.702\%   & 0.337 & 0.333 & 0.938 & 0.521\%    & 0.262 & 0.256 & 0.946 \\
&                      & AFT & -11.454\% & 0.374 & 0.004 & 0.000 & -62.752\%  & 0.499 & 0.009 & 0.000 \\ \cline{2-11} 
& \multirow{4}{*}{0.6} & CUE & 5.802\%   & 0.898 & 0.936 & 0.958 & -7.738\%   & 0.703 & 0.753 & 0.962 \\
&                      & EL  & 6.088\%   & 0.877 & 0.921 & 0.956 & -6.248\%   & 0.668 & 0.727 & 0.962 \\
&                      & ET  & 5.870\%   & 0.889 & 0.920 & 0.952 & -7.171\%   & 0.688 & 0.716 & 0.956 \\
&                      & AFT & -6.363\%  & 0.311 & 0.004 & 0.000 & -50.439\%  & 0.517 & 0.008 & 0.000 \\ \hline
\multirow{12}{*}{Case 2} & \multirow{4}{*}{0.2} & CUE & -0.105\%  & 0.071 & 0.073 & 0.956 & 0.008\%    & 0.056 & 0.057 & 0.950 \\
&                      & EL  & -0.061\%  & 0.070 & 0.072 & 0.962 & 0.004\%    & 0.053 & 0.055 & 0.954 \\
&                      & ET  & -0.084\%  & 0.070 & 0.072 & 0.958 & -0.049\%   & 0.054 & 0.055 & 0.946 \\
&                      & AFT & -21.251\% & 0.534 & 0.007 & 0.000 & -92.551\%  & 0.284 & 0.012 & 0.000 \\ \cline{2-11} 
& \multirow{4}{*}{0.4} & CUE & 2.167\%   & 0.335 & 0.338 & 0.950 & -0.304\%   & 0.261 & 0.266 & 0.952 \\
&                      & EL  & 2.305\%   & 0.331 & 0.333 & 0.954 & 0.223\%    & 0.253 & 0.255 & 0.946 \\
&                      & ET  & 2.272\%   & 0.333 & 0.334 & 0.948 & -0.063\%   & 0.257 & 0.256 & 0.950 \\
&                      & AFT & -16.765\% & 0.510 & 0.007 & 0.000 & -89.184\%  & 0.335 & 0.012 & 0.000 \\ \cline{2-11} 
& \multirow{4}{*}{0.6} & CUE & -2.284\%  & 0.902 & 0.939 & 0.954 & -3.884\%   & 0.739 & 0.750 & 0.954 \\
&                      & EL  & -1.597\%  & 0.886 & 0.923 & 0.962 & -2.501\%   & 0.703 & 0.714 & 0.954 \\
&                      & ET  & -1.777\%  & 0.896 & 0.922 & 0.956 & -3.420\%   & 0.719 & 0.715 & 0.950 \\
&                      & AFT & -8.930\%  & 0.451 & 0.006 & 0.000 & -81.710\%  & 0.415 & 0.011 & 0.000 \\ \hline
\multirow{12}{*}{Case 3} & \multirow{4}{*}{0.2} & CUE & -1.169\%  & 0.069 & 0.073 & 0.954 & -0.293\%   & 0.057 & 0.057 & 0.952 \\
&                      & EL  & -1.176\%  & 0.067 & 0.072 & 0.958 & -0.251\%   & 0.055 & 0.055 & 0.954 \\
&                      & ET  & -1.247\%  & 0.068 & 0.072 & 0.960 & -0.273\%   & 0.056 & 0.056 & 0.950 \\
&                      & AFT & -52.628\% & 0.534 & 0.008 & 0.010 & -98.657\%  & 0.119 & 0.010 & 0.000 \\ \cline{2-11} 
& \multirow{4}{*}{0.4} & CUE & -2.906\%  & 0.338 & 0.337 & 0.952 & -1.338\%   & 0.264 & 0.267 & 0.962 \\
&                      & EL  & -2.999\%  & 0.332 & 0.332 & 0.952 & -1.056\%   & 0.255 & 0.255 & 0.962 \\
&                      & ET  & -2.960\%  & 0.335 & 0.332 & 0.952 & -1.068\%   & 0.261 & 0.257 & 0.958 \\
&                      & AFT & -49.924\% & 0.535 & 0.008 & 0.014 & -98.025\%  & 0.143 & 0.010 & 0.000 \\ \cline{2-11} 
& \multirow{4}{*}{0.6} & CUE & -4.286\%  & 0.945 & 0.936 & 0.946 & 1.154\%    & 0.746 & 0.753 & 0.954 \\
&                      & EL  & -3.770\%  & 0.920 & 0.920 & 0.946 & 2.145\%    & 0.705 & 0.711 & 0.946 \\
&                      & ET  & -4.027\%  & 0.931 & 0.919 & 0.942 & 2.006\%    & 0.727 & 0.717 & 0.952 \\
&                      & AFT & -44.363\% & 0.530 & 0.008 & 0.017 & -98.113\%  & 0.138 & 0.010 & 0.002 \\ \hline
\multirow{12}{*}{Case 4} & \multirow{4}{*}{0.2} & CUE & -0.379\%  & 0.071 & 0.073 & 0.954 & -0.430\%   & 0.055 & 0.058 & 0.972 \\
&                      & EL  & -0.310\%  & 0.071 & 0.072 & 0.958 & -0.440\%   & 0.053 & 0.055 & 0.972 \\
&                      & ET  & -0.390\%  & 0.071 & 0.072 & 0.956 & -0.393\%   & 0.054 & 0.056 & 0.972 \\
&                      & AFT & -74.667\% & 0.493 & 0.011 & 0.000 & -100.000\% & 0.000 & 0.011 & 0.000 \\ \cline{2-11} 
& \multirow{4}{*}{0.4} & CUE & -1.012\%  & 0.339 & 0.337 & 0.964 & -1.149\%   & 0.260 & 0.267 & 0.960 \\
&                      & EL  & -0.798\%  & 0.334 & 0.332 & 0.962 & -0.926\%   & 0.249 & 0.256 & 0.960 \\
&                      & ET  & -0.957\%  & 0.336 & 0.332 & 0.962 & -1.105\%   & 0.255 & 0.257 & 0.956 \\
&                      & AFT & -73.573\% & 0.498 & 0.011 & 0.000 & -100.001\% & 0.000 & 0.010 & 0.000 \\ \cline{2-11} 
& \multirow{4}{*}{0.6} & CUE & -5.448\%  & 0.919 & 0.936 & 0.954 & -4.456\%   & 0.711 & 0.754 & 0.956 \\
&                      & EL  & -5.275\%  & 0.900 & 0.921 & 0.958 & -2.833\%   & 0.669 & 0.713 & 0.954 \\
&                      & ET  & -5.708\%  & 0.910 & 0.919 & 0.954 & -3.730\%   & 0.691 & 0.717 & 0.954 \\
&                      & AFT & -68.377\% & 0.527 & 0.011 & 0.000 & -99.726\%  & 0.057 & 0.010 & 0.000 \\ \hline
\end{tabular}}
\end{table}

Table \ref{tab:overid} reports the empirical type I error rates of the overidentification test at the nominal level of 0.05, based on 500 simulation replications. The results are presented across different censoring rates. Overall, the empirical type I error rates are reasonably close to the nominal level, indicating that the proposed method maintains satisfactory size control.
\begin{table}[]
\center
\caption{Empirical Type I error rates for the overidentification test under different censoring rate with $n=10000$ and $p=10$. CR represents the censoring rate.}
\label{tab:overid}
\begin{tabular}{ccccc}
\hline
&     & CR=0.2   & CR=0.4   & CR=0.6   \\ \hline
\multirow{3}{*}{Case 1} & CUE & 0.080 & 0.052 & 0.074 \\
& EL  & 0.070 & 0.044 & 0.060 \\
& ET  & 0.078 & 0.058 & 0.078 \\ \hline
\multirow{3}{*}{Case 2} & CUE & 0.074 & 0.066 & 0.066 \\
& EL  & 0.060 & 0.062 & 0.046 \\
& ET  & 0.074 & 0.068 & 0.064 \\ \hline
\multirow{3}{*}{Case 3} & CUE & 0.066 & 0.052 & 0.060 \\
& EL  & 0.064 & 0.052 & 0.042 \\
& ET  & 0.074 & 0.054 & 0.054 \\ \hline
\multirow{3}{*}{Case 4} & CUE & 0.062 & 0.062 & 0.074 \\
& EL  & 0.050 & 0.050 & 0.060 \\
& ET  & 0.072 & 0.068 & 0.088 \\ \hline
\end{tabular}
\end{table}
\end{document}

%% file: notation.tex
\usepackage{xcolor}
\usepackage{stmaryrd}
\usepackage{amsmath,amssymb,amsthm}

\def\bse{\begin{eqnarray*}}
\def\ese{\end{eqnarray*}}
\def\be{\begin{eqnarray}}
\def\ee{\end{eqnarray}}
\def\bsq{\begin{equation*}}
\def\esq{\end{equation*}}
\def\bq{\begin{equation}}
\def\eq{\end{equation}}
\def\n{\nonumber}

\renewcommand{\O}{\mathbf{O}}

\newcommand{\Int}{\mathcal{I}}

\newcommand{\Z}{\mathbf{Z}}
\newcommand{\z}{\mathbf{z}}

\newcommand{\bbP}{\mathbb{P}}

\newcommand{\bbE}{\mathbb{E}}

\newcommand{\calV}{\mathcal{V}}

\newcommand{\calB}{\mathcal{B}}

\newcommand{\bzeta}{\boldsymbol{\zeta}}

\newcommand{\boldeta}{\boldsymbol{\eta}}

\newcommand{\blambda}{\boldsymbol{\lambda}}
\newcommand{\bLambda}{\boldsymbol{\Lambda}}

\newcommand{\tp}{^\top}

\newcommand{\var}{\mathrm{Var}}
\newcommand{\tr}{\mathrm{tr}}

\renewcommand{\hat}{\widehat}
\renewcommand{\tilde}{\widetilde}

\renewcommand{\exp}{\mathrm{exp}}

\DeclareMathOperator{\argmax}{argmax}
\DeclareMathOperator{\argmin}{argmin}

\newtheorem{theorem}{Theorem} 
\newtheorem{lemma}{Lemma}
\newtheorem{condition}{Condition}  
\newtheorem{assumption}{Assumption} 
\newtheorem{remark}{Remark}